\documentclass{svproc}
\usepackage{amsmath}
\usepackage{amssymb}
 \usepackage{paralist}
 \usepackage{graphics} 
 \usepackage{epsfig} 
\usepackage{graphicx}  \usepackage{epstopdf}
 \usepackage[colorlinks=true]{hyperref}
\usepackage{graphicx}
\usepackage{epsf}
\usepackage{float}
\usepackage[dvipsnames]{xcolor}
\usepackage{booktabs}
\usepackage{color}
\usepackage{subfigure}
\usepackage{multicol}
\usepackage{enumerate}
\newcommand{\br}{\vspace{10pt} \newline}

\newcommand{\bs}{\backslash}
\newcommand{\bx}{{\boldsymbol{x}}}
\newcommand{\by}{{\boldsymbol{y}}}
\newcommand{\bz}{{\boldsymbol{z}}}
\newcommand{\bu}{{\boldsymbol{u}}}
\newcommand{\bv}{{\boldsymbol{v}}}

\newcommand{\ip}[2]{\langle #1,#2\rangle}

\newcommand{\rE}{\mathrm{E}}
\newcommand{\rL}{\mathrm{L}}

\newcommand{\tr}{\mathrm{Tr}}

\newcommand{\ddt}[1]{\frac{\mathrm{d}#1}{\mathrm{d}t}}

\newcommand{\be}{\begin{equation}}
\newcommand{\ee}{\end{equation}}
\newcommand{\bea}{\begin{eqnarray}}
\newcommand{\eea}{\end{eqnarray}}
\newcommand{\beas}{\begin{eqnarray*}}
\newcommand{\eeas}{\end{eqnarray*}}
\newcommand{\bl}{\begin{law}}
\newcommand{\el}{\end{law}}
\newcommand{\bthm}{\begin{thm}}
\newcommand{\ethm}{\end{thm}}

\newcommand{\ud}{\,\mathrm{d}}

\newcommand{\TT}{\mathrm{T}}

\newcommand{\bd}[1]{\boldsymbol{#1}}

\newcommand{\abs}[1]{\lvert#1\rvert}

\newcommand{\I}{\mathrm{i}}

\newcommand{\bP}{\boldsymbol{P}}
\newcommand{\bQ}{\boldsymbol{Q}}

\newcommand{\bp}{\boldsymbol{p}}
\newcommand{\bq}{\boldsymbol{q}}

\newcommand{\ba}{\mathbf a}

\newcommand{\mC}{\mathbb C}

\newcommand{\bF}{\mathbf F}

\newcommand{\mN}{\mathbb N}

\newcommand{\mR}{\mathbb R}

\newcommand{\cF}{\mathcal{F}}

\newcommand{\cI}{\mathcal{I}}

\newcommand{\cL}{\mathcal{L}}

\newcommand{\cO}{\mathcal{O}}

\newcommand{\cS}{\mathcal{S}}

\newcommand{\cZ}{\mathcal{Z}}

\newcommand{\barint}{\kern4pt \raise3.4pt\hbox{\vrule height.6pt
    width7pt} \kern-11pt \int}

\newcommand{\tp}{\mathrm{p}}
\renewcommand{\ts}{\mathrm{s}}
\newcommand{\tsh}{\mathrm{sh}}
\newcommand{\tsv}{\mathrm{sv}}
\newcommand{\FGA}{\mathrm{F}}

\newcommand{\Id}{\mathrm{Id}}

\DeclareMathOperator{\Curl}{\nabla\times}
\DeclareMathOperator{\Div}{\nabla\cdot}
\newcommand{\ds}{\displaystyle}

\numberwithin{equation}{section}
\def\main{\par\noindent{\bf Main Theorem.\ } \ignorespaces}
\def\endmain{}

\usepackage{url}

\begin{document}
\mainmatter    

\title{Convergence of the Frozen Gaussian Approximation for High Frequency Elastic Waves}

\titlerunning{Convergence of the FGA for High Frequency Elastic Waves}

\author{James C. Hateley \inst{1}\and Xu Yang\inst{2}}

\institute{Vanderbilt University, Nashville TN 37240, USA,\\
\email{james.c.hateley@vanderbilt.edu},
\and
UC Santa Barbara, Santa Barbara CA 93117, USA \\
\email{xuyang@math.ucsb.edu}
}

\maketitle

\begin{abstract}
The frozen Gaussian approximation (FGA) is an effective tool for modeling high frequency wave propagation.  In previous works, the convergence of the FGA has established for strict hyperbolic systems. In this work, we derive the frozen Gaussian approximation for the elastic wave equation, which can be cast as a hyperbolic system with repeated eigenvalues.  In the derivation, the strong form for the evolution equation is introduced.  A diabatic coupling is observed for the amplitude of the evolution equations between the SH, SV waves. Using previous results with new energy estimates we establish the convergence for the first order FGA for the elastic wave equation.
\keywords{Frozen Gaussian approximation; elastic wave equation; high-frequency wavefield; weak asymptotic analysis; wave-packet decomposition}
\end{abstract}
\section{Introduction}
This paper is to analytical show that the first ordered frozen Gaussian approximation for the elastic wave equation derived and reported in~\cite{hate2018fga} is indeed the same order for strict hyperbolic systems, as proven in~\cite{LuYa:MMS} and verified numerical in~\cite{LuYa:11,chai2017frozen,chai2018tomo,hate2018fga}.  The proof will use all the same machinery from~\cite{LuYa:MMS}, with a new derivation and new error estimates for the evolution equation. The theorem that will be proven is: \\
\main {\itshape
Let $\{\bu_0^\epsilon\}$ be a family of asymptotically high frequency initial conditions for displacement, let $\bu:[0,T]\times \mR^3$ satisfy the elastic wave equation~\eqref{cp:eq:1}, with $\|\bu_0^\epsilon\|_{\rL^2}s \in H_0^1(\mR^{3d})$ and uniformly bounded, i.e. $\|\bu_0^\epsilon\|_{H_0^1} < M$ and let $\bu_\FGA$ be the FGA approximation in~\eqref{eq:fga_sol}, with the proper assumptions
\begin{equation*}
\sup_{t\in[0,T]}\|\bu(t,\cdot) - \bu_{\FGA,0}(t,\cdot)\|_{\rE} \lesssim \epsilon M
\end{equation*}
Where the norm $\|\cdot\|_{\rE}$ is a scaled semi-norm defined in~\eqref{eq:enorm}. 
}
\endmain \\

As a corollary to this, we will show asymptotic equivalence to the formulation of the FGA in~\cite{LuYa:MMS}. \\

\noindent{\bf Related works.} The FGA is originally motivated by quantum chemistry and the herman-Kluk propagater~\cite{Ka:06,SwRo:09,Ro:10}. The FGA high frequency waves has been used for forward modeling in seismic tomography for the acoustic (P-waves)~\cite{chai2017frozen,chai2018tomo} and elastic waves (P,S-waves)~\cite{hate2018fga}. The FGA for the elastic wave equation has also been used train neural networks for seismic interface and pocket detection~\cite{hate2018nn}. \\

\noindent{\bf Organization of the paper.} In section~\ref{sec:2}, the necessary notation is introduced with preliminaries for phase plane analysis. In section~\ref{sec:3} the equation and it's eigenvalue decomposition is are introduced. Section~\ref{sec:4} describes the formulation and derivation of the frozen Gaussian approximation for the elastic wave equation. Finally, in section~\ref{sec:5} the error estimates and proofs are given.
\section{Preliminaries}\label{sec:2}

\subsection{Notation.}
We denote $\bx,\by \in \mR^d$ as spatial variables, $(\bq,\bp)\in\mR^{2d}$ for the position and momentum variables, respectively, in the phase space. \\

For $\delta > 0$, define the closed set $K_\delta \subset \mR^{2d}$
\begin{equation}
K_\delta = \Big\{ (\bq,\bp)\in \mR^{2d} : |\bq|\leq 1/\delta,\ \ \delta\leq |\bp| \leq 1/\delta\Big\}
\end{equation}
For all practical purpose, this set $K_\delta$ is bounding the position and the magnitude for the direction of propagation of the wave packets.  The upper bound conditions on $|\bq|$ and $|\bp|$ are quite reasonable as any computational domain will be a finite domain.  Also, $\bp$ bounded away from zero is reasonable as if $\bp = 0$ the wave packet does not propagate and the Hamiltonian system is degenerate i.e., $H = 0$.  \\

We use the notation $\cO(\epsilon^\infty)$: $A^\epsilon = \cO(\epsilon^\infty)$ meaning for any $k\in \mN$
\begin{equation}
\lim_{\epsilon\to 0}\epsilon^{-k}|A^{\epsilon}| = 0.
\end{equation}
Notation $C$ will be used as a general constant, that can vary from line to line.  The value will not be important, but rather that it is finite.  We will use subscripts to denote constant dependence, e.g. $C_T$, is a constant that depends on the parameter $T$. We will use the usual notation of $\cS$, $C^\infty$ and $C^\infty_c$ for the Schwartz class, smooth and compacted supported smooth functions respectively. For generality, we will often use $\mR^d$ for a $d$-dimensional Euclidean space; however, for the actually equation and computations we set $d = 3$ as we deal the usual differential operators on $\mR^3$.
\subsection{Wave packet decomposition.} For $(\bq,\bp) \in \mR^{2d}$, define $\phi_{\bq,\bp}^\epsilon$ as
\begin{equation}
\phi_{\bq,\bp}^\epsilon(\bx) = (-2\pi\epsilon)^{d/2}\exp\Big(\I\bp\cdot(\bx - \bq)/\epsilon - |\bx - \bq|^2/(2\epsilon)\Big)
\end{equation}
The Fourier–Bros–Iagolnitzer (FBI) transform on $\cS(\mR^d)$ is defined as;
\begin{align}
(\cF f)(\bq,\bp) &= \pi\epsilon^{-d/4}\ip{\psi^\epsilon_{\bq,\bp}}{f} \\
&= 2^{-d/2}(\pi\epsilon)^{-3d/4}\int_{\mR^d}\exp\Big(\I\bp\cdot(\bx - \bq)/\epsilon - |\bx - \bq|^2/(2\epsilon)\Big) f(\bx)\,\ud\bx
\end{align}
With inverse  transform $\big(\cF^\epsilon\big)^*$ defined on $\cS(\mR^{2d})$ given by
\begin{equation}
\Big(\big(\cF^\epsilon\big)^*f\big)(\bx) = 2^{-d/2}(\pi\epsilon)^{-3d/4}\int_{\mR^{2d}}\exp\Big(\I\bp\cdot(\bx - \bq)/\epsilon - |\bx - \bq|^2/(2\epsilon)\Big) g(\bq,\bp)\,\ud\bq\ud\bp.
\end{equation}
The following is a standard result from microlocal analysis, e.g. see ~\cite{bach2011introduction}.
\begin{proposition} For the Schwartz class, the FBI transform is an isometry on $\mR^d$, i.e., for any $f\in S(\mR^d)$
\begin{equation}
\|\cF^\epsilon f\|_{\rL^{2d}} = \|f\|_{\rL^{2d}}
\end{equation}
Furthermore; $\big(\cF^\epsilon\big)^*\cF^\epsilon = \Id_{\rL^2(\mR^d)}$. Meaning the domain of $\cF^\epsilon$ and $\big(\cF^\epsilon\big)^*$ can be extended to ${\rL^2(\mR^d)}$ and ${\rL^2(\mR^{2d})}$ respectively.
\end{proposition}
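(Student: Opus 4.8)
The plan is to prove the Plancherel identity $\|\cF^\epsilon f\|_{\rL^2(\mR^{2d})}=\|f\|_{\rL^2(\mR^d)}$ by reducing the $\bp$-integration to the ordinary Fourier--Plancherel theorem and then evaluating a residual Gaussian integral in $\bq$. Throughout I would work with the explicit integral kernel, abbreviating $C_\epsilon = 2^{-d/2}(\pi\epsilon)^{-3d/4}$, so that $(\cF^\epsilon f)(\bq,\bp)=C_\epsilon\int_{\mR^d}\exp(\I\bp\cdot(\bx-\bq)/\epsilon-|\bx-\bq|^2/(2\epsilon))f(\bx)\,\ud\bx$. The central observation is that, for each fixed $\bq$, this equals (up to the unimodular phase $e^{-\I\bp\cdot\bq/\epsilon}$) the Fourier transform, evaluated at frequency $-\bp/\epsilon$, of the Gaussian-windowed function $g_\bq(\bx):=e^{-|\bx-\bq|^2/(2\epsilon)}f(\bx)$. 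Since $f\in\cS(\mR^d)$ each $g_\bq$ is Schwartz and all the integrals below are absolutely convergent, so Fubini applies at every step.

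First I would carry out the $\bp$-integration of $\|\cF^\epsilon f\|_{\rL^2(\mR^{2d})}^2=\int_{\mR^{2d}}|(\cF^\epsilon f)(\bq,\bp)|^2\,\ud\bq\,\ud\bp$ for fixed $\bq$. Because the phase has modulus one, $|(\cF^\epsilon f)(\bq,\bp)|^2=C_\epsilon^2|\wh{g_\bq}(-\bp/\epsilon)|^2$, where $\wh{h}(\xi)=\int e^{-\I\xi\cdot\bx}h(\bx)\,\ud\bx$. Substituting $\xi=-\bp/\epsilon$ (so $\ud\bp=\epsilon^d\,\ud\xi$) and invoking Plancherel, $\int_{\mR^d}|\wh{g_\bq}(\xi)|^2\,\ud\xi=(2\pi)^d\int_{\mR^d}|g_\bq(\bx)|^2\,\ud\bx$, gives $\int_{\mR^d}|(\cF^\epsilon f)(\bq,\bp)|^2\,\ud\bp=C_\epsilon^2(2\pi\epsilon)^d\int_{\mR^d}e^{-|\bx-\bq|^2/\epsilon}|f(\bx)|^2\,\ud\bx$.

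Next I would integrate in $\bq$ and swap the order of integration. The only $\bq$-dependence is the Gaussian $e^{-|\bx-\bq|^2/\epsilon}$, whose integral over $\mR^d$ is $(\pi\epsilon)^{d/2}$, leaving $\|\cF^\epsilon f\|_{\rL^2(\mR^{2d})}^2=C_\epsilon^2(2\pi\epsilon)^d(\pi\epsilon)^{d/2}\|f\|_{\rL^2(\mR^d)}^2$. Substituting $C_\epsilon^2=2^{-d}(\pi\epsilon)^{-3d/2}$, the powers of $2$, $\pi$ and $\epsilon$ cancel exactly, the prefactor collapses to $1$, and the isometry follows. The one point demanding genuine care is this normalization bookkeeping: the statement is an isometry only for the precise constant $C_\epsilon$, so a prefactor different from $1$ would signal a mismatch between the kernel normalization and the chosen Fourier--Plancherel convention.

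Finally, the identity $(\cF^\epsilon)^*\cF^\epsilon=\Id_{\rL^2(\mR^d)}$ would follow from the isometry by polarization: polarizing $\|\cF^\epsilon f\|^2=\|f\|^2$ yields $\ip{\cF^\epsilon f}{\cF^\epsilon g}=\ip{f}{g}$, hence $\ip{(\cF^\epsilon)^*\cF^\epsilon f}{g}=\ip{f}{g}$ for every $g\in\cS(\mR^d)$, so $(\cF^\epsilon)^*\cF^\epsilon f=f$ on the Schwartz class; here one checks directly from the integral formula that the stated transform is the honest $\rL^2$-adjoint, its kernel being the complex conjugate of that of $\cF^\epsilon$. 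Since $\cS(\mR^d)$ is dense in $\rL^2(\mR^d)$ and $\cF^\epsilon$ is norm-preserving on this dense subspace, it extends uniquely to an isometry $\rL^2(\mR^d)\to\rL^2(\mR^{2d})$ and $(\cF^\epsilon)^*$ to a bounded operator $\rL^2(\mR^{2d})\to\rL^2(\mR^d)$, with the composition identity persisting by continuity. I expect no genuine obstacle here, as the whole argument is a standard Gaussian/Plancherel computation; the only real subtlety is the constant-tracking in the middle step.
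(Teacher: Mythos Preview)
Your argument is correct: the reduction to Plancherel in $\bp$ followed by the Gaussian integral in $\bq$ is the standard way to verify the isometry, and your constant bookkeeping checks out with $C_\epsilon^2(2\pi\epsilon)^d(\pi\epsilon)^{d/2}=1$. The polarization and density argument for $(\cF^\epsilon)^*\cF^\epsilon=\Id$ and the extension to $\rL^2$ are likewise standard and fine.

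There is nothing to compare against, however: the paper does not supply a proof of this proposition at all. It simply records the statement as a known fact from microlocal analysis and points to an external reference. So your write-up is not an alternative to the paper's argument but rather a self-contained proof of a result the authors chose to quote.
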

\begin{definition}
Let $\{u^\epsilon\} \subset\rL^2(\mR^d)$ be a family of functions that is uniformly bounded. Given $\delta > 0$, $\{\bu^\epsilon\}$ is \itshape{asymptotically high frequency with cut off $\delta$}\normalfont, if
\begin{equation}
\int_{\mR^{2d}\bs K_\delta} |(\cF^\epsilon u^{\epsilon})(\bq,\bp) |^2\ud\bq\ud\bp = \cO(\epsilon^\infty)
\end{equation}
as $\epsilon\to 0$.
\end{definition}
\begin{definition}
For $M_n \in \rL^\infty(\mR^{2d}; \mC^{N\times N})$ and a Schwartz function $\bu\in \cS(\mR^; \mC^N)$ for each $n = 1, \ldots, N$ define the Fourier integral operator $(\cI^\epsilon_n(t,M) \bu)(\bx)$ as
\begin{equation}
(\cI^\epsilon_n(t,M) \bu)(\bx) = (2\pi\epsilon)^{-3d/2}\int_{\mR^{3d}} G_n^\epsilon(t,\bx,\by,\bp,\bq)M(\bq,\bp)\bu(\by)\ud\bq\ud\bp\ud\by ,
\end{equation}
where
\begin{equation}
G_n^\epsilon(t,\bx,\by,\bp,\bq) = e^{\I \phi_n(t,\bx,\by,\bp,\bq)/\epsilon} 
\end{equation}
with phase function 
\begin{multline}\label{eq:phasefunc}
\phi_n(t,\bx,\by,\bp,\bq)  = \ds\frac{\I}{2}|\by-\bq|^2 - \bp\cdot(\by-\bq)\\
 + \ds\frac{\I}{2}|\bx - \bQ_n(t,\bq,\bp)|^2 + \bP_n(t,\bq,\bp)\cdot(\bx - \bQ_n(t,\bq,\bp)).
\end{multline}
\end{definition}

\begin{proposition}\label{eq:fio_bd}
If $M \in \rL^\infty(\mR^{2d}; \mC^{N\times N})$, for any $t$ and each $n = 1,\ldots, N$, $\cI^\epsilon_n(t,M)$  can be extended to a bounded linear operator on $\rL^2(\mR^d;\mC^N)$ with bound
\begin{equation}
\|\cI^\epsilon_n(t,M)\|_{\cL(\rL^2(\mR^d;\mC^N))} \leq 2^{-d/2} \|M_n\|_{\rL^\infty(\mR^d;\mC^N))}
\end{equation}
This is Proposition 3.7 in~\cite{LuYa:MMS}, a more general version is also proved~\cite{SwRo:09}, see Theorem 2.
\end{proposition}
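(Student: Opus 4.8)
The statement to prove is the boundedness of the Fourier integral operator $\cI^\epsilon_n(t,M)$ on $\rL^2(\mR^d;\mC^N)$ with the explicit bound $2^{-d/2}\|M_n\|_{\rL^\infty}$. Since this is identified as Proposition 3.7 of~\cite{LuYa:MMS}, the proof is by a direct kernel estimate combined with a Schur-type (or Cotlar--Stein) argument; I will reproduce it here in the self-contained form appropriate to this phase-function. The plan is to write $\cI^\epsilon_n(t,M)$ as an integral operator with kernel
\begin{equation*}
K^\epsilon(\bx,\by) = (2\pi\epsilon)^{-3d/2}\int_{\mR^{2d}} e^{\I\phi_n(t,\bx,\by,\bp,\bq)/\epsilon}\,M(\bq,\bp)\,\ud\bq\,\ud\bp,
\end{equation*}
and to estimate $\sup_\bx\int|K^\epsilon(\bx,\by)|\,\ud\by$ and $\sup_\by\int|K^\epsilon(\bx,\by)|\,\ud\bx$ after first integrating out $\bp$ explicitly, using the Gaussian structure of the phase.

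\textbf{Key steps.}
First I would separate the phase~\eqref{eq:phasefunc} into its $\by$-dependent and $\bx$-dependent Gaussian blocks. The real part of $\I\phi_n/\epsilon$ contributes the decaying factors $\exp(-|\by-\bq|^2/(2\epsilon))$ and $\exp(-|\bx-\bQ_n|^2/(2\epsilon))$, while the terms linear in $\bp$, namely $-\bp\cdot(\by-\bq)/\epsilon$ and $\bP_n\cdot(\bx-\bQ_n)/\epsilon$, are the oscillatory part. Second, I would perform the $\bp$-integration: since $\bp$ enters linearly in the exponent and only $M$ depends on it in an $\rL^\infty$ way, the cleanest route is to bound $|M(\bq,\bp)|\le\|M_n\|_{\rL^\infty}$ pointwise, pull this constant out, and then handle the remaining $\bp$-integral. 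The Gaussian factor $\exp(-|\by-\bq|^2/(2\epsilon))$ is $\bp$-independent, so one is left to control an oscillatory $\bp$-integral; the factor $(2\pi\epsilon)^{-3d/2}$ is chosen precisely so that after the $\bp$-integration (which produces a factor of order $\epsilon^{d/2}$ times a delta-like concentration) the normalization is correct. Third, with the kernel reduced to a pure Gaussian in $(\bx,\by,\bq)$, I would bound the two Schur integrals. Computing $\int_{\mR^d}\exp(-|\by-\bq|^2/(2\epsilon))\,\ud\by = (2\pi\epsilon)^{d/2}$ and the analogous $\bx$-integral, and collecting the powers of $\epsilon$ and $2\pi$, yields both Schur bounds equal to $2^{-d/2}\|M_n\|_{\rL^\infty}$. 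The Schur test then gives $\|\cI^\epsilon_n(t,M)\|_{\cL(\rL^2)}\le\big(\sup_\bx\int|K|\,\ud\by\big)^{1/2}\big(\sup_\by\int|K|\,\ud\bx\big)^{1/2}=2^{-d/2}\|M_n\|_{\rL^\infty}$, and the extension from $\cS$ to all of $\rL^2(\mR^d;\mC^N)$ follows by density since the bound is independent of the Schwartz approximants.

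\textbf{Main obstacle.}
The delicate point is the $\bp$-integration: because the phase depends on $\bp$ only through the linear terms, naively bounding $|M|$ by $\|M_n\|_{\rL^\infty}$ before integrating leaves a non-absolutely-convergent oscillatory integral in $\bp$ over all of $\mR^d$, so one cannot simply take absolute values everywhere and still recover the sharp constant $2^{-d/2}$. The correct treatment is to recognize that the $\bp$-integral of $e^{-\I\bp\cdot(\by-\bq)/\epsilon}$ together with the half of the Gaussian weight reconstructs a concentration in $(\by-\bq)$ (a mollified delta at scale $\epsilon$), so that the $\by$-localization is genuinely driven by the Gaussian $\exp(-|\by-\bq|^2/(2\epsilon))$ rather than by pointwise domination of the oscillation. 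Tracking this interplay — and in particular keeping the constant sharp rather than picking up extra dimensional factors from crude bounds — is where care is needed; this is exactly why the result is cited to~\cite{LuYa:MMS} and to Theorem 2 of~\cite{SwRo:09}, where the Gaussian wave-packet normalization is handled precisely. I would therefore follow their computation, organizing the phase so that the oscillatory $\bp$-integral is evaluated exactly as a Gaussian rather than estimated, and only then applying $\|M_n\|_{\rL^\infty}$ to the amplitude.
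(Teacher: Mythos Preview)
The paper does not actually give a proof of this proposition; it simply cites Proposition~3.7 of~\cite{LuYa:MMS} and Theorem~2 of~\cite{SwRo:09}. So the comparison is against the argument in those references, and there your approach diverges from the standard one in a way that leaves a genuine gap.

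Your Schur-test strategy requires pointwise control of the kernel
\[
K^\epsilon(\bx,\by)=(2\pi\epsilon)^{-3d/2}\int_{\mR^{2d}} e^{\I\phi_n/\epsilon}\,M(\bq,\bp)\,\ud\bq\,\ud\bp,
\]
and as you correctly observe, bounding $|M|\le\|M_n\|_{\rL^\infty}$ first leaves the $\bp$-integral $\int_{\mR^d}e^{-\I\bp\cdot(\by-\bq)/\epsilon}\,\ud\bp$, which is a tempered distribution, not an absolutely convergent integral. Your proposed fix---``the $\bp$-integral together with half of the Gaussian weight reconstructs a mollified delta''---does not work: the Gaussian factor $e^{-|\by-\bq|^2/(2\epsilon)}$ is independent of $\bp$, so no splitting of it can regularize the $\bp$-integration. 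For $M$ merely in $\rL^\infty$ (no decay or smoothness in $\bp$), the kernel $K^\epsilon(\bx,\by)$ need not even be a function, and the Schur test cannot be applied.

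The argument in~\cite{LuYa:MMS,SwRo:09} avoids the kernel entirely by \emph{factoring the operator through the FBI transform}. One writes
\[
(\cI^\epsilon_n(t,M)\bu)(\bx)=c_\epsilon\int_{\mR^{2d}}\psi^{\text{out}}_{\bq,\bp}(\bx)\,M(\bq,\bp)\,(\cF^\epsilon\bu)(\bq,\bp)\,\ud\bq\,\ud\bp,
\]
where the inner $\by$-integral is precisely (a constant times) the FBI transform $\cF^\epsilon\bu$, which is an isometry $\rL^2(\mR^d)\to\rL^2(\mR^{2d})$ by Proposition~1. Multiplication by $M$ has operator norm $\|M_n\|_{\rL^\infty}$ on $\rL^2(\mR^{2d})$. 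The outer map is, after the symplectic change of variables $(\bq,\bp)\mapsto(\bQ_n,\bP_n)$ (measure-preserving by the canonical-transformation property), the adjoint of another FBI transform and hence also has unit norm. Tracking the normalization constants $(2\pi\epsilon)^{-3d/2}$ versus $[2^{-d/2}(\pi\epsilon)^{-3d/4}]^2$ produces exactly the factor $2^{-d/2}$. This factorization is the essential idea you are missing; no kernel estimate is needed, and the $\rL^\infty$ hypothesis on $M$ is used only as a multiplier bound on phase space.
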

\section{Elastic Wave equation}\label{sec:3}
Let $\bu:\mR^+\times \mR^3 \to \mR^3$ and $\rho, \lambda, \mu:\mR^3 \to \mR$, denote the operator $\cL\bu = (\lambda(\bx) + 2\mu(\bx))\nabla\big(\nabla \cdot \bu\big)  - \mu(\bx)\Curl\big(\Curl\bu\big)$, with the differential operators taken in the spacial variables. We are interested in the Cauchy problem with high frequency initial datum;
\begin{equation}\label{cp:eq:1}
\begin{cases}
(\rho(\bx)\partial_{t}^2 - \cL)\bu(t,\bx) = 0& \\
u(0,\bx) = \bu_0^\epsilon(\bx), &\bx \in \mR^d \\
\partial_tu(0,\bx) = \bu_1^\epsilon(\bx), &\bx \in \mR^d
\end{cases}
\end{equation}
We remark that the P-, S- wave speeds are given by;
\begin{equation}\label{eq:ps_speed}
c^2_{\tp}(\bx) = \frac{\lambda(\bx) + 2\mu(\bx)}{\rho(\bx)}\indent \text{ and }\indent c^2_{\ts}(\bx) = \frac{\mu(\bx)}{\rho(\bx)},
\end{equation}
respectively. With this formulation we must also have $\cO(\epsilon\bu_1^\epsilon) = \cO(\bu_0^\epsilon)$.  Define the following quantities
\begin{equation}\label{var:1}
\Theta(t,\bx) = \Div\bu(t,\bx),\indent \Psi(t,\bx) = \Curl\bu(t,\bx),\indent \bv(t,\bx) = \partial_t\bu(t,\bx).
\end{equation}
With $\bv = (v_1,v_2,v_3)^T$ and $\Psi = (\Psi_1,\Psi_2,\Psi_3)^T$, let $X = (v_1,v_2,v_3,\Theta,\Psi_1,\Psi_2,\Psi_3)^T$. Eq.~\eqref{cp:eq:1} can be written as a matrix system; in terms of the axillary variables~\eqref{var:1},
\begin{equation}\label{hyper_decomp}
\partial_t X = M_x\partial_x X + M_y\partial_y X + M_z\partial_z X.
\end{equation}
Using sparse notation; e.g. $M_{ij} = v$ is denoted $(i,j,v)$, the $M_x, M_y, M_z$ are as follows:
\begin{align}
M_x: (1,4,c_\tp^2), (2,7,c_\ts^2), (3,6,-c_\ts^2), (4,1,1), (7,2,1), (6,3,-1), \nonumber\\
M_y: (1,7,-c_\ts^2), (2,4,c_\tp^2), (3,5,c_\ts^2), (4,2,1), (5,3,1), (7,1,-1), \\
M_z: (1,6,c_\ts^2), (2,5,-c_\ts^2), (3,4,c_\tp^2), (4,3,1), (6,1,1), (5,2,-1). \nonumber
\end{align}
The eigenvalues of $M_x + M_y+ M_z$ are, $\pm c_\tp, 0 ,\pm c_\ts$, with $\pm c_\ts$ each having a multiplicity of 2. Once $\bv$ is known, solving for $\bu$ is an easy enough  task so it's not included in the matrix system.  We note here that, $\Theta$ represents the potential for the P-waves and $\Psi$ the potential of the S-waves. \br
Let $M$ denote the operator $M = M_x\partial_x + M_y\partial_y + M_z\partial_z$
and $X_0 = X(0)$ be the initial condition defined appropriately from the Cauchy problem in Eq.~\eqref{cp:eq:1}.
Denoting $\bp = (p_1, p_2, p_3)$, the eigenfunctions and left/right eigenvectors of the matrix system are as follows:
\begin{align}
H_0(\bq,\bp) &= 0 \label{eq:eig0}\\
R_0 &= (0,0,0,0, p_1, p_2,p_3)^\TT \nonumber  \\
L_0 &= \frac{1}{|p|^2}(0,0,0,0, p_1, p_2,p_3)^\TT \nonumber \\
H_{\tp\pm}(\bq,\bp) &= \pm c_\tp(\bq)|\bp|\\
R_{\tp\pm} &= (p_1,p_2,p_3,H_{\tp\pm}/c_\tp^2, 0, 0,0)^\TT, \nonumber \\
L_{\tp\pm} &= \frac{c_{\tp}^2}{2H_{\tp\pm}}(p_1,p_2,p_3,H_{\tp\pm}, 0,0,0)^\TT \nonumber \\
H_{\ts\pm}(\bq,\bp) &= \pm c_\ts(\bq)|\bp| \\
R_{\tp\pm,1} &= (-c_\ts(\bq)p_1p_2,c_\ts(\bq)(p_1^2 + p_3^2),c_\ts(\bq)p_2p_3,0, -p_3H_{\ts\pm}, 0,p_1H_{\ts\pm})^\TT,  \nonumber \\
L_{\ts\pm,1} &= \ds\frac{-1}{H_{\ts\pm}p_1p_2}(p_2^2+p_3^2,-p_1p_2,-p_1p_3,0,0,p_3H_{\ts\pm}, -p_2H_{\ts\pm})^\TT \nonumber\\
H_{\ts\pm}(\bq,\bp) &= \pm c_\ts(\bq)|\bp| \label{eq:eig1}\\
R_{\tp\pm,2} &= (c_\ts(\bq)p_1p_3,c_\ts(\bq)(p_2p_3),-c_\ts(\bq)(p_1^2 + p_2^2),0, -p_2H_{\ts\pm}, p_1H_{\ts\pm},0)^\TT,\nonumber \\
L_{\ts\pm,2} &= \ds\frac{-1}{H_{\ts\pm}p_2p_3}(p_2p_1,-p_1^2-p_3^2,p_3p_2,0,p_3H_{\ts\pm}, -p_1H_{\ts\pm})^\TT \nonumber
\end{align}
Assuming $p_1,p_2,p_3 \neq 0$, we normalize so that $L_m\cdot R_n = \delta_{n,m}$; otherwise, adjust the eigenvectors accordingly. Equations~\eqref{eq:eig0}-\eqref{eq:eig1} are used for the eigenvalue decomposition form of the FGA; they will not be used in the strong form for the evolution equation, we write them here for completeness. \\

The Hamiltonian associated with $\Theta$, $\Psi$
\begin{equation}
H_{\tp\pm,\ts\pm}(t,\bQ,\bP) = \pm c_{\tp,\ts}(\bQ_{\tp\pm,\ts\pm}(t,\bq,\bp)) \bP_{\tp\pm,\ts\pm}(t,\bq,\bp)
\end{equation}
The corresponding flows are given by
\begin{equation}\label{ew:flow}
\begin{cases}
\ds\ddt{\bQ_{\tp\pm,\ts\pm}}(0,\bq,\bp) &= \pm c_{\tp,\ts}(\bQ_{\tp\pm,\ts\pm}(t,\bq,\bp)) \ds\frac{\bP_{\tp\pm,\ts\pm}(t,\bq,\bp)}{|\bP_{\tp\pm,\ts\pm}(t,\bq,\bp)|} \\
\ds\ddt{\bP_{\tp\pm,\ts\pm}}(0,\bq,\bp) &= \ds\mp\partial_{\bQ}c_{\tp,\ts}(\bQ_{\tp\pm,\ts\pm}(t,\bq,\bp))|\bP_{\tp\pm,\ts\pm}(t,\bq,\bp)|
\end{cases}
\end{equation}
with initial conditions
\begin{equation}
\bQ_{\tp\pm,\ts\pm}(0,\bq,\bp) = \bq \text{ and } \bP_{\tp\pm,\ts\pm}(0,\bq,\bp) = \bp
\end{equation}
We remark that,
\begin{equation}\label{eq:hambound}
|\bp\cdot \partial_{\bq} H(t,\bq,\bp)| \lesssim |\bp|^2 \indent\text{ and } \indent|\bq\cdot \partial_{\bp} H(t,\bq,\bp)| \lesssim |\bq|^2
\end{equation}
so assumption A in~\cite{LuYa:MMS} is satisfied.
\begin{proposition} 
For $T > 0$ and $\delta > 0$, there is a constant $\delta_T$ such that
\begin{equation}
(\bQ_{\tp\pm,\ts\pm}(t,\bq,\bp), \bP_{\tp\pm,\ts\pm}(t,\bq,\bp)) \in K_T
\end{equation}
\end{proposition}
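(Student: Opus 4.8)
The plan is to treat this as a routine a priori estimate for the bicharacteristic flow~\eqref{ew:flow}, built on conservation of the Hamiltonian. Throughout I would use the standing regularity and ellipticity assumptions on the medium (part of the ``proper assumptions'' of the Main Theorem): $c_\tp,c_\ts$ are smooth with $0<c_{\min}\le c_{\tp,\ts}(\bx)\le c_{\max}<\infty$ and bounded first derivatives, uniformly in $\bx$. Fix $(\bq,\bp)\in K_\delta$, write $c$ for either $c_\tp$ or $c_\ts$, and drop the labels $\tp\pm,\ts\pm$ on $(\bQ,\bP)$.

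First I would note that the underlying Hamiltonian $h(\bq,\bp)=\pm c(\bq)|\bp|$ carries no explicit $t$-dependence, so along any solution of~\eqref{ew:flow},
\begin{equation*}
\ddt{}\,h(\bQ(t),\bP(t)) \;=\; \partial_{\bQ}h\cdot\dot{\bQ} \;+\; \partial_{\bP}h\cdot\dot{\bP} \;=\; 0,
\end{equation*}
hence $c(\bQ(t))|\bP(t)|=c(\bq)|\bp|$ on the interval of existence. Combining this with $c_{\min}\le c\le c_{\max}$ and $\delta\le|\bp|\le1/\delta$ yields the two-sided bound $\tfrac{c_{\min}}{c_{\max}}\delta\le|\bP(t)|\le\tfrac{c_{\max}}{c_{\min}}\tfrac1\delta$, so $\bP$ stays away from the origin and from infinity. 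From the first line of~\eqref{ew:flow}, $|\dot{\bQ}(t)|=c(\bQ(t))\le c_{\max}$, whence $|\bQ(t)|\le|\bq|+c_{\max}t\le\tfrac1\delta+c_{\max}T$ for $t\in[0,T]$. Since the right-hand side of~\eqref{ew:flow} is smooth on $\{\bP\ne0\}$ and the trajectory remains in the compact subset cut out by these bounds, the local solution extends to all of $[0,T]$, and every bound above is uniform in $(\bq,\bp)\in K_\delta$.

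It remains to repackage the three inequalities as membership in a set $K_{\delta_T}$. Taking
\begin{equation*}
\delta_T \;=\; \min\!\left\{\,\frac{\delta}{1+c_{\max}T\delta},\ \frac{c_{\min}}{c_{\max}}\,\delta\,\right\}
\end{equation*}
forces $|\bQ(t)|\le1/\delta_T$ and $\delta_T\le|\bP(t)|\le1/\delta_T$, i.e. $(\bQ(t,\bq,\bp),\bP(t,\bq,\bp))\in K_{\delta_T}$ for all $t\in[0,T]$ and all $(\bq,\bp)\in K_\delta$; this is the set denoted $K_T$ in the statement, so the proposition follows.

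The argument is essentially elementary; the only delicate point is that the flow must not reach the degeneracy locus $\{\bP=0\}$, where~\eqref{ew:flow} loses smoothness and $H=0$. This is precisely where the condition $\delta\le|\bp|$ built into $K_\delta$ and the uniform positivity $c\ge c_{\min}>0$ are used, through the conservation identity; dropping either allows $\bP$ to degenerate and the statement fails. If one wishes to avoid invoking conservation of $h$, the same bounds on $|\bP|$ follow from Grönwall's inequality applied to $\ddt{}|\bP|^2=\mp2|\bP|\,\bP\cdot\partial_{\bQ}c(\bQ)$ using the bound on $\partial_{\bQ}c$, at the price of exponential-in-$T$ constants.
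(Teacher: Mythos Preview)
Your proof is correct and in fact sharper than the argument the paper points to. The paper defers to Proposition~3.1 in~\cite{LuYa:MMS}, whose proof uses the subquadratic bounds $|\bp\cdot\partial_{\bq}H|\lesssim|\bp|^2$ and $|\bq\cdot\partial_{\bp}H|\lesssim|\bq|^2$ recorded in~\eqref{eq:hambound} together with Gronwall's inequality applied to $\tfrac{\ud}{\ud t}|\bP|^2$ and $\tfrac{\ud}{\ud t}|\bQ|^2$; this works for any Hamiltonian satisfying Assumption~A of~\cite{LuYa:MMS} but produces constants that blow up exponentially in $T$. You instead exploit the specific structure $H=\pm c(\bQ)|\bP|$ and its autonomy to obtain exact conservation of $c(\bQ)|\bP|$, and then convert this, via the uniform bounds $0<c_{\min}\le c\le c_{\max}$, into two-sided control of $|\bP|$ with constants independent of $T$; the bound on $|\bQ|$ grows only linearly in $T$. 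So your route is more elementary and gives a better $\delta_T$ for this particular Hamiltonian, while the paper's route is the one that generalizes to the abstract hyperbolic setting of~\cite{LuYa:MMS}. The Gronwall alternative you sketch in your final paragraph is essentially the paper's argument.
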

\begin{proof}
This is proposition 3.1 in~\cite{LuYa:MMS}, the proof requires the bound in eq.~\eqref{eq:hambound} and Gronwall's inequailty. 
\end{proof}
For the following definition we omit the branch subscript.
\begin{definition}
A map $\kappa_{\tp,\ts}:(\bd{q},\bd{p}) \rightarrow \bigl(\bd{Q}_{\tp,\ts}(\bd{q},\bd{p}),\bd{P}_{\tp,\ts}(\bd{q},\bd{p})\bigr)$ is called canonical transformation if the associated Jacobian matrix is symplectic, i.e., for any $(\bq,\bp)$
 \begin{equation*}
    J_{\tp,\ts}(\bd{q},\bd{p}) =
    \begin{pmatrix}
      (\partial_q \bd{Q}_{\tp,\ts})^{T}(\bd{q},\bd{p}) & (\partial_p
      \bd{Q}_{\tp,\ts})^{T}(\bd{q},\bd{p})
      \\
      (\partial_q \bd{P}_{\tp,\ts})^{T}(\bd{q},\bd{p}) & (\partial_p
      \bd{P}_{\tp,\ts})^{T}(\bd{q},\bd{p})
    \end{pmatrix},
  \end{equation*}
is symplectic, i.e., for any $(\bd{q},\bd{p})$,
   \begin{equation}\label{eq:symplectic}
    J_{\tp,\ts}^{T}
    \begin{pmatrix}
      0 & \Id_3 \\
      -\Id_3 & 0
    \end{pmatrix}
    J_{\tp,\ts} =
    \begin{pmatrix}
      0 & \Id_3 \\
      -\Id_3 & 0
    \end{pmatrix},
  \end{equation}
where $\Id_3$ is a $3\times 3$ identity matrix.
\end{definition}
\begin{proposition}
The map $\kappa_{\tp,\ts}$ is a canonical transform for any $T, \delta > 0$; furthermore it is bounded under sup norm.
\end{proposition}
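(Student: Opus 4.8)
The plan is to regard $\kappa_{\tp,\ts}$ as the time-$t$ flow map of the Hamiltonian system~\eqref{ew:flow}, so that $J_{\tp,\ts}(t,\bq,\bp)$ coincides, up to the transpose conventions in its definition, with the Jacobian of $\kappa_{\tp,\ts}:(\bq,\bp)\mapsto(\bQ_{\tp\pm,\ts\pm},\bP_{\tp\pm,\ts\pm})$, and to verify~\eqref{eq:symplectic} by the classical variational computation. Fix arbitrary $T,\delta>0$, fix a branch, and write $(\bQ,\bP)=(\bQ(t,\bq,\bp),\bP(t,\bq,\bp))$ and $H=\pm c_{\tp,\ts}(\bQ)|\bP|$. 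The first step is to record, via the preceding Proposition, that for $(\bq,\bp)\in K_\delta$ and $t\in[0,T]$ the flow stays inside $K_T$; in particular $|\bP|\ge\delta_T>0$ there, so (granting the standing smoothness of $c_{\tp,\ts}$) $H$ is $C^\infty$ in $(\bQ,\bP)$ on the region visited by the flow, and its phase-space Hessian $\nabla^2_{(\bQ,\bP)}H$ is bounded uniformly for $t\in[0,T]$, $(\bq,\bp)\in K_\delta$.

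Next I would differentiate~\eqref{ew:flow} in $(\bq,\bp)$, which yields the linear variational equation $\ddt{J_{\tp,\ts}}=\mathbb{A}(t)\,J_{\tp,\ts}$ with $J_{\tp,\ts}(0,\bq,\bp)=\Id_6$, where $\mathbb{A}(t)=\mathbb{J}\,\nabla^2_{(\bQ,\bP)}H$ and $\mathbb{J}=\bigl(\begin{smallmatrix}0&\Id_3\\-\Id_3&0\end{smallmatrix}\bigr)$. Using that $\nabla^2 H$ is symmetric together with $\mathbb{J}^T=-\mathbb{J}$ and $\mathbb{J}^2=-\Id_6$, one checks $\mathbb{A}^T\mathbb{J}+\mathbb{J}\mathbb{A}=0$, hence $\ddt{}\bigl(J_{\tp,\ts}^T\mathbb{J}\,J_{\tp,\ts}\bigr)=J_{\tp,\ts}^T(\mathbb{A}^T\mathbb{J}+\mathbb{J}\mathbb{A})J_{\tp,\ts}=0$. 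Since the left-hand quantity equals $\mathbb{J}$ at $t=0$, it equals $\mathbb{J}$ for all $t\in[0,T]$, which is precisely~\eqref{eq:symplectic}; thus $\kappa_{\tp,\ts}$ is a canonical transformation. This is the same computation as for the strictly hyperbolic case in~\cite{LuYa:MMS}; the doubled S-wave eigenvalue plays no role because each branch $H_{\ts\pm}$ is a single smooth Hamiltonian treated in isolation.

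For the sup-norm statement, boundedness of $(\bQ,\bP)$ over $(\bq,\bp)\in K_\delta$, $t\in[0,T]$ is immediate from confinement to the bounded set $K_T$. For the derivatives I would feed the bound $\sup_{t\in[0,T]}\|\mathbb{A}(t)\|\le C$ (a consequence of the Hessian bound above) into Gronwall's inequality applied to the variational equation, obtaining $\|J_{\tp,\ts}(t,\bq,\bp)\|\le e^{CT}$ uniformly on $K_\delta$; hence $\partial_{\bq}\bQ$, $\partial_{\bp}\bQ$, $\partial_{\bq}\bP$, $\partial_{\bp}\bP$ are all bounded in sup norm on $[0,T]\times K_\delta$, and any higher derivatives needed later follow by differentiating the variational equation once more and iterating Gronwall. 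The only genuinely delicate point I foresee is the uniform control of $\nabla^2 H$ on the set where $|\bP|$ could be small, since $H=\pm c(\bQ)|\bP|$ loses smoothness there; but this is exactly what the lower bound $|\bP|\ge\delta_T$ on $K_T$ removes, so I do not anticipate a real obstacle.
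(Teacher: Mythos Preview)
Your argument is correct and is precisely the standard variational computation that the paper defers to by citing Proposition~3.4 of~\cite{LuYa:MMS}: differentiate the Hamiltonian flow to obtain the linear variational system, use symmetry of the Hessian of $H$ to get the infinitesimal symplectic relation $\mathbb{A}^T\mathbb{J}+\mathbb{J}\mathbb{A}=0$, integrate to preserve~\eqref{eq:symplectic}, and apply Gronwall on $K_\delta$ for the sup-norm bounds on $J_{\tp,\ts}$. The paper gives no independent proof here, so there is nothing further to compare.
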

\begin{proof}
This is proposition 3.4 in~\cite{LuYa:MMS}
\end{proof}
For a canonical transform $\kappa_{\tp,\ts}$ define the quantity $Z^{\kappa_{\tp,\ts},t}(\bq,\bp)$ for $|p| > 0$ as
\begin{equation}
Z^{\kappa_{\tp,\ts},t}  = \partial_{\bz}(\bQ(\bq,\bp) + \I \bP(\bq,\bp))
\end{equation}
With $\partial_{\bz} = (\partial_{\bq} - \I\partial_{\bp})$. Dropping the superscript $\kappa_{\tp,\ts}$,
\begin{equation}\label{eq:op_zZ_app}
Z = (\I \Id_3\ \ \Id_3) 
\left(\begin{array}{c c}
\partial_{\bq}\bQ & \partial_{\bq}\bP \\
\partial_{\bp}\bQ & \partial_{\bp}\bP \\
\end{array}\right)
\left(\begin{array}{c}
-\I \Id_3 \\
\Id_3
\end{array}\right)
\end{equation}
\begin{definition} 
The following notation will be useful. For $\ba \in C^\infty(\Omega,\mC)$, define for $k\in \mN$.
\begin{equation}
\Lambda_{k,\Omega}(\ba) := \max_{|\alpha_{\bp} | + |\alpha_{\bq} | = k} \sup_{(\bq,\bp)\in S}|\partial_{\bq}^{\alpha_{\bq}}\partial_{\bp}^{\alpha_{\bp}} \ba(\bq,\bp)|
\end{equation}
with $\alpha_{\bq}$, and $\alpha_{\bq}$ being multi-indices corresponding to $\bq$ and $\bp$ respectively.
\end{definition}
We will need the following lemma,
\begin{lemma}\label{lemma:1}
$Z^{\kappa_{\tp,\ts},t}$ is invertible for $(\bq,\bp) \in \mR^{2d}$ with $|\bp| > 0$.  Furthermore, for any $k \geq 0$ and $\delta > 0$, there exist constants $C_{k,\delta}$ such that
\begin{equation}
\Lambda_{k,K_{\delta}} \big((Z^{\kappa_{\tp,\ts},t}(\bq,\bp))^{-1}\big)\leq C_{k,\delta}
\end{equation}
\end{lemma}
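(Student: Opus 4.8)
## Proof Plan for Lemma \ref{lemma:1}

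The plan is to establish invertibility and the uniform derivative bounds in two stages: first treat the flow map's Jacobian, then pass to $Z$ and its inverse. Recall from Eq.~\eqref{eq:op_zZ_app} that $Z = (\I\Id_3\ \ \Id_3) J_{\tp,\ts} (-\I\Id_3\ \ \Id_3)^\TT$ where $J_{\tp,\ts}$ is the symplectic Jacobian of the canonical transformation $\kappa_{\tp,\ts}$. The first step is to record that, because $\kappa_{\tp,\ts}$ is a canonical transform (the preceding proposition, from~\cite{LuYa:MMS}), $J_{\tp,\ts}$ is symplectic, hence invertible with $\det J_{\tp,\ts} = 1$ pointwise. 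Invertibility of $Z$ itself does not follow formally from that alone, but it is a standard fact for Hamiltonian flows that the block $Z = \partial_{\bz}(\bQ + \I\bP)$ is invertible whenever $|\bp| > 0$; the clean argument is to write $Z = \partial_{\bq}\bQ + \partial_{\bp}\bP + \I(\partial_{\bp}\bQ - \partial_{\bq}\bP)$ (using $\partial_{\bz} = \partial_{\bq} - \I\partial_{\bp}$ acting on $\bQ + \I\bP$ and collecting terms) and show $\Re(\bar w^\TT Z w) = |w|^2 + (\text{terms from the real part of a symmetric combination})$, or more robustly to use the symplectic identities $(\partial_{\bq}\bQ)^\TT(\partial_{\bq}\bP) - (\partial_{\bq}\bP)^\TT(\partial_{\bq}\bQ)$-type relations to show $Z^* Z \geq c\,\Id$. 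This is exactly Lemma 3.2 (or its analogue) in~\cite{LuYa:MMS}, and the only thing to check here is that their hypotheses — the Hamiltonian bounds~\eqref{eq:hambound} and smoothness of $c_{\tp}, c_{\ts}$ away from $\bp = 0$ — hold in our setting, which they do by the remark following~\eqref{ew:flow}.

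The second step is the quantitative derivative bound. First I would establish, for each fixed $k$, that $\Lambda_{k, K_\delta}(\bQ_{\tp\pm,\ts\pm})$ and $\Lambda_{k, K_\delta}(\bP_{\tp\pm,\ts\pm})$ are bounded by constants $C_{k,\delta}$ depending only on $k$, $\delta$, $T$, and the $C^\infty$ norms of $c_{\tp}, c_{\ts}$ on $K_T$. This follows by differentiating the flow system~\eqref{ew:flow} with respect to $(\bq,\bp)$: the derivatives satisfy a linear variational ODE whose coefficients involve derivatives of the right-hand side of~\eqref{ew:flow}, which are smooth and bounded on $K_T$ (using the earlier proposition that the flow stays in $K_T$ and that $|\bP|$ stays bounded away from zero). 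Gronwall's inequality applied inductively on $|\alpha_{\bq}| + |\alpha_{\bp}| = k$ closes the estimate, since the variational equation for order-$k$ derivatives has a source term polynomial in lower-order derivatives. Consequently $\Lambda_{k, K_\delta}(Z) \leq C_{k,\delta}$ for all $k$, because $Z$ is a fixed linear combination of first derivatives of $\bQ + \I\bP$.

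The third step converts the bound on $Z$ into a bound on $Z^{-1}$. Since $Z$ is invertible on $K_\delta$ and continuous on the compact set $K_\delta$, $\|Z^{-1}\|$ is bounded there; equivalently, $|\det Z|$ is bounded below by a positive constant $c_\delta$ on $K_\delta$ — here I would either quote the bound $|\det Z| \geq $ const from~\cite{LuYa:MMS} or derive it from $Z^* Z \geq c\,\Id$ as above. Then $Z^{-1} = (\det Z)^{-1}\,\mathrm{adj}(Z)$, and the derivative bounds for $Z^{-1}$ follow from the Leibniz and Faà di Bruno rules: each derivative $\partial^\alpha(Z^{-1})$ is a sum of terms of the form $(\det Z)^{-1-|\beta|}$ times products of derivatives of $\det Z$ and of the entries of $\mathrm{adj}(Z)$, all of which are bounded on $K_\delta$ by Step 2 combined with the lower bound on $|\det Z|$. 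Collecting constants gives $\Lambda_{k, K_\delta}((Z^{\kappa_{\tp,\ts},t})^{-1}) \leq C_{k,\delta}$, as claimed.

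I expect the main obstacle to be Step 1, specifically the clean verification that $Z$ (as opposed to the full symplectic matrix $J_{\tp,\ts}$) is invertible with a uniform lower bound on $|\det Z|$ — this is where the structure $\partial_{\bz} = \partial_{\bq} - \I\partial_{\bp}$ interacts with symplecticity in a way that is easy to get wrong by a sign, and it is genuinely the place where $|\bp| > 0$ enters (through the Hamiltonian being nondegenerate). Everything else is routine ODE perturbation theory and multivariable calculus. Since the lemma is essentially Lemma 3.2 in~\cite{LuYa:MMS} transported to the elastic flows, the cleanest writeup is to reduce to that lemma by checking the hypotheses, and only reproduce the Gronwall/variational argument for the derivative bounds if a self-contained proof is wanted.
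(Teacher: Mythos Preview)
Your proposal is correct and aligned with the paper's own proof, which is simply a pointer to Lemma~5.1 in~\cite{LuYa:MMS} with the remark that the key step is using symplecticity of $J_{\tp,\ts}$ to bound the eigenvalues of $Z(Z)^*$ from below --- precisely your $Z^*Z \geq c\,\Id$ idea in Step~1. Your Steps~2--3 flesh out the derivative bounds that the paper leaves to the citation; one small clarification is that the eigenvalue lower bound coming from symplecticity is in fact universal (independent of $\delta$), so the condition $|\bp| > 0$ enters only to ensure the Hamiltonian flow (and hence $Z$) is well-defined and smooth, not for the coercivity itself.
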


\begin{proof}
The proof uses the property of they symplectic transform to bound the eigenvalues of $Z^{\kappa_{\tp,\ts},t}(Z^{\kappa_{\tp,\ts},t})^*$, see Lemma 5.1 in~\cite{LuYa:MMS} for details.
\end{proof}

\begin{lemma} For any vector $\bd{a}(\by,\bq,\bp) = (a_j)$ and matrix $M(\by,\bq,\bp) = (M_{ij})$ in Schwartz class, one has the following integration by parts formula in the componentwise form, with $\partial_{\bd{z}}=(\partial_{z_1},\partial_{z_2},\partial_{z_3})$,
\begin{equation}\label{lemma:2}
\begin{aligned}
a_j(x-Q)_j \sim&-\epsilon \partial_{z_m}\big(a_j Z_{jm}^{-1} \big), \\
(x-Q)_jM_{jl}(x-Q)_l \sim &\epsilon \partial_{z_m}Q_j M_{jl} Z_{lm}^{-1} + \cO(\epsilon^2),
\end{aligned}
\end{equation}
\end{lemma}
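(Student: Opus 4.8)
\medskip
\noindent\textbf{Proof strategy.}
The plan is to reduce both lines of~\eqref{lemma:2} to a single pointwise identity that rewrites the factor $(x-Q)$ multiplying the kernel $G_n^\epsilon=e^{\I\phi_n/\epsilon}$ as an $\epsilon\,\partial_{\bd z}$-derivative of $G_n^\epsilon$, and then to integrate by parts in $(\bq,\bp)$ inside the integral defining $\cI^\epsilon_n$ so that the derivative lands on the Schwartz amplitude. First I would compute $\partial_{z_m}\phi_n$ from~\eqref{eq:phasefunc}. Writing $\partial_{z_m}=\partial_{q_m}-\I\partial_{p_m}$ and differentiating termwise, the contributions of the $\by$-dependent part $\frac{\I}{2}|\by-\bq|^2-\bp\cdot(\by-\bq)$ to $\partial_{q_m}$ and to $-\I\partial_{p_m}$ cancel against each other, leaving
\begin{equation*}
\partial_{z_m}\phi_n=\big(p_m-P_j\,\partial_{z_m}Q_j\big)-\I\,(x-Q)_j\,Z_{jm},
\end{equation*}
with $Z_{jm}$ as in~\eqref{eq:op_zZ_app}.

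The key step is to show the first parenthesis vanishes identically along the flow~\eqref{ew:flow}. Set $w_m(t)=P_j\,\partial_{z_m}Q_j-p_m$; then $w_m(0)=0$, since $(\bQ,\bP)|_{t=0}=(\bq,\bp)$ forces $\partial_{z_m}Q_j=\delta_{jm}$. Differentiating in $t$ with $\dot Q_j=\partial_{P_j}H$, $\dot P_j=-\partial_{Q_j}H$, and using that each Hamiltonian $H_{\tp\pm,\ts\pm}=\pm c_{\tp,\ts}(\bQ)\,|\bP|$ is homogeneous of degree one in $\bP$ — so Euler's relation gives $P_j\,\partial_{P_l}\partial_{P_j}H=0$ and $P_j\,\partial_{Q_l}\partial_{P_j}H=\partial_{Q_l}H$ — the two surviving terms cancel, hence $\dot w_m\equiv0$ and $w_m\equiv0$. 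Therefore $\partial_{z_m}\phi_n=-\I(x-Q)_jZ_{jm}$, and since $\epsilon\,\partial_{z_m}G_n^\epsilon=\I\,(\partial_{z_m}\phi_n)\,G_n^\epsilon$, while $Z$ is invertible with the derivative bounds of Lemma~\ref{lemma:1} on $K_\delta$, one obtains the exact pointwise identity $(x-Q)_j\,G_n^\epsilon=(Z^{-1})_{mj}\,\epsilon\,\partial_{z_m}G_n^\epsilon$.

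The first line of~\eqref{lemma:2} then follows by multiplying by $a_j$, inserting into $\cI^\epsilon_n$, and integrating by parts in $(\bq,\bp)$: the boundary terms vanish because $\bd a$ is Schwartz and the derivatives of $Z^{-1}$ are controlled by Lemma~\ref{lemma:1}, which yields $a_j(x-Q)_j\sim-\epsilon\,\partial_{z_m}(a_jZ^{-1}_{jm})$ with no remainder. For the quadratic expression I would apply the same identity to one factor $(x-Q)_l$, integrate by parts once, and split $\partial_{z_m}\big[(x-Q)_jM_{jl}Z^{-1}_{ml}\big]$ into the piece where $\partial_{z_m}$ falls on $(x-Q)_j$, which produces $-\partial_{z_m}Q_j\,M_{jl}Z^{-1}_{ml}$ and, after the overall factor $-\epsilon$, is exactly the stated leading term; and the piece where $\partial_{z_m}$ falls on $M_{jl}Z^{-1}_{ml}$, which is again a product of a Schwartz amplitude with $(x-Q)_j$ and hence, by the identity just proved, is itself $\cO(\epsilon)$, contributing $\cO(\epsilon^2)$ overall.

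The step I expect to be the main obstacle is the cancellation establishing $\partial_{z_m}\phi_n=-\I(x-Q)_jZ_{jm}$: this is the one place where the precise structure of the FGA phase and the degree-one homogeneity of the elastic Hamiltonians are genuinely used, rather than routine index bookkeeping. A secondary technical point, handled exactly as in~\cite{LuYa:MMS}, is justifying the integration by parts when the lemma is later applied with amplitudes built from the eigenvectors of Section~\ref{sec:3}, which are not literally Schwartz: one inserts the $K_\delta$ cut-off furnished by the asymptotically-high-frequency hypothesis, uses Lemma~\ref{lemma:1} to bound $Z^{-1}$ and its derivatives uniformly on $K_\delta$, and absorbs the discarded tails into $\cO(\epsilon^\infty)$.
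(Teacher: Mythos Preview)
Your proposal is correct and follows essentially the same route as the paper's (deferred) argument: the paper simply cites Lemma~5.2 of~\cite{LuYa:MMS} and Lemma~3.2 of~\cite{LuYa:11}, both of which rest on exactly the pointwise identity $\partial_{z_m}\phi_n=-\I(x-Q)_jZ_{jm}$ together with the invertibility of $Z$ from Lemma~\ref{lemma:1} and integration by parts in $(\bq,\bp)$. Your homogeneity computation showing $P_j\,\partial_{z_m}Q_j\equiv p_m$ is precisely the mechanism used in those references (the symbols of strictly hyperbolic systems are degree-one homogeneous in $\bp$, so no action term appears in the FGA phase), and your treatment of the quadratic term by iterating the identity once and absorbing the remaining $(x-Q)$ factor into $\cO(\epsilon^2)$ matches the cited proofs.
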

\begin{proof}
The proof requires integration by parts and invertibility of $Z$ from lemma~\eqref{lemma:1}. This is a special case of Lemma 5.2 in~\cite{LuYa:MMS} and it is Lemma 3.2 in ~\cite{LuYa:11} we refer to these for the detailed proof.
\end{proof}
\begin{theorem}
Given the Cauchy problem~\eqref{cp:eq:1} in terms of the matrix system~\eqref{hyper_decomp} with asymptotically high frequency initial condition $X_0^\epsilon$, the following estimate holds
\begin{equation}\label{eq:fga:efd}
\|X - X_{\FGA,0} \|_{\rL^2} \lesssim \epsilon \|X_0^\epsilon\|_{\rL^2}
\end{equation}
\end{theorem}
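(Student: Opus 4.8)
The plan is to follow the template of~\cite{LuYa:MMS} for strict hyperbolic systems, isolating the one new feature of the elastic system: the repeated eigenvalues $\pm c_\ts$ and the diabatic coupling they induce in the amplitude equations. First I would write $X_{\FGA,0}$ as a sum of Fourier integral operators $\sum_n \cI^\epsilon_n(t,\ba_n) X_0^\epsilon$ over the branches $n \in \{0,\tp\pm,\ts\pm\}$, where each amplitude $\ba_n$ solves the first-order transport equation along the Hamiltonian flow~\eqref{ew:flow} derived in Section~\ref{sec:4}. The key object to control is the residual $r^\epsilon(t) := (\partial_t X_{\FGA,0} - M X_{\FGA,0})$; once I show $\|r^\epsilon(t)\|_{\rL^2} \lesssim \epsilon\|X_0^\epsilon\|_{\rL^2}$ uniformly on $[0,T]$, Duhamel's principle together with the fact that $M$ generates a bounded evolution (the system~\eqref{hyper_decomp} is hyperbolic, so the $\rL^2$ energy is conserved up to lower-order terms from the $\bx$-dependence of $c_\tp,c_\ts$) gives $\|X - X_{\FGA,0}\|_{\rL^2} \lesssim T\sup_{[0,T]}\|r^\epsilon(t)\|_{\rL^2} \lesssim \epsilon\|X_0^\epsilon\|_{\rL^2}$, which is~\eqref{eq:fga:efd}.

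The heart of the matter is the residual estimate. Applying $\partial_t - M$ to $\cI^\epsilon_n(t,\ba_n)X_0^\epsilon$ and using the phase function~\eqref{eq:phasefunc}, the leading $\cO(1/\epsilon)$ term vanishes because $\bQ_n,\bP_n$ solve the Hamiltonian flow (the eikonal equation), and the $\cO(1)$ term vanishes because $\ba_n$ was chosen to solve the transport equation — this is exactly how the evolution equation in Section~\ref{sec:4} was set up. What remains is genuinely $\cO(\epsilon)$ in amplitude but multiplied by factors of $(\bx - \bQ_n)/\epsilon$ and $(\bx-\bQ_n)^2/\epsilon$ coming from expanding the Hamiltonian symbol around the moving center $\bQ_n$. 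Here I invoke the integration-by-parts Lemma~\eqref{lemma:2}: each factor of $(x-Q)_j$ is traded, at the cost of an $\epsilon$ and a $\partial_{\bz}$ derivative, for a bounded quantity — bounded precisely because of the invertibility and derivative bounds on $Z^{-1}$ from Lemma~\ref{lemma:1} (which hold here since $\kappa_{\tp,\ts}$ is canonical and bounded, and $|\bp|$ is bounded below on $K_\delta$). After this manipulation the residual is a Fourier integral operator with an $\rL^\infty$ symbol of size $\cO(\epsilon)$, so Proposition~\ref{eq:fio_bd} bounds its operator norm by $\cO(\epsilon)$; applying it to $X_0^\epsilon$ and using that $X_0^\epsilon$ is asymptotically high frequency (so the contribution from $\mR^{2d}\setminus K_\delta$ is $\cO(\epsilon^\infty)$) yields $\|r^\epsilon(t)\|_{\rL^2}\lesssim \epsilon\|X_0^\epsilon\|_{\rL^2}$.

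The main obstacle — and the place where the elastic equation departs from~\cite{LuYa:MMS} — is the repeated eigenvalue $\pm c_\ts$ with its two-dimensional eigenspace spanned by $R_{\ts\pm,1}, R_{\ts\pm,2}$. On the degenerate branch the amplitude is matrix-valued and the transport equation acquires the diabatic coupling term noted in the abstract; I must check that (i) the projector onto the $\pm c_\ts$ eigenspace is smooth and bounded on $K_\delta$ away from the coordinate hyperplanes $p_i = 0$ (the eigenvector formulas degenerate there, so one adjusts the basis as the text remarks, and these measure-zero sets contribute nothing after the high-frequency cutoff), and (ii) the matrix transport equation still has a bounded solution on $[0,T]$ with bounded $(\bq,\bp)$-derivatives, so that $\Lambda_{k,K_\delta}(\ba_{\ts\pm}) < \infty$ and the integration-by-parts argument goes through verbatim. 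This is an ODE-in-$t$ argument (Gronwall for the matrix system, plus differentiating in parameters), and the coupling term, being bounded, does not affect the $\cO(\epsilon)$ order of the residual — it only changes which amplitude equation one solves, not the error analysis. Once this is in place, the remaining P-wave and zero branches are scalar and handled exactly as in~\cite{LuYa:MMS}.
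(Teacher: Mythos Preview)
Your residual-plus-Duhamel outline is correct and is exactly the machinery behind~\cite{LuYa:MMS,LuYa:CPAM}; in that sense you are not wrong. But you should know that the paper does not re-derive any of this for the theorem in question. Its proof is essentially a citation: it writes down the eigenvector ansatz~\eqref{eq:xfga} with \emph{scalar} amplitudes $\sigma_n$ (one for each of the seven eigenvectors $R_n$), records the decoupled transport ODE $\dot\sigma_n + \sigma_n\lambda_n = 0$, and then remarks that the first-order convergence proof in~\cite{LuYa:CPAM} goes through unchanged because the symbol is diagonalizable and the eigenspaces are non-degenerate. No new estimate is performed for this theorem; it is imported wholesale.

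Where your proposal diverges from the paper is in what you call the ``main obstacle.'' You treat the repeated S-eigenvalue as forcing a matrix-valued amplitude with diabatic coupling on the $\pm c_\ts$ branch. That coupling is real, but it lives in the \emph{other} formulation --- the projection-based $\bu_{\FGA,0}$ of Section~\ref{sec:4}, where the amplitude is written as $a_{\tsv}\hat{\bd N}_{\tsv}+a_{\tsh}\hat{\bd N}_{\tsh}$ and the rotating frame produces the $\frac{\ud\hat{\bd N}_{\tsh}}{\ud t}\cdot\hat{\bd N}_{\tsv}$ term. In the $X_{\FGA,0}$ formulation of the present theorem the basis is the fixed biorthogonal system $\{L_n,R_n\}$ from~\eqref{eq:eig0}--\eqref{eq:eig1} with $L_m\cdot R_n=\delta_{mn}$, and each $\sigma_n$ satisfies its own scalar ODE; the multiplicity of $\pm c_\ts$ never enters. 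So the only check actually needed beyond~\cite{LuYa:CPAM} is that the explicit $L_n,R_n$ are smooth and bounded on $K_\delta$ (after adjusting away from $p_i=0$ as the paper notes), which is immediate from the formulas. Your more elaborate program would work, but it is solving a harder problem than the theorem requires and conflates the two FGA formulations the paper is deliberately keeping separate.
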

\begin{proof}
This is the main content of~\cite{LuYa:CPAM}, $X_{\FGA}$ is defined as
\begin{multline}\label{eq:xfga}
X_{\FGA,0} = \left(2\pi\epsilon\right)^{-3d/2}\sum_{n=1}^7\int_{\mR^{3d}} \sigma_n(t,\bq,\bp) R_n(t,\bQ_n,\bP_n)L_n(t,\bq,\bp)^T \\
\times G_n^\epsilon(t,\bx,\by,\bq,\bp)X^\epsilon_0(\by)\ud\by\ud\bq\ud\bp.
\end{multline}
sub-scripting over $n$ (instead of $\tp\pm,\ts\pm,0$) for the eigenfunctions and left and right eigenvectors defined from Eqs.~\eqref{eq:eig0}-\eqref{eq:eig1} with $\bQ_n = \bQ_n(t,\bq,\bp)$, $\bP_n = \bP_n(t,\bq,\bp)$ and $\sigma_n(t,\bq,\bp)$ solving the evolution equation
\begin{equation}
\begin{cases}
\ds\ddt{} \sigma_n(t,\bq,\bp) + \sigma_n(t,\bq,\bp)\lambda_n(t,\bq,\bp) = 0&  \\
\sigma_n(0,\bq,\bp) = 2^{d/2}
\end{cases}
\end{equation}
where
\begin{align}
\lambda_n &= L^T\big(\partial_{P_n}H_n\cdot\partial_{Q_n}R_n - \partial_{Q_n}H_n\cdot \partial_{P_n}R_n\big) \\
& - (\partial_{z_k}L_n)^T\Big( M_j - \partial_{P_n,j}H_n + \I (\partial_{Q_n,j}H_n - P_{n,l}\partial_{Q_{n,j}}M_{l})\Big)R_nZ^{-1}_{n,jk}  \nonumber\\
& + \partial_{z_s}Q_{n,j}Z_{n,ks}^{-1}L^T\Big(-\partial_{Q_{m,j}}M_k + \frac{\I}{2}P_{n,l}\partial^2_{{Q_n,j},{Q_n,k}}M_l\Big)R_n \nonumber
\end{align}
With $Q_n, P_n$ evaluated at $(t,\bq,\bp)$ and $M_j$ evaluated at $\bQ_n$ and $H_n, L_n, R_n$ evaluated at $(\bQ_n,\bP_n)$. 
\end{proof}
We note that the work done in~\cite{LuYa:CPAM} for the first ordered FGA is the same as the operator is diagonalizable and the eigenspaces are non degenerate. We write $X_\FGA$, to show first order asymptotic equivalence; see corollary~\ref{cor:1}, to our FGA using the amplitude factor derived from projections.
\section{The FGA via projection}\label{sec:4}
We begin by defining the FGA derived from its strong form of the evolution equation; define the first order FGA as:
\begin{multline}\label{eq:fga_sol}
 u_{\FGA,0}(t,\bx) =  (2\pi\epsilon)^{-3d/2}\int_{\mR^{3d}} \sum_{b=\pm}\ba_{\tp,b,0}(t,\by,\bq,\bp) G_{\tp,b}^{\epsilon}(t,\bx,\by,\bq,\bp) \\
 + (\ba_{\tsh,b,0}(t,\by,\bq,\bp) + \ba_{\tsv,b,0}(t,\by,\bq,\bp))G_{\ts,b}^{\epsilon}(t,\bx,\by,\bq,\bp) \ud \by\ud \bp \ud\bq
\end{multline}
Often times the computations for the branches and P-,S- wavefields will be similar, in these cases we either omit the subscript or subscript by $n$ instead of $(\tp\pm,\tsh\pm,\tsh\pm)$ or $(\tp\pm,\ts\pm)$.
With this notation we can define eq.~\eqref{eq:fga_sol} in more compactly as;
\begin{equation}\label{eq:fga_sol:n}
 u_{\FGA,0}(t,\bx) =  (2\pi\epsilon)^{-3d/2}\int_{\mR^{3d}} \sum_{n=1}^6\ba_{n,0}(t,\by,\bq,\bp) G_{n}^{\epsilon}(t,\bx,\by,\bq,\bp) \ud \by\ud \bp \ud\bq
\end{equation}
For $k > 1$, define the k-th ordered FGA with a correction term as;
\begin{multline}
u_{\FGA,k}(t,\bx)  =  u_{\FGA,0}(t,\bx) \\
 +  (2\pi\epsilon)^{-3d/2}\int_{\mR^{3d}} \sum_{j=1,n=1}^{k,6} \epsilon^j\big(\ba_{n,j}(t,\by,\bq,\bp) + \ba_{n,j}^\perp(t,\by,\bq,\bp)\big)\\
	\times G_n^\epsilon(t,\bx,\by,\bq,\bp) \ud \by\ud \bp \ud\bq
\end{multline}
the terms $\ba_{n,1}^\perp$ will be defined later in this section~\ref{4.2}.
We define a standard smooth cutoff function $\chi_\delta: \mR^{2d} \to [0,1]$ for the set $K_\delta$ as
\begin{equation}
\chi_\delta(\bq,\bp) = 
\begin{cases}
1, &(\bq,\bp) \in K_\delta \\
0, &(\bq,\bp) \in \mR^{}\bs K_{\delta/2}
\end{cases}
\end{equation}
and for any $k\in\mN$, there exists a constant $C_{K,\delta}$ such that
\begin{equation}
\Lambda_{k}\big(\chi_\delta(\bq,\bp)\big) < C_{K,\delta}
\end{equation}
We define the filtered version of the FGA as follows;
\begin{multline}
 \tilde{u}_{\FGA,k}(t,\bx) =  (2\pi\epsilon)^{-3d/2}\int_{\mR^{3d}} \chi_\delta \sum_{j=0,n=1}^{k,6} \epsilon^j\big(\ba_{n,j}(t,\by,\bq,\bp) + \ba_{n,j}^\perp(t,\by,\bq,\bp)\big) \\
\times G_n^\epsilon(t,\bx,\by,\bq,\bp) \ud \by\ud \bp \ud\bq 
\end{multline}
where $\ba_{n,0}^\perp = 0$. Define the unit vectors $\hat{\bd{N}}_{\tp\pm},\hat{\bd{N}}_{\tsv\pm},\hat{\bd{N}}_{\tsh\pm}$ that points in the direction P, SV, or SH waves respectively.  Then $\ba_{n}(t,\by,\bq,\bp)$ is defined as follows,
\begin{equation}
\ba_{n}(t,\by,\bq,\bp) = a_{n}(t,\bq,\bp)\alpha_{n}^\epsilon(\by,\bq,\bp)\hat{\bd{N}}_n(t,\bq,\bp).
\end{equation}
Where $\hat{\bd{N}}_n(0,\bq,\bp) = \hat{\bd{n}}_{n}$ and $\alpha^\epsilon$ incorporates the initial conditions,
\begin{align}\label{eq:alpha}
\alpha^\epsilon_{n}(\bd{y},\bd{q},\bd{p})=\frac{1}{2c_{n}\abs{\bp}^3}\Bigl(\bd{u}_0^\epsilon(\bd{y})c_{n}\abs{\bd{p}}\pm
  \I\epsilon\bd{u}_1^\epsilon(\bd{y}) \Bigr)\cdot\hat{\bd{n}}_{n}.
\end{align}
The scalar functions $a_n$ satisfying the following evolution equations;
\begin{align}\label{eq:amp_P}
&\frac{\ud a_{\tp}}{\ud t} =  a_{\tp}\biggl(\pm\frac{\partial_{\bd{Q}_{\tp}}c_{\tp}\cdot \bd{P}_{\tp}}{|\bd{P}_{\tp}|} + \frac{1}{2}\tr\Bigl(Z_{\tp}^{-1}\frac{\ud Z_{\tp}}{\ud t}\Bigr)\biggr), \\
& \frac{\ud a_{\tsv}}{\ud t} =  a_{\tsv}\biggl(\pm\frac{\partial_{\bd{Q}_{\ts}}c_{\ts}\cdot \bd{P}_{\ts}}{|\bd{P}_{\ts}|} + \frac{1}{2}\tr\Bigl(Z_{\ts}^{-1}\frac{\ud Z_{\ts}}{\ud t}\Bigr)\biggr)-a_{\tsh}\frac{\ud \hat{\bd{N}}_{\tsh}}{\ud t}\cdot\hat{\bd{N}}_\tsv, \label{eq:amp_SV}\\
& \frac{\ud a_{\tsh}}{\ud t} = a_{\tsh}\biggl(\pm\frac{\partial_{\bd{Q}_{\ts}}c_{\ts}\cdot \bd{P}_{\ts}}{|\bd{P}_{\ts}|} + \frac{1}{2}\tr\Bigl(Z_{\ts}^{-1}\frac{\ud Z_{\ts}}{\ud t}\Bigr)\biggr)+a_{\tsv}\frac{\ud \hat{\bd{N}}_{\tsh}}{\ud t}\cdot\hat{\bd{N}}_\tsv, \label{eq:amp_SH}
\end{align}
Eq.~\eqref{eq:alpha} is derived from the writing $\bu_0(\bx)$, $\bu_1(\bx)$ in terms of FBI transform and it's inverse, i.e.
\begin{align}
\bu_0(\bx) &=  (2\pi\epsilon)^{-3d/2}\int_{\mR^{3d}} \bu_0(\by) G^\epsilon(0,\bx,\by,\bq,\bp)\ud \by\ud \bp \ud\bq
\end{align}
and decomposing the integrand interms of the basis $\{ \hat{n}_{\tp},\hat{n}_{\tsv},\hat{n}_{\tsh}\}$.  
\begin{remark}~\label{rmk:1}
It is easy to check that $\bu_\FGA(0,\bx) = \bu_0(\bx)$ as $\cF^*(\cF(\bu_0)) = \bu_0$
\end{remark}
\subsection{Derivation of the evolution equation}
\noindent For the general computation we will drop the subscript and branch notation.  The calculations for the two branches are identical for both P,S waves. First we introduce the notation, $f\sim g$ if
\begin{equation}\label{eq:sim}
\int_{\mR^{3d}} f(y)G^\epsilon(t,\bx,\by,\bq,\bp)\ud\by\ud\bq\ud\by = \int_{\mR^{3d}} g(y)G^\epsilon(t,\bx,\by,\bq,\bp)\ud\by\ud\bq\ud\by.
\end{equation}
Note that eq.~\eqref{cp:eq:1} can be written as follow
\begin{equation}\label{eq:ewe_cpcs}
\partial^2_t\bu = c_\tp^2(\bx)\nabla(\nabla\cdot \bu) - c_\ts^2(\bx)\nabla\times\nabla \times \bu,
\end{equation}
with $c_\tp$ and $c_\ts$ given in~\eqref{eq:ps_speed}. For simplicity in the asymptotics, we assume $\mu$ and $\lambda$ are constant and write~\eqref{eq:ewe_cpcs} as
\begin{equation}\label{eq:ewe_curl}
\rho(\bd{x})\partial^2_t\bu = (\lambda + 2\mu)\nabla(\nabla\cdot \bu) - \mu\nabla\times\nabla \times \bu.
\end{equation}
Eq.~\eqref{eq:ewe_curl} is linear, and thus one can derive the prefactor equations for P- and S-waves individually by assuming $\bd{A}_\tp\parallel \bd{P}$ or $\bd{A}_\ts\perp \bd{P}$, with the following Gaussian wave packet solution ansatz;
\begin{equation}\label{solform:1}
\bu_{\tp,\ts}(t,\bd{x},\bd{y},\bd{q},\bd{p}) = \bd{A}_{\tp,\ts}(t,\bd{q},\bd{p})G^\epsilon_{\tp,\ts}(t,\bd{x},\bd{y},\bd{q},\bd{p}).
\end{equation} 
Without loss of generality, we first consider the prefactor equation for the P-wave, with the governing equations for $\bd{Q}_\tp$ and $\bd{P}_\tp$ given by eq.~\eqref{ew:flow}.
\noindent Plugging eq.~\eqref{solform:1} into eq.~\eqref{eq:ewe_curl} and expanding the asymptotics in the weak sense of \eqref{eq:sim} yield
\begin{multline}\label{weq:1}
\rho \left(\bd{A}_{tt} +  2\I k\bd{A}_t\Phi_t +  \I k\bd{A} \Phi_{tt}- k^2 \bd{A} \Phi_t^2  \right)
\sim   (\lambda + 2\mu)\left(\I k\nabla(\bd{A}\cdot \nabla \Phi) -  k^2(\bd{A} \cdot\nabla\Phi)\nabla\Phi \right)  \\ -\mu\left(\I k  \Curl(\nabla \Phi \times \bd{A}) - k^2 \nabla \Phi\times(\nabla \Phi\times \bd{A})\right). \end{multline}
The spatial and temporal derivatives of $\Phi$ are given by
\begin{equation}\label{eq:Phi_der}
\begin{aligned}
&\nabla \Phi =  \bP + \I (\bd{x} - \bQ),\ \Delta \Phi  =  3\I,\ \nabla^2\Phi = \I\Id_3, \\
&|\nabla\Phi|^2 =  |\bP|^2 + 2\I \bP\cdot (\bd{x} - \bQ) - |\bd{x}-\bQ|^2, \\
&\Phi_t =\ (\bP_t- \I \bQ_t)\cdot (\bd{x}-\bQ) - \bP\cdot \bQ_t, \\
&\Phi_t^2 =\ \big[(\bP_t- \I \bQ_t)\cdot (\bd{x}-\bQ)\big]^2 + (\bP\cdot \bQ_t)^2 - 2(\bP\cdot \bQ_t)\big[(\bP_t- \I \bQ_t)\cdot (\bd{x}-\bQ)\big],  \\
&\Phi_{tt} =\ (\bP_{tt}- \I \bQ_{tt})\cdot (\bd{x}-\bQ) - (\bP_t- \I \bQ_t)\cdot \bQ_t - \bP_t\cdot \bQ_t -  \bP\cdot \bQ_{tt}.
\end{aligned}
\end{equation}
Notice that the terms containing $k(\bd{x}-\bd{Q})$ will be of $\cO(1)$ by the lemma of integration by parts, and for P-waves, $\bd{P}\times \bd{A} = 0$ and $\Curl\bigl((\bd{x}-\bQ)\times {\bd{A}}\bigr) = -2\bd{A}$. Plugging the derivatives of $\Phi$ in eq.~\eqref{eq:Phi_der} into eq.~\eqref{weq:1} produces, after neglecting the $\cO(1)$ and lower order terms,
\begin{equation}\label{FGA:1}
\begin{aligned}
2 k \rho \bd{A}_t\big(\bP\cdot \bQ_t \big) \sim &\  k\rho \bd{A} \Big( -(\bP_t- \I \bQ_t)\cdot \bQ_t - \bP_t\cdot \bQ_t -  \bP\cdot \bQ_{tt} \Big) \\
+ &\ \I k \rho \bd{A}\Big( (\bd{x}-\bQ)\cdot\Big((\bP_t- \I \bQ_t)\otimes(\bP_t- \I \bQ_t)\Big)(\bd{x}-\bQ) \Big)  \\
+ &\ \I k^2 \rho \bd{A} \Big(-2(\bP\cdot \bQ_t)\big((\bP_t- \I \bQ_t)\cdot (\bd{x}-\bQ)\big) + (\bP\cdot \bQ_t)^2 \Big)  \\
+ &\ (\lambda + 2\mu)\Big(-  \I k \bd{A} + k^2\big((\bd{A}\cdot \bP)(\bd{x}-\bQ)  \\
 &\ +(\bd{A}\otimes \bP)(\bd{x}-\bQ)\big) + \I\big((\bd{x}-\bQ)\otimes(\bd{x}-\bQ)\big)\bd{A} \Big) \\
-&\ 2\I \mu k \bd{A} - \I\mu k^2\Big(((\bd{x}-\bQ)\cdot \bd{A})(\bd{x}-\bQ) - |\bd{x}-\bQ|^2\bd{A}\Big) \\
-&\  \mu k^2\Big((\bP\cdot \bd{A})(\bd{x}-\bQ) - (\bP\cdot(\bd{x}-\bQ))\bd{A}\Big),
\end{aligned}
\end{equation}
where $\otimes$ means the tensor product, e.g., $(\bd{A}\otimes \bd{P})_{jl}=A_jP_l$.

\noindent Expanding $\rho(\bd{x})$ around $\bQ$ and truncating at order third order,
\begin{align}
\rho(\bd{x}) =& \ \rho + \partial_{\bQ}\rho \cdot(\bd{x}-\bQ)+ \ds\frac{1}{2}(\bd{x}-\bQ)\cdot \partial_{\bQ\bQ}^2\rho(\bd{x}-\bQ),
\end{align}
and noticing that $\rho(\bd{Q}) c^2(\bd{Q})=\lambda+2\mu$ is constant, one has
\begin{align}\label{density:1}
\partial_{\bQ}\rho c^2 + 2c \partial_{\bQ}c\rho = 0,\quad \partial_{\bQ}\rho = -\ds\frac{2c \partial_{\bQ}c\rho}{c^2}.
\end{align}
Taking the second derivative for $\partial_{\bQ\bQ}\rho$ and substituting eq.~\eqref{density:1} bring
\begin{equation}\label{eq:rho_taylor}
\rho(\bd{x}) =p  \rho -\ds\frac{2\partial_{\bQ}c\rho}{c}\cdot(\bd{x}-\bQ)  + (\bd{x}-\bQ)\cdot\left(3\ds\frac{\partial_{\bQ}c\partial_{\bQ}c^T\rho}{c^2}  - \ds\frac{\partial_{\bQ\bQ}c\rho}{c}\right)(\bd{x}-\bQ).
\end{equation}
Plugging eqs,~\eqref{ew:flow} and \eqref{eq:rho_taylor} into eq.~\eqref{FGA:1}, and dividing by $\rho$ yield, in componentwise form,
\begin{equation}\label{FGA:4}
\begin{aligned}
2 k \partial_tA_i c|\bP| \sim & k A_i \Big(c (\partial_{\bQ}c)_j P_j +\I c^2\Big) - \ds \I c^2 k A_i  \\
&\ +k^2 A_i(x-Q)_jM_{jl}(x-Q)_l  - 2k^2A_ic^2P_j (x-Q)_j + \ds c^2N_{i} \\
 &\ - \ds\frac{2\I k\mu}{\rho}A_i - \ds\frac{k^2\mu}{\rho}\Big(P_jA_j(x-Q)_i - P_j(x-Q)_j A_i\Big)\\
&\ -\ds\frac{\I k^2\mu}{\rho}\Big((x-Q)_jA_j(x-Q)_i - |\bd{x}-\bQ|^2A_i\Big),
\end{aligned}
\end{equation}
where $M_{jl}$ and $N_{i}$ are given as follows,
\begin{equation}\label{fga_fact:1}
\begin{aligned}
M_{jl} =&\  -\I|\bP|^2c(\partial^2_{\bQ}c)_{jk} + c\big(P_l +P_j\big)(\partial_{\bQ}c)_j-\I c^2\ds\frac{P_jP_l}{|\bP|^2},  \\
N_{i} = &\ A_j P_j(x-Q)_i + A_iP_j(x-Q)_j +\I(x-Q)_i(x-Q)_jA_j.
\end{aligned}
\end{equation}

\noindent Assuming that $\bd{A} = a_{\tp}\hat{\bd{N}}_{\tp}$, with $\displaystyle \hat{\bd{N}}_\tp=\frac{\bd{P}}{\abs{\bd{P}}}$, then \begin{equation}
\partial_t \bd{A} = \partial_ta_\tp\hat{\bd{N}}_{\tp} + a_{\tp}\partial_t\hat{\bd{N}}_{\tp}=\partial_ta_\tp \frac{\bd{P}}{\abs{\bd{P}}} + a_{\tp}\partial_t\Bigl(\frac{\bd{P}}{\abs{\bd{P}}}\Bigr).
\end{equation}
Plugging this into eq.~\eqref{FGA:4} with the ray equations \eqref{ew:flow}, using the P-wave velocity eq.~\eqref{eq:ps_speed} and grouping in powers of $(\bd{x}-\bQ)$ produce
\begin{align}
2k\partial_t a_{\tp}c\abs{\bd{P}}P_i \sim &\ -2ka_{\tp}c|\bP|^2 \partial_t \Bigl(\frac{P_i}{\abs{\bd{P}}}\Bigr) + k a_{\tp} c (\partial_{\bQ}c)_jP_j P_i  \nonumber\\
&\  -\ds\frac{2\I k\mu P_i }{\rho}a_{\tp} + k^2a_{\tp}|\bP|^2\left(c^2 - \ds\frac{\mu}{\rho} \right)(x-Q)_i \\
 &\ + k^2 a_{\tp} \left(\ds\frac{\mu}{\rho}-c^2 \right)P_jP_i(x-Q)_j  	+ k^2\ds\frac{\I\mu a_{\tp}}{\rho}|x-Q|^2P_i \nonumber \\
 &\ + k^2\I a_{\tp}P_j\left(c^2 - \ds\frac{\mu}{\rho}\right)(x-Q)_j(x-Q)_i + k^2a_{\tp}P_i(x-Q)_j M_{jl}(x-Q)_l. \nonumber
\end{align}
Denoting $\partial_{\bd{z}}=(\partial_1,\partial_2,\partial_3)$ for an ease of notations, applying eq.~\eqref{lemma:1} and dropping the lower order terms yield
\begin{equation}\label{eq:1}
\begin{aligned}
2\partial_ta_{\tp}c\abs{\bd{P}}P_i \sim &\ -2ka_{\tp}c|\bP|^2 \partial_t \Bigl(\frac{P_i}{\abs{\bd{P}}}\Bigr) + a_{\tp}c (\partial_{\bQ}c)_jP_j P_i - \ds\frac{2\I\mu a_{\tp}}{\rho}P_i \\
- &\ 
\partial_{l}\left(\alpha\Big(|\bP|^2 - P_jP_i\Big)a_{n}\left(\ds\frac{\mu}{\rho} - c^2\right) Z^{-1}_{jl}\right)\\
+ &\ 
\I a_{\tp}P_j\left(c^2 - \ds\frac{\mu}{\rho}\right) Z_{il}^{-1}\partial_{l}Q_j
+ \ds\frac{\I\mu a_{\tp}}{\rho}P_i Z_{jl}^{-1}\partial_{l}Q_j \\
+ &\ 
a_{\tp}P_i\left(c(\partial_{\bQ}c)_jP_r  +cP_j(\partial_{\bQ}c)_j -\I c^2\ds\frac{P_jP_r}{|\bP|^2}\right)cZ_{rl}^{-1}\partial_{l}Q_j \\
 - &\ a_{\tp}\I P_i|\bP|^2c(\partial^2_{\bQ}c)_{jr} Z_{rl}^{-1}\partial_{l}Q_j.
\end{aligned}
\end{equation}
To derive an ordinary differential equation (ODE) instead of a partial differential equation (PDE) for $a_{\tp}$, one needs to simplify the terms containing $\partial_ka_{\tp}$ as
\begin{equation}\label{eq:a_der}
\Big(|\bP|^2 - P_jP_i\Big)\partial_l\left(a_{\tp}\left(\ds\frac{\mu}{\rho} - c^2\right)
 Z^{-1}_{jl}\right).
\end{equation}
Recall that eq.~\eqref{eq:1} holds in the sense of integral form \eqref{eq:sim}, and now we shall consider a strong form of eq.~\eqref{eq:1}, i.e., equate the integrands of the integrals on both sides. After taking the dot product of integrands with $\bP$, one has
\begin{align*}
 2\ds\frac{\partial_t a_{\tp}}{a_{\tp}}
= &\ 
\ds2\frac{(\partial_{\bQ}c)_jP_j}{|\bP|} - \ds\frac{2\mu\I}{\rho c|\bP|} + \ds\frac{\I\mu}{\rho c|\bP|}\left(\delta_{ij} - \ds\frac{P_jP_i}{|\bP|^2}\right) Z_{il}^{-1}\partial_{l}Q_j \\
+ &\ 
\ds\frac{1}{c}\left(c^2 - \ds\frac{\mu}{\rho}\right)\ds\frac{P_i}{|\bP|}\partial_{l}\left(\ds\frac{P_jP_i}{|\bP|^2}\right) Z_{jl}^{-1} + \ds\frac{1}{|\bP|}(\partial_{\bQ}c)_jP_k Z_{kl}^{-1}\partial_{l}Q_j  \\
+&\ \ds\frac{1}{|\bP|}P_j(\partial_{\bQ}c)_j Z_{kl}^{-1}\partial_{l}Q_j -\I|\bP|(\partial^2_{\bQ}c)_{jk} Z_{kl}^{-1}\partial_{l}Q_j,
\end{align*}
where the terms \eqref{eq:a_der} actually become zero since $\bP\cdot(|\bP|^2 - \bP\otimes\bP)=0$, and we have used the fact that
\begin{equation*}
\frac{P_i}{\abs{\bd{P}}}\partial_t \Bigl(\frac{P_i}{\abs{\bd{P}}}\Bigr)=\frac{1}{2}\biggl( \frac{P_i}{\abs{\bd{P}}}\partial_t \Bigl(\frac{P_i}{\abs{\bd{P}}}\Bigr) + \partial_t \Bigl(\frac{P_i}{\abs{\bd{P}}}\Bigr) \frac{P_i}{\abs{\bd{P}}}\biggr)=\partial_t\biggl(\frac{P_i}{\abs{\bd{P}}}\frac{P_i}{\abs{\bd{P}}}\biggr)=0,
\end{equation*}
which implies $\displaystyle \partial_t \Bigl(\frac{P_i}{\abs{\bd{P}}}\Bigr)P_i=0$.

\noindent Since $ Z = \partial_{\bz}(\bQ + \I \bP)$ by eq.~\eqref{eq:op_zZ_app}, $\partial_t  Z=\partial_t\partial_{\bz}\bQ + \I\partial_t\partial_{\bz}\bP$. Then eq.~\eqref{ew:flow} implies
\begin{equation}\label{eq:DzDt}
\begin{aligned}
\partial_t\partial_{\bz}\bQ
=&\ \partial_{\bz}\bQ \ds\frac{\partial_{\bQ}c\otimes \bP}{|\bP|} + c\partial_{\bz}\bP \left(\frac{\Id_3}{|\bP|} - \ds\frac{\bP \otimes \bP}{|\bP|^3}\right), \\
\partial_t\partial_{\bz}\bP
=&\ -|\bP|\partial_{\bz}\bQ \partial^2_{\bQ}c - \partial_{\bz}\bP\ds\frac{\bP\otimes \partial_{\bQ} c}{|\bP|}.
\end{aligned}
\end{equation}
Using eq.~\eqref{eq:DzDt} for further simplifications give
\begin{equation}
\begin{aligned}
2\ds\frac{\partial_t a_{\tp}}{a_{\tp}}
= &\ %
 \ds 2\frac{(\partial_{\bQ} c)_iP_i}{|\bP|} + \tr\left( Z^{-1} \partial_t Z\right) \\
 &\ + %
\ds\frac{1}{c}\left(c^2 - \ds\frac{\mu}{\rho}\right)\ds\frac{P_i}{|\bP|}\partial_{l}\left(\ds\frac{P_jP_i}{|\bP|^2}\right) Z_{jl}^{-1} - \ds\frac{1}{ c|\bP|}\left(c^2 - \ds\frac{\mu}{\rho} \right)\left(\delta_{ij} - \ds\frac{P_jP_i}{|\bP|^2}\right) Z_{il}^{-1}\partial_{l}P_j,
\end{aligned}
\end{equation}
where the last two terms can be grouped as
\begin{align}
\ds\frac{1}{c|\bP|}\left(c^2 - \ds\frac{\mu}{\rho}\right)\partial_{l}\left(P_i\Big(\ds\frac{P_jP_i}{|\bP|^2}-\delta_{ij}\Big)\right) Z_{jl}^{-1} =0,
\end{align}
which implies a clean ODE for $a_{\tp}$ as in eq.~\eqref{eq:amp_P},
\begin{equation}
\frac{\ud a_{\tp}}{\ud t} = a_\tp\left(\ds\frac{\partial_{\bQ}c\cdot \bP}{|\bP|} + \ds\frac{1}{2}\tr\left( Z^{-1} \frac{\ud Z}{\ud t}\right)\right).
\end{equation}
For the $\cO(1)$ equation, from the expansion and similar steps we arrive at, in component form;
\begin{eqnarray}
2c|P|\ddt{a_1}& = &2a_1c\partial c_j P_j |\bP| + a_1c|P|\big(Z^{-1}_{jk}\ddt{Z_{jk}}\big)  \\
 &- &\imath(a_{0})_{tt} - \imath a_0
				-\partial_k\left(4\Big(a_{0}\frac{P_i}{|P|}\Big)_tP_i\partial c_j \cZ^{-1}_{jk} \right) \frac{P_i}{|P|} \\
&-& 2(a_0)_t|P|\left(
				          3\ds\frac{\partial c_j\partial c_k}{c}  
				         - \partial^2c_{jk}
						\right)\cZ^{-1}_{jl}\partial_l(Q_k) \nonumber \\
& +& \partial_k\left(a_0\frac{P_i}{|P|}M_{jk}\cZ^{-1}_{jk}\right)\frac{P_i}{|P|} + a_0 N_{jk} \cZ^{-1}_{jl}\partial_l(Q_k)\nonumber\\
& + & 2\imath\partial_n\left(a_{0}\frac{P_i}{|P|}\Big(
							\frac{\partial c_l\partial c_j\partial c_k|P|^2}{c} 
							+ 2\imath \partial c_l\partial c_k P_j 
							- c\partial c_l \ds\frac{P_jP_k}{|P|^2}
					\Big)\cZ^{-1}_{jm}\partial_m(Q_k)\cZ^{-1}_{ln}\right)\frac{P_i}{|P|} \nonumber
\end{eqnarray}
Where 
\begin{equation}
M_{jk} = 2\partial c_j|P| 
					    + 2\imath c\ds\frac{P_j}{|P|}
								+\partial c_j \partial c_kP_k 
							  + c \partial^2c_{kj}P_k
							  +\imath\ds\frac{2c \partial c_k P_kP_j}{|P|^2} 
							  +\imath c \partial c_j
\end{equation}
\begin{multline}
N_{jk} = 4\frac{\partial c_j\partial c_k}{c}|P| 
							+ 4\imath \ds\frac{\partial c_kP_j}{|P|}
							- \partial^2c_{jk}\partial c_nP_n
							- \ds \imath c \partial^2c_{jk}\\
								+ \ds\frac{\partial c_j\partial c_k\partial c_nP_n}{c} 
								+2\partial c_j\partial^2c_{nk}P_n
								+ 4\imath\ds\frac{\partial c_j \partial c_n P_nP_k}{|P|^2} 
								+\imath \partial c_j \partial c_k
\end{multline}
We write this as
\begin{equation}
\frac{\ud a_{\tp,1}}{\ud t} = a_{\tp,1}\left(\ds\frac{\partial_{\bQ}c\cdot \bP}{|\bP|} + \ds\frac{1}{2}\tr\left( Z^{-1} \frac{\ud Z}{\ud t}\right)\right) + F_p(a_{\tp,0},\bQ,\bP,c_\tp).
\end{equation}
with $F_p$ containing first and second derivatives of its arguments, which are smooth for $|\bP| > 0$. 
\noindent Similarly, one can derive the prefactor equations for SV- and SH-waves by assuming $\bd{A}=a_{\tsv}\hat{\bd{N}}_{\tsv}+a_{\tsh}\hat{\bd{N}}_{\tsh}$ with $\hat{\bd{N}}_\tsv\perp \bd{P}$, $\hat{\bd{N}}_\tsh\perp \bd{P}$ and $\hat{\bd{N}}_\tsv\perp \hat{\bd{N}}_\tsh$ in eq.~\eqref{solform:1}. The calculations will be essentially the same as the prefactor equation for P-waves except that one will have the diabatic coupling terms of $\hat{\bd{N}}_{\tsv}$ and $\hat{\bd{N}}_{\tsh}$ as shown below,
\begin{equation}\label{eq:2}
\begin{aligned}
& \frac{\ud a_{\tsv}}{\ud t} = a_{\tsv}\biggl(\frac{\partial_{\bd{Q}_{\ts}}c_{\ts}\cdot \bd{P}_{\ts}}{|\bd{P}_{\ts}|} + \frac{1}{2}\tr\Bigl(Z_{\ts}^{-1}\frac{\ud Z_{\ts}}{\ud t}\Bigr)\biggr)-a_{\tsh}\biggl(\frac{\ud \hat{\bd{N}}_{\tsh}}{\ud t}\cdot\hat{\bd{N}}_\tsv+m_{\tsh\rightarrow\tsv}\biggr), \\
& \frac{\ud a_{\tsh}}{\ud t} = a_{\tsh}\biggl(\frac{\partial_{\bd{Q}_{\ts}}c_{\ts}\cdot \bd{P}_{\ts}}{|\bd{P}_{\ts}|} + \frac{1}{2}\tr\Bigl(Z_{\ts}^{-1}\frac{\ud Z_{\ts}}{\ud t}\Bigr)\biggr)-a_{\tsv}\biggl(\frac{\ud \hat{\bd{N}}_{\tsv}}{\ud t}\cdot\hat{\bd{N}}_\tsh+m_{\tsv\rightarrow\tsh}\biggr),
\end{aligned}
\end{equation}
where the interaction terms are given by
\begin{equation*}
\begin{aligned}
&m_{\tsh\rightarrow\tsv} = \I \frac{\lambda+\mu}{\rho c_\ts\abs{P_{\ts}}}\Bigl(\hat{\bd{N}}_{\tsv}\cdot(Z_\ts^{-1}\partial_{\bd{z}}\bd{Q}_s)\hat{\bd{N}}_{\tsh} - \hat{\bd{N}}_{\tsh}\cdot(Z_\ts^{-1}\partial_{\bd{z}}\bd{Q}_s)\hat{\bd{N}}_{\tsv} \Bigr), \\ & m_{\tsv\rightarrow\tsh}=-m_{\tsh\rightarrow\tsv}.
\end{aligned}
\end{equation*}
Also, note that by $\hat{\bd{N}}_\tsv\perp \hat{\bd{N}}_\tsh$, one has that $\displaystyle \frac{\ud \hat{\bd{N}}_{\tsh}}{\ud t}\cdot\hat{\bd{N}}_\tsv + \frac{\ud \hat{\bd{N}}_{\tsv}}{\ud t}\cdot\hat{\bd{N}}_\tsh =0$.

\noindent Next, we shall show that $m_{\tsh\rightarrow\tsv}=m_{\tsv\rightarrow\tsh}=0$ by proving that $Z_\ts^{-1}\partial_{\bd{z}}\bd{Q}_s$ is symmetric using the following argument. Eq.~\eqref{eq:symplectic} implies, with the subscript $\ts$ omitted for convenience,
\begin{align}
& \partial_{\bd{q}}\bd{Q}(\partial_{\bd{q}}\bd{P})^\TT - \partial_{\bd{q}}\bd{P}(\partial_{\bd{q}}\bd{Q})^\TT = 0_{3\times 3}, \label{eq:p1}\\
& \partial_{\bd{q}}\bd{Q}(\partial_{\bd{p}}\bd{P})^\TT - \partial_{\bd{q}}\bd{P}(\partial_{\bd{p}}\bd{Q})^\TT = \Id_3, \label{eq:p2} \\
& \partial_{\bd{p}}\bd{Q}(\partial_{\bd{q}}\bd{P})^\TT - \partial_{\bd{p}}\bd{P}(\partial_{\bd{q}}\bd{Q})^\TT = - \Id_3, \label{eq:p3} \\
& \partial_{\bd{p}}\bd{Q}(\partial_{\bd{p}}\bd{P})^\TT - \partial_{\bd{p}}\bd{P}(\partial_{\bd{p}}\bd{Q})^\TT = 0_{3\times 3}, \label{eq:p4}
\end{align}
where $0_{3\times 3}$ is $3$-by-$3$ zero matrix.

\noindent Eq.~\eqref{eq:p1}$-\I\times$ Eq.~\eqref{eq:p3} gives
\begin{equation}\label{eq:p5}
\partial_{\bd{z}}\bd{Q}(\partial_{\bd{q}}\bd{P})^\TT - \partial_{\bd{z}}\bd{P}(\partial_{\bd{q}}\bd{Q})^\TT = \I\Id_3.
\end{equation}

\noindent Eq.~\eqref{eq:p2}$-\I\times$ Eq.~\eqref{eq:p4} gives
\begin{equation}\label{eq:p6}
\partial_{\bd{z}}\bd{Q}(\partial_{\bd{p}}\bd{P})^\TT - \partial_{\bd{z}}\bd{P}(\partial_{\bd{p}}\bd{Q})^\TT = \Id_3.
\end{equation}

\noindent Eq.~\eqref{eq:p5}$-\I\times$ Eq.~\eqref{eq:p6} gives
\begin{equation*}
\partial_{\bd{z}}\bd{Q}(\partial_{\bd{z}}\bd{P})^\TT - \partial_{\bd{z}}\bd{P}(\partial_{\bd{z}}\bd{Q})^\TT = 0_{3\times 3}.
\end{equation*}
Combined with $\partial_{\bd{z}}\bd{Q}(\partial_{\bd{z}}\bd{Q})^\TT - \partial_{\bd{z}}\bd{Q}(\partial_{\bd{z}}\bd{Q})^\TT = 0_{3\times 3}$, one has
\begin{equation*}
\partial_{\bd{z}}QZ^\TT-Z(\partial_{\bd{z}}Q)^\TT=0,
\end{equation*}
which implies $Z^{-1}\partial_{\bd{z}}Q=(\partial_{\bd{z}}Q)^\TT(Z^\TT)^{-1}=(\partial_{\bd{z}}Q)^\TT(Z^{-1})^\TT=(Z^{-1}\partial_{\bd{z}}Q)^\TT$. Therefore, $Z^{-1}\partial_{\bd{z}}Q$ is symmetric, and then $m_{\tsh\rightarrow\tsv}=m_{\tsv\rightarrow\tsh}=0$, which brings eqs.~\eqref{eq:amp_SV} and \eqref{eq:amp_SH} by eq.~\eqref{eq:2}.

\subsection{Auxiliary Operators}\label{4.2}
In this section the asymptotic is expanded and done without the dynamics.  The necessary operators are derived and proof of convergence is shown. Starting from equation~\eqref{FGA:1}
\begin{multline}\label{weq:2.1}
\rho \Big(\bd{A}_{tt} +  \ds\frac{2}{\epsilon}\I\bd{A}_t\left((\bP_t- \I \bQ_t)\cdot (\bd{x}-\bQ) - \bP\cdot \bQ_t \right) \\
+  \ds\frac{\I}{\epsilon} \bd{A} \left( (\bP_{tt}- \I \bQ_{tt})\cdot (\bd{x}-\bQ) - (\bP_t- \I \bQ_t)\cdot \bQ_t\right) \\
-  \ds\frac{1}{\epsilon^2}\bd{A} \left(\big[(\bP_t- \I \bQ_t)\cdot (\bd{x}-\bQ)\big]^2 + (\bP\cdot \bQ_t)^2 - 2(\bP\cdot \bQ_t)\big[(\bP_t- \I \bQ_t)\cdot (\bd{x}-\bQ)\big] \right) \Big) \\
\sim   (\lambda + 2\mu)\left(\ds\frac{\I}{\epsilon}\nabla(\bd{A}\cdot (\bP + \I (\bd{x} - \bQ))) - \ds\frac{1}{\epsilon^2}(\bd{A} \cdot(\bP + \I (\bd{x} - \bQ)))(\bP + \I (\bd{x} - \bQ)) \right)  \\ 
-\mu\left(\ds\frac{\I}{\epsilon}  \Curl((\bP + \I (\bd{x} - \bQ)) \times \bd{A}) - \ds\frac{1}{\epsilon^2} \nabla (\bP + \I (\bd{x} - \bQ))\times((\bP + \I (\bd{x} - \bQ))\times \bd{A})\right). 
\end{multline}
\noindent Expanding $\rho(\bd{x})$ around $\bQ$ and truncating at order third order, Grouping in terms of $(\bx - \bQ)$, up to $\cO(3)$, and dropping terms that produce terms higher than $\cO(\epsilon)$, we can rewrite with simplifying, starting with $\cO((\bx-\bQ)^0)$
\begin{equation}\label{weq:6:1}
\rho M_0+ \ds\frac{\lambda}{\epsilon}\bd{A} + (\lambda + \mu)\ds\frac{1}{\epsilon^2}(\bd{A} \cdot\bP)\bP + \ds\frac{\mu}{\epsilon^2} (\bP\cdot\bP)\bd{A} \\
\end{equation}
The $\cO((\bx-\bQ)^1)$ term is
\begin{multline}\label{weq:6:2}%
\rho M_1\cdot (\bd{x}-\bQ) + (\lambda + \mu)\ds\frac{\I}{\epsilon^2} (\bd{A}\cdot(\bd{x} - \bQ))\bP + (\lambda + \mu)\ds\frac{\I}{\epsilon^2}(\bd{A} \cdot\bP)(\bd{x} - \bQ) \\
+ \ds\frac{2\mu\I}{\epsilon^2}(\bP\cdot(\bd{x} - \bQ))\bd{A}+ \partial_{\bQ}\rho \cdot(\bd{x}-\bQ)M_0
\end{multline}
The $\cO((\bx-\bQ)^2)$ term,
\begin{multline}\label{weq:6:3}%
\rho\left(-\ds\frac{1}{\epsilon^2}\bd{A}\big[(\bP_t- \I \bQ_t)\cdot (\bd{x}-\bQ)\big]^2\right) \\
- (\lambda + \mu)\ds\frac{1}{\epsilon^2}(\bd{A}\cdot(\bd{x} - \bQ)) (\bd{x} - \bQ) - \frac{\mu}{\epsilon^2}|\bd{x} - \bQ|^2\bd{A} \\
+ \partial_{\bQ}\rho \cdot(\bd{x}-\bQ)M_1\cdot (\bd{x}-\bQ) + \ds\frac{1}{2}(\bd{x}-\bQ)\cdot \partial_{\bQ\bQ}^2\rho(\bd{x}-\bQ)M_0 
\end{multline}
Finally the $\cO((\bx-\bQ)^3)$ term,
\begin{equation}\label{weq:6:4}%
(\bd{x}-\bQ)\cdot \partial_{\bQ\bQ}^2\rho(\bd{x}-\bQ)\left(\ds\frac{\bd{A}}{\epsilon^2}(\bP\cdot \bQ_t)\big[(\bP_t- \I \bQ_t)\cdot (\bd{x}-\bQ)\big] 
\right)
\end{equation}
With $M_0$ and $M_1$ defined as
\begin{multline}
M_0 = \bd{A}_{tt} - \bd{A}_{t}\frac{2\I}{\epsilon}\bP\cdot \bQ_t - \bd{A}\frac{\I}{\epsilon}(\bP_t- \I \bQ_t)\cdot \bQ_t -  \ds\frac{1}{\epsilon^2}\bd{A}(\bP\cdot \bQ_t)^2 \\
M_1 = \ds\frac{2}{\epsilon}\I\bd{A}_t(\bP_t- \I \bQ_t) + \ds\frac{\I}{\epsilon}\bd{A}  (\bP_{tt}- \I \bQ_{tt})+ \frac{2}{\epsilon}\bd{A}(\bP\cdot \bQ_t)\big[(\bP_t- \I \bQ_t)\big]
\end{multline}
Now applying lemma~\ref{lemma:2}, first applying integration by parts to $M_1\cdot (\bx-\bQ)$ to clarify notation
\begin{equation}
M_1\cdot (\bx-\bQ) = -\epsilon \partial_l (M_{1,ij} Z_{jl}^{-1}) = -\epsilon \partial_\bz:(Z^{-\TT} M_1) 
\end{equation}
where $:$ denotes a contraction of the indicates. Considering each order separately, first the $\cO((\bx-\bQ)^1)$
\begin{multline}
-\epsilon \rho\partial_\bz :(\rho Z^{-\TT} M_1) - (\lambda + \mu)\ds\frac{\I}{\epsilon}\partial_\bz:(Z^{-\TT}\bd{A}\bP) - (\lambda + \mu)\ds\frac{\I}{\epsilon}\partial_\bz:(\bd{A} \cdot\bP Z^{-\TT}) \\
- \ds\frac{2\mu\I}{\epsilon}\partial_\bz:(Z^{-\TT}\bP\bd{A})- \epsilon\partial_\bz:(Z^{-\TT}\partial_{\bQ}\rho M_0) 
\end{multline}
The $\cO((\bx-\bQ)^2)$ term
\begin{multline}
\ds\frac{-\rho}{\epsilon}\bd{A}\tr\left((\bP_t- \I \bQ_t)\otimes(\bP_t- \I \bQ_t)Z^{-1}\partial_{\bz} \bQ\right) \\
+ \partial_{\bz}:(\partial_{\bz}:(\rho\bd{A}(\bP_t- \I \bQ_t)\otimes(\bP_t- \I \bQ_t)Z^{-1})Z^{-1})\\
- (\lambda + \mu)\ds\frac{1}{\epsilon}Z^{-1}:(\partial_{\bz}\bQ \bd{A})  - (\lambda + \mu)\partial_{\bz}:(\partial_{\bz}:(\partial_{\bz}(\bQ A)Z^{-1})Z^{-1}) \\
-\frac{\mu}{\epsilon}\bd{A}\tr(Z^{-1}\partial_{\bz}) -\mu\partial_{\bz}:(\partial_{\bz}:(Z^{-1})\bd{A}Z^{-1}) \\
+ \epsilon\partial_{\bQ}\rho M_1\tr(Z^{-1}\partial_{\bz})  + \epsilon\ds\frac{1}{2}\partial_{\bQ\bQ}^2\rho M_0\tr(Z^{-1}\partial_{\bz}) \\
+ \epsilon^2\partial_{\bz}:(\partial_{\bz}:(Z^{-1}\partial_{\bQ\bQ}^2\rho)M_0Z^{-1}) 
\end{multline}
The $\cO((\bx-\bQ)^3)$ term only up to first order terms
\begin{equation}
\partial_{\bz}:\left(\tr(\partial_{\bQ\bQ}^2\rho Z^{-t}\partial_{\bz}\bQ)(\bP\cdot \bQ_t)\big[(\bP_t- \I \bQ_t)\big] \bd{A}\right)
\end{equation}
Expanding $M_0$ and $M_1$ when needed and grouping in terms of $\epsilon$
\begin{align*}
\ds\frac{1}{\epsilon^2}& \Big[-\rho\bd{A}(\bP\cdot \bQ_t)^2 + (\lambda + \mu)(\bd{A} \cdot\bP)\bP + \mu (\bP\cdot\bP)\bd{A}\Big] \\
+ \ds\frac{1}{\epsilon}&  \Big[\lambda\bd{A}+ 2\I\bd{A}_{t}\bP\cdot \bQ_t -\I\rho\bd{A}(\bP_t- \I \bQ_t)\cdot \bQ_t \\
&- \I(\lambda + \mu)\partial_\bz:(Z^{-\TT}\bd{A}\bP)  - \I(\lambda + \mu)\partial_\bz:(\bd{A} \cdot\bP Z^{-\TT}) \\
&+ \partial_\bz:(Z^{-\TT}\partial_{\bQ}\rho \bd{A}(\bP\cdot \bQ_t)^2) - 2\mu\I\partial_\bz:(Z^{-\TT}\bP\bd{A}) \\
&-\rho\bd{A}\tr\left((\bP_t- \I \bQ_t)\otimes(\bP_t- \I \bQ_t)Z^{-1}\partial_{\bz} \bQ\right) \\
&- (\lambda + \mu)Z^{-1}:(\partial_{\bz}\bQ \bd{A}) -\mu\bd{A}\tr(Z^{-1}\partial_{\bz}) \\
&-\ds\frac{1}{2}\partial_{\bQ\bQ}^2\rho \bd{A}(\bP\cdot \bQ_t)^2\tr(Z^{-1}\partial_{\bz}) \Big] \\
+\epsilon^0& \Big[\rho\bd{A}_{tt}- \rho\bd{A}_{t}\bP\cdot \bQ_t  -\epsilon \rho\partial_\bz :(\rho Z^{-\TT} M_1)  \\
&+ \partial_\bz:(Z^{-\TT}\partial_{\bQ}\rho(\bP_t- \I \bQ_t)\cdot \bQ_t)\bd{A}) \\
&+ \partial_{\bz}:(\partial_{\bz}:(\rho\bd{A}(\bP_t- \I \bQ_t)\otimes(\bP_t- \I \bQ_t)Z^{-1})Z^{-1}) \\
&- (\lambda + \mu)\partial_{\bz}:(\partial_{\bz}:(\partial_{\bz}(\bQ A)Z^{-1})Z^{-1}) \\
&-\mu\partial_{\bz}:(\partial_{\bz}:(Z^{-1})\bd{A}Z^{-1}) + \epsilon\partial_{\bQ}\rho M_1\tr(Z^{-1}\partial_{\bz}) \\
&- \I\ds\frac{1}{2}\partial_{\bQ\bQ}^2\rho\bd{A}(\bP_t- \I \bQ_t)\cdot \bQ_t\tr(Z^{-1}\partial_{\bz}) \\
&-\partial_{\bz}:(\partial_{\bz}:(Z^{-1}\partial_{\bQ\bQ}^2\rho)\bd{A}(\bP\cdot \bQ_t)^2Z^{-1}) \\
&+\partial_{\bz}:\left(\tr(\partial_{\bQ\bQ}^2\rho Z^{-t}\partial_{\bz}\bQ)(\bP\cdot \bQ_t)\big[(\bP_t- \I \bQ_t)\big] \bd{A}\right) \Big]
\end{align*}
Now define the operators $\cL_0, \cL_1, \cL_2$ acting on $\bd{A}$
\begin{equation}
\cL_0(\bd{A}):=-\rho\bd{A}(\bP\cdot \bQ_t)^2 + (\lambda + \mu)(\bd{A} \cdot\bP)\bP + \mu (\bP\cdot\bP)\bd{A}
\end{equation}
\begin{multline}
\cL_1(\bd{A}) := \lambda\bd{A}+ 2\I\bd{A}_{t}\bP\cdot \bQ_t -\I\rho\bd{A}(\bP_t- \I \bQ_t)\cdot \bQ_t \\
 - \I(\lambda + \mu)\partial_\bz:(Z^{-\TT}\bd{A}\bP)  - \I(\lambda + \mu)\partial_\bz:(\bd{A} \cdot\bP Z^{-\TT}) \\
+ \partial_\bz:(Z^{-\TT}\partial_{\bQ}\rho \bd{A}(\bP\cdot \bQ_t)^2) - 2\mu\I\partial_\bz:(Z^{-\TT}\bP\bd{A}) \\
-\rho\bd{A}\tr\left((\bP_t- \I \bQ_t)\otimes(\bP_t- \I \bQ_t)Z^{-1}\partial_{\bz} \bQ\right)\\
 - (\lambda + \mu)Z^{-1}:(\partial_{\bz}\bQ \bd{A}) -\mu\bd{A}\tr(Z^{-1}\partial_{\bz}) \\
-\ds\frac{1}{2}\partial_{\bQ\bQ}^2\rho \bd{A}(\bP\cdot \bQ_t)^2\tr(Z^{-1}\partial_{\bz})
\end{multline}
\begin{multline}
\cL_2(\bd{A}) := \rho\bd{A}_{tt}- \rho\bd{A}_{t}\bP\cdot \bQ_t  -\epsilon \rho\partial_\bz :(\rho Z^{-\TT} M_1) \\
 + \partial_\bz:(Z^{-\TT}\partial_{\bQ}\rho(\bP_t- \I \bQ_t)\cdot \bQ_t)\bd{A}) \\
+ \partial_{\bz}:(\partial_{\bz}:(\rho\bd{A}(\bP_t- \I \bQ_t)\otimes(\bP_t- \I \bQ_t)Z^{-1})Z^{-1})) \\
- (\lambda + \mu)\partial_{\bz}:(\partial_{\bz}:(\partial_{\bz}(\bQ A)Z^{-1})Z^{-1}) \\
-\mu\partial_{\bz}:(\partial_{\bz}:(Z^{-1})\bd{A}Z^{-1}) + \epsilon\partial_{\bQ}\rho M_1\tr(Z^{-1}\partial_{\bz}) \\
- \I\ds\frac{1}{2}\partial_{\bQ\bQ}^2\rho\bd{A}(\bP_t- \I \bQ_t)\cdot \bQ_t\tr(Z^{-1}\partial_{\bz}) \\
-\partial_{\bz}:(\partial_{\bz}:(Z^{-1}\partial_{\bQ\bQ}^2\rho)\bd{A}(\bP\cdot \bQ_t)^2Z^{-1}) \\
+\partial_{\bz}:\left(\tr(\partial_{\bQ\bQ}^2\rho Z^{-t}\partial_{\bz}\bQ)(\bP\cdot \bQ_t)\big[(\bP_t- \I \bQ_t)\big] \bd{A}\right) 
\end{multline}
Now $(\partial_t^2 - \cL) \bu_{\FGA}$ can be written as
\begin{multline}
(\partial_t^2 - \cL)\bu_{\FGA} = (2\pi\epsilon)^{-3/2}\sum_{n}\int_{\mR^{3d}} (\epsilon^{-2}\cL_{n,0}(\ba_{n,0} + \epsilon\ba_{n,1}) \\
\epsilon^{-1}\cL_{n,1}(\ba_{n,0} + \epsilon\ba_{n,1}) + \cL_{n,2}(\ba_{n,0} + \epsilon\ba_{n,1}))G_n^\epsilon\ud \by\ud\bq\ud\bp
\end{multline}
Substituting the dynamics for $\cL_{n,0}$ reveals that that $\cL_{n,0}(\ba_{n,0}) = 0$. Looking at the $\cO(1/\epsilon)$ term and equating to zero gives
\begin{equation}\label{eq:order-1}
\cL_{n,1}(\ba_{n,0}) = -\cL_{n,0}(\ba_{n,1}) 
\end{equation}
Now $\cL_{n,0}$ is defined as
\begin{equation}
\cL_{n,0} = \big(\mu |\bP_n|^2-\rho(\bP_n\cdot\partial_t\bQ_n)^2\big)\Id_3 + (\lambda + \mu)\bP_n\otimes \bP_n
\end{equation}
which is a symmetric matrix with eigenvalues
\begin{align}
\beta_{n,1} &= (\lambda + 2\mu)|\bP_n|^2 -\rho|\bP\cdot\partial_t\bQ_n|^2 \\
 \beta_{n,2} &= \mu|\bP|^2  -\rho|\bP_n\cdot\partial_t\bQ_n|^2  \\
\beta_{n,3} &=\mu|\bP|^2 -\rho|\bP_n\cdot\partial_t\bQ_n|^2
\end{align}
the corresponding eigenvectors are,
\begin{align}
\bP_n &= (p_{n,1},p_{n,1},p_{n,1})\\
\bd{d}_{n,1} &= (-p_{n,2},p_{n,1},0) \\
\bd{d}_{n,2} &= (-p_{n,3},0,p_{n,1})  .
\end{align}
For the P-wave, $n = \tp$, taking inner product of with the eigenvectors;
\begin{equation}
\ip{\bP_\tp}{\cL_{\tp,0}(\ba_{n,1})}=-\ip{\bP_\tp}{\cL_{\tp,1}(\ba_{\tp,0})} 
\end{equation}
This gives,
\begin{multline}
\ip{\cL_{\tp,0}^*(\bP_\tp)}{\ba_{n,1}}=\ip{\cL_{\tp,0}(\bP)}{\ba_{\tp,1}} \\
= ((\lambda + 2\mu)|\bP_\tp|^2 -\rho|\bP_\tp\cdot\partial_t\bQ_\tp|^2)\ip{\bP_\tp}{\ba_{\tp,1}} = 0
\end{multline}
after plugging in the dynamics so we can recover the equation~\eqref{eq:amp_P}, as 
\begin{equation}
\ip{\bP_\tp}{\cL_{\tp,1}(\ba_{\tp,0})} = 0.
\end{equation}
Considering $\bd{d}_{1,2}$,
\begin{equation}
\ip{\bd{d}_{1,2}}{\cL_{\tp,0}(\ba_{n,1})}=-\ip{\bd{d}_{1,2}}{\cL_{\tp,1}(\ba_{\tp,0})}.
\end{equation}
Then
\begin{equation}
\ip{\cL_{\tp,0}^*(\bd{d}_{1,2})}{\ba_{\tp,1}}=\ip{\cL_{\tp,0}(\bd{d}_{1,2})}{\ba_{\tp,1}} =  (\mu|\bP|^2  -\rho|\bP\cdot\partial_t\bQ|^2)\ip{\bd{d}_{1,2}}{\ba_{\tp,1}}
\end{equation}
And so plugging in the flow
\begin{equation}
\ip{\bd{d}_{1,2}}{\ba_{\tp,1}} = \ds\frac{1}{\rho(c_\ts^2 - c_\tp^2)|\bP|^2}\ip{\bd{d}_{1,2}}{\cL_{\tp,1}(\ba_{\tp,0})}
\end{equation}
Define the pseudo-inverse, for $\bv\in \cS(\mR^3)$
\begin{equation}
\cL^{-1}_{\tp,0}(\bv) = \ds\frac{1}{\rho(c_\ts^2 - c_\tp^2)|\bP|^2}\left(\ip{\hat{\bd{d}}_1}{\bv}\hat{\bd{d}}_1 + \ip{\hat{\bd{d}}_2}{\bv}\hat{\bd{d}}_2\right)
\end{equation}
and define
\begin{equation}
\ba_{\tp}^\perp \bv = \cL^{-1}_{\tp,0}\big((\Id - \Pi_\tp) \cL_{\tp,1}(\bv) \big)
\end{equation}
Where $\Pi_\tp$ is projection onto $\bP_\tp$. \\
For the S-wave, $n = \tsv,\tsh$. From~\eqref{eq:order-1} we have
\begin{equation}\label{eq:order-1s}
\cL_{\ts,1}(\ba_{\ts}) = -\cL_{\ts,0}(\ba_{\tsh,0} + \ba_{\tsv,0}) 
\end{equation}
Let $\bd{d}_{s,1} = \hat{\bd{N}}_\tsh$, taking inner product with~\eqref{eq:order-1s} gives
\begin{equation}
\ip{\cL_{\ts,0}(\hat{\bd{N}}_{\tsv})}{\ba_{\ts,1}} = (\mu|\bP_{\ts}|^2 -\rho|\bP_{\ts}\cdot\partial_t\bQ_{\ts}|^2)\ip{\hat{\bd{N}}_{\tsv}}{\ba_{\ts,1}} = 0,
\end{equation}
which is zero when the dynamics are substituted.  From this we can get 
\begin{equation}
\ip{\hat{\bd{N}}_{\tsv}}{\cL_{\ts,1}\ba_{\tsv,0}} =  -\ip{\hat{\bd{N}}_{\tsv}}{\cL_{\ts,1}\ba_{\tsh,0}}
\end{equation}
Which gives us equation~\ref{eq:amp_SV}, equation ~\ref{eq:amp_SH} can be recovered in a similar manor. Taking inner product with $\bP_\ts$ of~\eqref{eq:order-1s} leads to,
\begin{equation}
\ip{\bP}{\ba_{\ts,1}} = -\ds\frac{1}{(\lambda + \mu)|\bP|^2}\ip{\bP}{\cL_{\ts,1}(\ba_{\tsv,0} + \ba_{\tsh,0})}
\end{equation}
Define the pseudo-inverse for $\bv\in\cS(\mR^3)$
\begin{equation}
\cL^{-1}_{\ts,0}(\bv) = -\ds\frac{1}{(\lambda + \mu)|\bP_\ts|^2}\ip{\hat{\bP_{\ts}}}{\bv}\hat{\bP_{\ts}}
\end{equation}
and define
\begin{equation}
\ba_{\ts}^\perp \bv = \cL^{-1}_{\ts,0}\big((\Id - \Pi_\ts) \cL_{\ts,1}(\bv) \big).
\end{equation}
with $\Pi_\ts$ a projection onto the span of $\bd{d}_{\ts,1}$ and $\bd{d}_{\ts,2}$.
\section{Error Estimates and Main Result}\label{sec:5}
\begin{definition}Define the scaled semi-norm;
\begin{equation}\label{eq:enorm} 
\|\bu(t,\cdot)\|_{\rE} = \epsilon(\|\partial_t\bu(t,\cdot)\|_{\rL^2} + \|\Div\bu(t,\cdot)\|_{\rL^2} + \|\Curl\bu(t,\cdot)\|_{\rL^2})
\end{equation}
\end{definition}
\begin{proposition}\label{eq:a_bd}
 Let $\ba_\ts = a_\tsv\alpha_{\tsv}\hat{\bd{N}}_{\tsv} + a_\tsh\alpha_{\tsh}\hat{\bd{N}}_{\tsh}$ and $\ba_\tp = a_\tp\alpha_{\tp}{\hat{\bd{N}}}_{\tp}$. The terms $\ba_\tp$, $\ba_\ts$ are bounded in the $\rL^2$ sense; furthermore,
\begin{equation}
\| u_{\FGA,1}- u_{\FGA,0}\|_{\rE} \leq \epsilon C_{T,\delta}
\end{equation}
\end{proposition}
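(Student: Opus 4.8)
The plan is to reduce both assertions to the Fourier integral operator bound of Proposition~\ref{eq:fio_bd}, used together with the Jacobian estimates of Lemma~\ref{lemma:1} and the integration-by-parts identities~\eqref{lemma:2}; throughout one works with the filtered amplitudes (multiplied by $\chi_\delta$), the unfiltered remainder being $\cO(\epsilon^\infty)$ in $\rL^2$ by the asymptotically-high-frequency hypothesis on $\{\bu_0^\epsilon\}$. For the first assertion I would start from the factorisation $\ba_n=a_n\,\alpha_n^\epsilon\,\hat{\bd{N}}_n$ and observe that the $\by$-dependence sits only in $\alpha_n^\epsilon$, and there only through $(\bu_0^\epsilon(\by)c_n|\bp|\pm\I\epsilon\bu_1^\epsilon(\by))\cdot\hat{\bd{n}}_n$. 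On $K_\delta$ the vectors $\hat{\bd{N}}_n$ have unit length and $1/(2c_n|\bp|^3)$ is bounded (since $|\bp|\ge\delta$ and $c_n$ is bounded below), while $a_n$ solves the linear transport equations~\eqref{eq:amp_P}--\eqref{eq:amp_SH}, whose coefficients $\partial_{\bQ}c_n\cdot\bP_n/|\bP_n|$, $\frac{1}{2}\tr\big(Z_n^{-1}\,\frac{\ud Z_n}{\ud t}\big)$ and the rotation term $\frac{\ud\hat{\bd{N}}_\tsh}{\ud t}\cdot\hat{\bd{N}}_\tsv$ are bounded on $[0,T]\times K_\delta$ by Lemma~\ref{lemma:1} and the smoothness of the flow; Gronwall then gives $|a_n(t,\bq,\bp)|\le C_{T,\delta}$. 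Hence $|\ba_n(t,\by,\bq,\bp)|\le C_{T,\delta}(|\bu_0^\epsilon(\by)|+\epsilon|\bu_1^\epsilon(\by)|)$ on the (bounded) support of $\chi_\delta$ and vanishes outside, so integrating in $(\by,\bq,\bp)$ gives $\|\ba_n\|_{\rL^2(\mR^{3d})}\le C_{T,\delta}(\|\bu_0^\epsilon\|_{\rL^2}+\epsilon\|\bu_1^\epsilon\|_{\rL^2})\le C_{T,\delta}M$, using $\cO(\epsilon\bu_1^\epsilon)=\cO(\bu_0^\epsilon)$ and $\|\bu_0^\epsilon\|_{H_0^1}<M$; this yields the $\rL^2$ boundedness of $\ba_\tp$ and $\ba_\ts$.

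For the estimate on $u_{\FGA,1}-u_{\FGA,0}$, I would write $u_{\FGA,1}-u_{\FGA,0}=\epsilon\,V$ with $V=(2\pi\epsilon)^{-3d/2}\int_{\mR^{3d}}\sum_n(\ba_{n,1}+\ba_{n,1}^\perp)G_n^\epsilon\,\ud\by\ud\bq\ud\bp$. The key structural observation is that $\ba_{n,1}$ and $\ba_{n,1}^\perp$ carry the variable $\by$ only through $\bu_0^\epsilon(\by)$ and $\bu_1^\epsilon(\by)$, undifferentiated: $\ba_{n,1}$ inherits the same $\alpha_n^\epsilon$-dependence as $\ba_{n,0}$, while $\ba_{n,1}^\perp=\cL_{n,0}^{-1}\big((\Id-\Pi_n)\cL_{n,1}(\ba_{n,0})\big)$ and the operator $\cL_{n,1}$ differentiates its argument only in $t$ and in the phase variables $\bz=(\bq,\bp)$, never in $\by$. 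Thus $V=\sum_n\big(\cI_n^\epsilon(t,\tilde M_n^{(0)})\bu_0^\epsilon+\I\epsilon\,\cI_n^\epsilon(t,\tilde M_n^{(1)})\bu_1^\epsilon\big)$ up to an $\cO(\epsilon^\infty)$ cutoff error, for $3\times3$ matrix symbols $\tilde M_n^{(i)}(t,\bq,\bp)$ that are smooth and bounded on $K_\delta$. Establishing that boundedness is the substantive part: $a_{n,1}$ is bounded because its ODE is linear with bounded coefficients and source $F_p$ assembled from first and second $\bz$-derivatives of $a_{n,0}$, which are controlled by differentiating the $a_{n,0}$-equation and running Gronwall again; $\cL_{n,1}(\ba_{n,0})$ is bounded via Lemma~\ref{lemma:1} (control of $Z_n^{-1}$ and its $\bz$-derivatives); and the pseudo-inverse $\cL_{n,0}^{-1}$ is bounded because its denominators $\rho(c_\ts^2-c_\tp^2)|\bP|^2$ and $(\lambda+\mu)|\bP|^2$ stay bounded away from $0$ on $K_\delta$ (here $|\bp|\ge\delta$ together with the non-degeneracy $c_\tp\neq c_\ts$, $\lambda+\mu>0$). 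By Proposition~\ref{eq:fio_bd}, $\|V\|_{\rL^2}\le C_{T,\delta}(\|\bu_0^\epsilon\|_{\rL^2}+\epsilon\|\bu_1^\epsilon\|_{\rL^2})\le C_{T,\delta}M$.

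It then remains to bound the three semi-norm pieces of $V$. Applying $\partial_t$, $\Div$ or $\Curl$ to the kernel $G_n^\epsilon=e^{\I\phi_n/\epsilon}$ in $V$ (differentiation of the symbols $\tilde M_n^{(i)}$ only contributes $\cO(1)$ operators) produces a factor $\epsilon^{-1}\partial_t\phi_n$ or $\epsilon^{-1}\nabla_\bx\phi_n$, each of which is a linear combination of $(\bx-\bQ_n)$ --- for which identity~\eqref{lemma:2} trades the $\epsilon^{-1}$ for a new $\cO(1)$ symbol containing $Z_n^{-1}$ and one further $\bz$-derivative (bounded by Lemma~\ref{lemma:1}) --- and of the $\bx$-independent terms $\bP_n$ and $\bP_n\cdot\partial_t\bQ_n=\pm c_n|\bP_n|$, which are bounded on $K_\delta$. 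Hence each of $\partial_t(\cI_n^\epsilon(t,M)\bu)$, $\Div(\cI_n^\epsilon(t,M)\bu)$, $\Curl(\cI_n^\epsilon(t,M)\bu)$ is at most $\epsilon^{-1}$ times a finite sum of Fourier integral operators with symbols bounded by $C_{T,\delta}\|M\|_{\rL^\infty}$, so their $\rL^2$ norms are at most $\epsilon^{-1}C_{T,\delta}\|M\|_{\rL^\infty}\|\bu\|_{\rL^2}$ by Proposition~\ref{eq:fio_bd}. Summing over $n$, $\|\partial_tV\|_{\rL^2}+\|\Div V\|_{\rL^2}+\|\Curl V\|_{\rL^2}\le\epsilon^{-1}C_{T,\delta}M$; since $u_{\FGA,1}-u_{\FGA,0}=\epsilon V$ and the semi-norm $\|\cdot\|_{\rE}$ defined in~\eqref{eq:enorm} carries an extra power of $\epsilon$, one concludes $\|u_{\FGA,1}-u_{\FGA,0}\|_{\rE}=\epsilon^2\big(\|\partial_tV\|_{\rL^2}+\|\Div V\|_{\rL^2}+\|\Curl V\|_{\rL^2}\big)\le\epsilon\,C_{T,\delta}$, the constant absorbing the fixed bound $M$.

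The step I expect to be the main obstacle is exactly this boundedness of the correction symbols $\tilde M_n^{(i)}$ on $K_\delta$: one must propagate $\bz$-regularity of $a_{n,0}$ through two differentiations of its transport ODE with all constants kept uniform, control every $\bz$-derivative appearing in $\cL_{n,1}$ through Lemma~\ref{lemma:1}, and verify the spectral-gap lower bound that makes $\cL_{n,0}^{-1}$ bounded --- the place where the non-degeneracy of the P- and S-wave speeds genuinely enters. Everything else is the standard FGA machinery (Proposition~\ref{eq:fio_bd} plus the integration-by-parts lemma), essentially as developed in~\cite{LuYa:MMS}.
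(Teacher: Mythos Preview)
Your proposal is correct and follows essentially the same route as the paper: reduce to the FIO bound of Proposition~\ref{eq:fio_bd}, control the scalar amplitudes $a_{n,0}$ and $a_{n,1}$ by Gronwall applied to their transport ODEs (with coefficients bounded on $K_\delta$ via Lemma~\ref{lemma:1}), and bound $\ba_{n,1}^\perp$ through the spectral gap in $\cL_{n,0}^{-1}$. The only noteworthy differences are cosmetic: the paper handles the coupled SV--SH system by explicitly computing the eigenvalues of the $2\times2$ coefficient matrix, whereas you apply Gronwall directly to the coupled system with bounded entries (both work); and you spell out the mechanism by which the $\rE$-norm derivatives hitting $G_n^\epsilon$ produce an $\epsilon^{-1}$ that is exactly cancelled by the built-in $\epsilon$ in~\eqref{eq:enorm}, a step the paper compresses into a single sentence.
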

\begin{proof} 
First we remark
\begin{equation}
\|\ba_n(t,\cdot)\|_{\rL^2} \leq \|\alpha_{n}\|_{\rL^2}\|a_n(t,\cdot)\|_{\rL^\infty} \text{ and } \|\ba_n(t,\cdot)\|_{\rL^\infty} \lesssim \|a_n(t,\cdot)\|_{\rL^\infty}
\end{equation}
From the definitions we have an immediate bound
\begin{align}\label{fga:P_0-P_1}
\|u_{\FGA,1}(t,\cdot) - u_{\FGA,0}(t,\cdot)\|_{\rE} \leq  (2\pi\epsilon)^{-3d/2}\sum_{n}\epsilon\|\int_{\mR^{3d}}\ba_{n,1}^\perp + \ba_{n,1}G_n\ud \by\ud\bq\ud\bp\|_{\rE}
\end{align}
Applying the derivatives with proposition~\ref{eq:fio_bd}, we have the bound
\begin{equation}
\|u_{\FGA,1}(t,\cdot) - u_{\FGA,0}(t,\cdot)\|_{\rE} \leq \epsilon C\sum_{n}\|\ba_{n,1}^\perp(t,\cdot) + \ba_{n,1}(t,\cdot)\|_{\rL^\infty}
\end{equation}
The estimate of~\eqref{fga:P_0-P_1} then follows directly from Proposition 3.7 in~\cite{LuYa:CPAM}. We need to bound the prefactor terms, we note that on the compact set $K_{\delta}$ the bound for the prefactor terms fall from Lemma 5.4 in~\cite{LuYa:CPAM}, We go through several of the bounds here, starting with the P-wave and dropping the subscripts as the calculations are the same and setting $\bP = \bP_p$, $\bQ = \bQ_p$,
\begin{equation}\label{eq:a0_1}
\partial_t a_{\tp,0} = a_{\tp,0}\left(\ds\frac{\partial_{\bQ} c_{\tp}\cdot \bP}{|\bP|} + \ds\frac{1}{2}\tr\left(Z^{-1}\partial_t Z\right)\right)
\end{equation}
\begin{equation}\label{eq:a1_1}
\partial_t a_{\tp,1} = a_{\tp,1}\left(\ds\frac{\partial_{\bQ} c_{\tp}\cdot \bP}{|\bP|} + \ds\frac{1}{2}\tr\left(Z^{-1}\partial_t Z\right)\right) + F_p(a_{\tp,0}, \partial_{\bz} a_{\tp,0},\bQ,\bP,c_\tp)
\end{equation}
With $F$ being a continuous function in its arguments for $P,Q\in K_{\delta_T}$. Equation~\eqref{eq:a0_1} immediately implies;
\begin{equation}\label{eq:a0_2}
\partial_t |a_{\tp,0}| \leq |a_{\tp,0}|\left|\ds\frac{\partial_{\bQ} c_{\tp}\cdot \bP}{|\bP|} + \ds\frac{1}{2}\tr\left(Z^{-1}\partial_t Z\right)\right|
\end{equation}
an application of Gronwalls gives with a computation of $\partial_{\bz}$, 
\begin{equation}
\sup_{t\in[0,T]} \Lambda_{1,K_{\delta/2}}\big( a_{\tp,0}(t,\bq,\bp)) \leq  C_{\delta,T}
\end{equation}
To bound Equation~\eqref{eq:a1_1} $\partial_{\bz} a_{\tp,0}$ needs to be bounded, but Equation~\eqref{eq:a0_2} shows.
\begin{equation}
\partial_t |\partial_{\bz} a_{\tp,0}| \leq |\partial_{\bz} a_{\tp,0}|\left|\partial_{\bz}\left(\ds\frac{\partial_{\bQ} c_{\tp}\cdot \bP}{|\bP|} + \ds\frac{1}{2}\tr\left(Z^{-1}\partial_t Z\right)\right)\right|
\end{equation}
Then Gronwalls gives, 
\begin{equation}
\sup_{t\in[0,T]} \Lambda_{1,K_{\delta/2}}\big( a_{\tp,0}(t,\bq,\bp)\big) \leq  C_{\delta,T}
\end{equation}
The function $F(a_{\tp,0}, \partial_{\bz} a_{\tp,0},\bQ,\bP,c_\tp)$ is differentiable with differentiable arguments on the compact set $K_{\delta/2}$. Combining these we do have
\begin{equation}
\partial_t|a_{\tp,1}| \leq C_{T,\delta}
\end{equation}
Combining these we have that $\|a_\tp\bP\|_{\rL^2}$ is bounded on $[0,T]$, $K_{\delta/2}$. For the S-wave terms, again using the short notation $\bP = \bP_s$, $\bQ = \bQ_s$ and dropping the brach subscript, we can write the system (\eqref{eq:amp_SV},~\eqref{eq:amp_SH}) as
\begin{align}\label{fga_amp:sys}
\ddt{}\left(\begin{array}{c}
\ba^{sv}_{\pm} \\
\ba^{sh}_{\pm} 
\end{array}\right) &= 
\frac{1}{2}\left(\begin{array}{c c}
h_{\pm} & m_{\pm} \\
-m_{\pm} & h_{\pm} \\
\end{array}\right)
\left(\begin{array}{c}
\ba^{sv}_{\pm} \\
\ba^{sh}_{\pm} 
\end{array}\right)
\end{align}
where $m_{\pm} = \partial_t\hat{\bd{N}}_{\tsh}\cdot \hat{\bd{N}}_{\tsv}$ and
\begin{align}
h_{\pm} = 2\partial_{\bQ_{\ts,\pm}}c_s\cdot \hat{\bd{N}}_{\tsh} + \ba^s\tr\left(\cZ^{-1}\partial_t \cZ^s\right)
\end{align}
Denote $M$ as the matrix in eq.~\eqref{fga_amp:sys}, and $\ba = (\ba^{sv},\ba^{sh})^T$.  Then the system can be recast as
\begin{equation}
\ddt{\ba} = M(t)\ba
\end{equation}
Solving for the eigenvalues:
\begin{equation}\label{eq:eigs_s}
\lambda_{\tsh,\tsv}(t) = -\partial_{\bd{Q}}c_{\ts}\cdot\hat{\bd{N}}_{\tsh,\tsv} - \frac{1}{2}\tr\Bigl(Z_{\ts}^{-1}\frac{\ud Z_{\ts}}{\ud t}\Bigr) \mp \I\frac{\ud \hat{\bd{N}}_{\tsv,\tsv}}{\ud t}\cdot\hat{\bd{N}}_{\tsh,\tsv}
\end{equation} 
To see that these are bounded simply note that form a smooth $\{\hat{N}_{\tp}, \hat{N}_{\tsh}, \hat{N}_{\tsh}\}$ form an orthonormal frame and hence the last term in~\eqref{eq:eigs_s} is bounded for all $t \geq 0$.
\begin{equation}\label{eq:Zs}
\tr\Bigl(Z_{\ts}^{-1}\frac{\ud Z_{\ts}}{\ud t}\Bigr) = \ds\frac{1}{\det(Z_{\ts})}\ddt{\det(Z_{\ts})},
\end{equation}
then by~\eqref{lemma:1} we have a bound for $\det(Z_{\ts})$ so Eq.~\eqref{eq:Zs} is bounded for all $t \geq 0$ and 
\begin{equation}\label{eq:hm_bd}
\ds\frac{\partial_{\bq} H_{\ts}\cdot \partial_{\bp} H_{\ts}}{H_{\ts}} = -\frac{\partial_{\bd{Q}_{\ts}}c_{\ts}\cdot \bd{P}_{\ts}}{|\bd{P}_{\ts}|} 
\end{equation}
by observation, with ~\eqref{eq:hambound}, Eq.~\eqref{eq:hm_bd} is bounded for all $t \geq 0$. The the eigenvalues in Eq.~\eqref{eq:eigs_s} are bounded for all $t \geq 0$. So we have the
\begin{equation}
\sup_{t\in[0,T]} \Lambda_{0,K_{\delta/2}}\big(a_{\ts,0}(t,\bq,\bp)\big) \leq  C_{\delta,T}
\end{equation}
\noindent For $\partial_t\ba_{\ts,1}$, we have the system
\begin{align}\label{fga_amp:sys1}
\ddt{}\left(\begin{array}{c}
a_{\tsv,1} \\
a_{\tsh,1} 
\end{array}\right) &= 
\frac{1}{2}\left(\begin{array}{c c}
h & m \\
-m & h \\
\end{array}\right)
\left(\begin{array}{c}
a^{\tsv}_{\pm} \\
a^{\tsh}_{\pm} 
\end{array}\right) + 
\bF(\ba_{\ts,0},\partial_{\bz} \ba_{\ts,0},\bQ,\bP,c_\ts)
\end{align}
Then the bounds follow from previous work, as $\bF$ is smooth away from $|\bP|>0$, so we arrive at 
\begin{equation}
\sup_{t\in[0,T]} \Lambda_{1,K_{\delta/2}}\big(a_{\ts,1}(t,\bq,\bp)\big) \leq  C_{\delta,T}
\end{equation}
\end{proof}
\begin{proposition}\label{prop:e_inf}
For any $T > 0$ and $t\in [0,T]$
\begin{equation}
\| \bu_{\FGA,1}(t,\cdot) - \tilde{\bu}_{\FGA,1}(t,\cdot) \|_{\rE} = \cO(\epsilon^\infty)
\end{equation}
\end{proposition}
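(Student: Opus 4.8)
The plan is to reduce the claim to a localization estimate for oscillatory integrals whose symbols vanish on $K_\delta$. First, from the definitions (taking $k=1$ and using $\ba_{n,0}^\perp=0$), the difference is exactly the part of the FGA integrand that the filter removes:
\begin{equation*}
\bu_{\FGA,1}(t,\bx)-\tilde\bu_{\FGA,1}(t,\bx)=(2\pi\epsilon)^{-3d/2}\int_{\mR^{3d}}\bigl(1-\chi_\delta(\bq,\bp)\bigr)\sum_{n=1}^{6}\bigl(\ba_{n,0}+\epsilon(\ba_{n,1}+\ba_{n,1}^\perp)\bigr)G_n^\epsilon\,\ud\by\,\ud\bq\,\ud\bp ,
\end{equation*}
and $1-\chi_\delta$ is supported outside $K_\delta$. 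The key structural observation I would record is that every amplitude above is a $(\bq,\bp)$-symbol times a $\by$-factor equal to $\bu_0^\epsilon$ or $\bu_1^\epsilon$: from \eqref{eq:alpha},
\begin{equation*}
\ba_{n,0}(t,\by,\bq,\bp)=M_n^{(0)}(t,\bq,\bp)\,\bu_0^\epsilon(\by)+\epsilon\,M_n^{(1)}(t,\bq,\bp)\,\bu_1^\epsilon(\by),
\end{equation*}
and this form is inherited by $\ba_{n,1}$ and $\ba_{n,1}^\perp$ because the evolution equations \eqref{eq:amp_P}--\eqref{eq:amp_SH} and the operators $\cL_{n,1}$, $\cL_{n,0}^{-1}$ are linear and differentiate only in $t$ and $(\bq,\bp)$, never in $\by$. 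Consequently $\bu_{\FGA,1}-\tilde\bu_{\FGA,1}$ is a finite sum of Fourier integral operators $\cI_n^\epsilon(t,M)$ of the type in Proposition~\ref{eq:fio_bd}, applied to $\bu_0^\epsilon$ or to $\epsilon\bu_1^\epsilon$, with symbols of the form $(1-\chi_\delta)$ times a factor that is smooth on $\{|\bp|>0\}$ and, on $K_{\delta/2}$, bounded uniformly for $t\in[0,T]$ by Lemma~\ref{lemma:1}, the boundedness of the canonical transform $\kappa_{\tp,\ts}$, and Proposition~\ref{eq:a_bd}.

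The central step is the localization estimate: if $M\in\rL^\infty(\mR^{2d};\mC^{N\times N})$ is supported in $\mR^{2d}\setminus K_\delta$ and $\{v^\epsilon\}$ is asymptotically high frequency with cut-off $\delta$, then $\|\cI_n^\epsilon(t,M)v^\epsilon\|_{\rL^2}=\cO(\epsilon^\infty)$ uniformly for $t\in[0,T]$. I would either cite the analogous estimate from the machinery of~\cite{LuYa:MMS,LuYa:CPAM} or argue directly by doing the $\by$-integral first: the $\by$-dependence of the phase \eqref{eq:phasefunc} is $\tfrac{\I}{2}|\by-\bq|^2-\bp\cdot(\by-\bq)$, independent of $t$ and equal to the FBI kernel, so $\cI_n^\epsilon(t,M)v^\epsilon=S_n^\epsilon(t)\bigl[M\,(\cF^\epsilon v^\epsilon)(\cdot,-\cdot)\bigr]$, where $S_n^\epsilon(t)\colon g\mapsto c_d\epsilon^{-3d/4}\int_{\mR^{2d}}g(\bq,\bp)\,e^{\I(\frac{\I}{2}|\bx-\bQ_n|^2+\bP_n\cdot(\bx-\bQ_n))/\epsilon}\,\ud\bq\,\ud\bp$. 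A $TT^{*}$ computation — the $\bP_n$-integral produces a Dirac mass in $\bx-\bx'$ and the map $(\bq,\bp)\mapsto(\bQ_n,\bP_n)$ has unit Jacobian by symplecticity — shows $S_n^\epsilon(t)$ is bounded $\rL^2(\mR^{2d})\to\rL^2(\mR^d)$ with norm $2^{-d/2}$, uniformly in $t$ (of the same nature as Proposition~\ref{eq:fio_bd}; see also~\cite{SwRo:09}). Since $M$ vanishes on $K_\delta$,
\begin{equation*}
\bigl\|M\,(\cF^\epsilon v^\epsilon)(\cdot,-\cdot)\bigr\|_{\rL^2(\mR^{2d})}\leq\|M\|_{\rL^\infty}\Bigl(\int_{\mR^{2d}\setminus K_\delta}|(\cF^\epsilon v^\epsilon)(\bq,\bp)|^2\,\ud\bq\,\ud\bp\Bigr)^{1/2}=\cO(\epsilon^\infty)
\end{equation*}
by the definition of asymptotically high frequency (the reflection $\bp\mapsto-\bp$ is harmless as $K_\delta$ is symmetric). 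Applying this with $v^\epsilon=\bu_0^\epsilon$ and with $v^\epsilon=\epsilon\bu_1^\epsilon$, and summing over $n$ and over the symbols of Step~1, gives $\|\bu_{\FGA,1}-\tilde\bu_{\FGA,1}\|_{\rL^2}=\cO(\epsilon^\infty)$.

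The third step upgrades this to the scaled semi-norm \eqref{eq:enorm}. Applying $\partial_t$, $\Div$, or $\Curl$ acts only on $G_n^\epsilon$ and on the amplitudes: on $G_n^\epsilon$ it produces $\tfrac{\I}{\epsilon}$ times a phase derivative — $\nabla_\bx\phi_n=\bP_n+\I(\bx-\bQ_n)$ for $\Div,\Curl$ and $\partial_t\phi_n$ from \eqref{eq:Phi_der} for $\partial_t$ — whereas on the amplitudes it produces $t$- or $\bz$-derivatives of the symbols, with no extra negative power of $\epsilon$. Each resulting $(\bx-\bQ_n)$ factor is converted by the integration-by-parts formula~\eqref{lemma:2} into $\epsilon$ times a $\partial_\bz$ of the symbol, and $\partial_\bz(1-\chi_\delta)=-\partial_\bz\chi_\delta$ is still supported outside $K_\delta$. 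Hence each of $\partial_t$, $\Div$, $\Curl$ applied to $\bu_{\FGA,1}-\tilde\bu_{\FGA,1}$ is $\tfrac1\epsilon$ times a finite sum of operators $\cI_n^\epsilon(t,M)$ with symbols supported outside $K_\delta$, uniformly bounded, acting on $\bu_0^\epsilon$ or $\epsilon\bu_1^\epsilon$; by Step~2 each such $\rL^2$-norm is $\tfrac1\epsilon\cdot\cO(\epsilon^\infty)=\cO(\epsilon^\infty)$, and the $\epsilon$ in \eqref{eq:enorm} yields $\|\bu_{\FGA,1}-\tilde\bu_{\FGA,1}\|_{\rE}=\cO(\epsilon^\infty)$, uniformly on $[0,T]$.

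I expect the main obstacle to be bookkeeping rather than any hard estimate: one must check that the product structure in $\by$ and the $(\bq,\bp)$-support outside $K_\delta$ really persist through the construction of $\ba_{n,1}$ from \eqref{eq:amp_P}--\eqref{eq:amp_SH} and of $\ba_{n,1}^\perp$ from $\cL_{n,0}^{-1}(\Id-\Pi_n)\cL_{n,1}$ — including the $\partial_t^2$-type contributions hidden in $\cL_{n,1}$ — and through the differentiate-then-integrate-by-parts of Step~3, and that all the accompanying symbol $\rL^\infty$-bounds are uniform on $[0,T]$; the latter is exactly where Lemma~\ref{lemma:1}, the boundedness of $\kappa_{\tp,\ts}$, and Proposition~\ref{eq:a_bd} are used, and where one must keep track of the negative powers of $|\bp|$ in $\alpha_n^\epsilon$ on the support of $1-\chi_\delta$ (controlled by the standing assumptions on the Cauchy data).
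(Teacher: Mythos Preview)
Your approach is essentially the same as the paper's: write the difference as an FIO with symbol $(1-\chi_\delta)(\cdot)$ supported outside $K_\delta$, factor the $\by$-integral as the FBI transform of $\bu_0^\epsilon$ (resp.\ $\epsilon\bu_1^\epsilon$), invoke the operator bound of Proposition~\ref{eq:fio_bd} for the remaining $(\bq,\bp)$-integral, and conclude from the asymptotically-high-frequency hypothesis. Your write-up is in fact more careful than the paper's---you correctly include the $\ba_{n,1}^\perp$ contribution, make the factorization $\cI_n^\epsilon(t,M)v^\epsilon=S_n^\epsilon(t)[M\cdot\cF^\epsilon v^\epsilon]$ explicit, and spell out the passage from $\rL^2$ to $\|\cdot\|_{\rE}$ via differentiation of $G_n^\epsilon$ and Lemma~\ref{lemma:2}---while the paper handles these points tersely.
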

\begin{proof}
Starting from the definition,
\begin{align}
&  \|\bu_{\FGA,1} - \tilde{\bu}_{\FGA,1} \|_{\rE} \\
&\leq \ (2\pi\epsilon)^{-3d/2}\int_{\mR^{3d}} \Big\|\sum_{n} (1 - \chi_\delta)(\ba_{n,0} + \epsilon \ba_{n,1}) e^{\frac{\I}{\epsilon} \Phi_{n}}  \Big\|_{\rE} \ud\by\ud\bp\ud\bq \nonumber \\
&\leq 2^{-d/2}\sum_{n}\|(1 - \chi_\delta)(a_{n,0} + \epsilon a_{n,1})\hat{\bd{N}}_{n}\cF\alpha_{n} \|_{\rE}  \nonumber \\
&\leq 2^{-d/2}\Big(\|\cF^\epsilon u_0^\epsilon \|_{\rL^2(\mR^n\bs K_\delta)} + \epsilon\|\cF^\epsilon u_1^\epsilon \|_{\rL^2(\mR^n\bs K_\delta)}\Big)\sum_n\|(1 - \chi_\delta)(a_{n,0}+ \epsilon a_{n,1})\hat{\bd{N}}_{n}\|_{\rL^\infty}  \nonumber \\
& \leq C_{\delta,T}\Big(\|\cF^\epsilon u_0^\epsilon \|_{\rE(\mR^n\bs K_\delta)} + \epsilon\|\cF^\epsilon u_1^\epsilon \|_{\rE(\mR^n\bs K_\delta)}\Big) \\
& \leq C_{\delta,T}\Big(\|\cF^\epsilon u_0^\epsilon \|_{\rL^2(\mR^n\bs K_\delta)} + \epsilon\|\cF^\epsilon u_1^\epsilon \|_{\rL^2(\mR^n\bs K_\delta)}\Big) = \cO(\epsilon^\infty)
\end{align}
Where the second equality is from Proposition~\eqref{eq:fio_bd}, the third inequality is by similar arguments found in Proposition~\eqref{eq:a_bd}. Also from Eq.~\eqref{eq:alpha}, with the direct bound
\begin{equation}\label{eq:alpha_est}
\|\alpha_n\|_{\rE} \leq C(\|\bu_0^\epsilon\|_{\rE} + \epsilon \|\bu_1^\epsilon\|_{\rE}).
\end{equation}
The last inequality is justified noticing the derivatives do not affect the initial conditions only $G_n^\epsilon$ in  $\cF^\epsilon u_0^\epsilon$  and  $\cF^\epsilon u_1^\epsilon$. 
\end{proof}
\begin{proposition}\label{prop:L_bd} The operators $\cL_0, \cL_1, \cL_2$ are bounded. That is, for a given $T$ and any $t \in [0,T]$, $\ba \in C^\infty([0,T])\times S(\mR^{2d})$ and for $k = 0, 1$, $j = 0, 1, 2$
\begin{equation}
\sup_{t\in[0,T]}\Lambda_{k,K_{\delta}}\big(\cL_j(\ba)) < C_{T,K_{\delta}} \text{ and } \|\cL_j(\ba(t,\cdot))\|_{\rL^\infty} < C_{T,\delta}
\end{equation}
\end{proposition}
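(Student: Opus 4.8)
The plan is to read off from the explicit formulas defining $\cL_0,\cL_1,\cL_2$ that each is a finite linear combination, with $\epsilon$-uniformly bounded scalar coefficients, of products of a fixed list of ``building blocks'', to each of which a bounded number of $\partial_{\bz}$ derivatives is applied; then to prove that every building block is $C^\infty$ and bounded together with all its $(\bq,\bp)$-derivatives of any fixed order, uniformly for $t\in[0,T]$ and $(\bq,\bp)$ in the dilated set $K_{\delta_T}$ that contains the flow image. Granting this, both asserted estimates follow from the Leibniz rule, since a finite sum of products of functions bounded with all their derivatives is again such a function, and since every block bound depends continuously on $t$ over the compact interval $[0,T]$, so its supremum in $t$ is finite.

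First I would assemble the block list together with its bounds. The flow quantities $\bQ_n(t,\bq,\bp)$, $\bP_n(t,\bq,\bp)$ remain in $K_{\delta_T}$ (Proposition~3.1 of~\cite{LuYa:MMS}, the statement following~\eqref{eq:hambound}); they are smooth in $(\bq,\bp)$, and $\partial_{\bz}\bQ_n$, $\partial_{\bz}\bP_n$ and their higher $(\bq,\bp)$-derivatives are bounded because $\kappa_{\tp,\ts}$ is a canonical transformation bounded in sup norm (Proposition~3.4 of~\cite{LuYa:MMS}). The time derivatives $\bQ_t,\bP_t$ are the right-hand sides of the ray equations~\eqref{ew:flow}, and $\bQ_{tt},\bP_{tt}$ come from differentiating once more; since $c_{\tp,\ts}$ and their $\bQ$-derivatives are smooth on $K_{\delta_T}$ and $|\bP_n|$ is bounded above and below there, all of $\bQ_t,\bP_t,\bQ_{tt},\bP_{tt}$ are bounded on $[0,T]\times K_{\delta_T}$, and their fixed-order $(\bq,\bp)$-derivatives solve linear ODEs in $t$ (in the spirit of~\eqref{eq:DzDt}) with already-controlled coefficients, so Gronwall's inequality extends the bounds to all of $[0,T]$. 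The matrix $Z^{\kappa_{\tp,\ts},t}$ is invertible and $Z^{-1}$ is bounded with all its $(\bq,\bp)$-derivatives on $K_\delta$ by Lemma~\ref{lemma:1}, hence so is $\partial_t Z^{-1}=-Z^{-1}(\partial_t Z)Z^{-1}$. Because $\lambda,\mu$ are constant and $\rho=(\lambda+2\mu)/c_\tp^2$ with $c_\tp$ smooth and bounded away from zero, the factors $\rho$, $\partial_{\bQ}\rho$, $\partial^2_{\bQ\bQ}\rho$ are smooth and bounded on $K_{\delta_T}$ (cf.~\eqref{eq:rho_taylor}). Finally, $\ba\in C^\infty([0,T])\times\cS(\mR^{2d})$ gives that $\ba$, $\ba_t$, $\ba_{tt}$ and all $\partial_{\bz}^\alpha\ba$ are bounded on $[0,T]\times\mR^{2d}$, and hence $M_0$, $M_1$ are bounded once their displayed $\epsilon^{-1}$, $\epsilon^{-2}$ prefactors are paired with the compensating powers of $\epsilon$ that multiply $M_0$, $M_1$ inside $\cL_1$, $\cL_2$.

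With the blocks controlled, the estimate is mechanical: from the displayed formulas, $\cL_j(\ba)$ is a finite sum of terms, each a bounded coefficient times a product of blocks with a bounded total number of $\partial_{\bz}$'s applied, so $\partial_{\bz}^{\alpha}\cL_j(\ba)$ with $|\alpha|\le 1$ is again a finite sum of products of blocks differentiated finitely many times in $(\bq,\bp)$, every factor bounded by a constant $C_{T,\delta}$ uniformly in $t\in[0,T]$ by the previous paragraph; summing and taking $\sup_{t\in[0,T]}$ gives $\sup_{t\in[0,T]}\Lambda_{k,K_\delta}(\cL_j(\ba))<C_{T,K_\delta}$ for $k=0,1$, and the same pointwise product bound evaluated on $K_\delta$ (the only region entering the subsequent error analysis, where the operators act on the cut-off amplitudes $\chi_\delta\ba$) yields $\|\cL_j(\ba(t,\cdot))\|_{\rL^\infty}<C_{T,\delta}$.

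The step I expect to need the most care is the $\epsilon$-bookkeeping: before invoking Leibniz one must check that in the definitions of $\cL_1$ and $\cL_2$ every surviving occurrence of $M_0$ carries a factor $\epsilon^2$ and every occurrence of $M_1$ a factor $\epsilon$---equivalently, that the splitting of $(\partial_t^2-\cL)\bu_{\FGA}$ into $\epsilon^{-2}\cL_0+\epsilon^{-1}\cL_1+\cL_2$ is internally consistent---so that the coefficients are genuinely $\epsilon$-uniformly bounded rather than merely $\cO(\epsilon^{-1})$. The other delicate point is the uniform positive lower bound on $|\bP_n|$ over $K_{\delta_T}$, which is what licenses the control of $1/|\bP_n|$, of $\partial_t(\bP_n/|\bP_n|)$, and of the $|\bP_n|^{-2}$-type denominators appearing in $M_0$, $M_1$ and in $\cL_0,\cL_1,\cL_2$; this lower bound is precisely the content of $K_\delta$ being bounded away from $\bp=0$ together with the flow-confinement Proposition. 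Everything else is a routine, if lengthy, application of the product and chain rules.
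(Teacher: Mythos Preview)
Your proposal is correct and follows essentially the same approach as the paper, which dispatches the proposition in a single sentence by observing that $\cL_{n,j}$ depend only on $\bP_n$, $\bQ_n$, $Z_n^{-1}$ and their derivatives, all of which are bounded on $[0,T]\times K_\delta$. Your version is simply a much more careful and explicit elaboration of that same observation, including the $\epsilon$-bookkeeping and $|\bP_n|$-lower-bound caveats that the paper does not spell out.
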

\begin{proof}
Notice $\cL_{n,j}$ depend on $\bP_n$, $\bQ_n$ $Z_n^{-1}$ it its derivatives, all of which are bounded on $[0,T]\times K_{\delta}$, this gives the result
\end{proof}
\begin{proposition}\label{prop:a_perp} For a given $T$ and any $t \in [0,T]$, for $k = 0, 1$ and $\ba \in C^\infty([0,T])\times S(\mR^{2d})$, we have
\begin{equation}
\Lambda_{k,K_{\delta}}\big(\ba^{\perp}_{n,1}(\ba(t,\cdot))\big) < C_{T,K_{\delta}} \text{ and } \|\ba^{\perp}_{n,1}(\ba(t,\cdot))\|_{\rL^\infty} < C_{T,\delta}
\end{equation}
\end{proposition}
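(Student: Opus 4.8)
The plan is to trace through the explicit formula defining $\ba_{n,1}^\perp$ and bound each ingredient using results already established in the excerpt. Recall that $\ba_{n,1}^\perp(\bv) = \cL_{n,0}^{-1}\big((\Id - \Pi_n)\cL_{n,1}(\bv)\big)$, where $\cL_{n,0}^{-1}$ is the pseudo-inverse written out explicitly in Section~\ref{4.2} (for $n=\tp$ it is $\frac{1}{\rho(c_\ts^2-c_\tp^2)|\bP|^2}$ times the projection onto $\hat{\bd{d}}_1,\hat{\bd{d}}_2$; for $n=\ts$ it is $-\frac{1}{(\lambda+\mu)|\bP_\ts|^2}$ times the projection onto $\hat{\bP}_\ts$), and $\Pi_n$ is the orthogonal projection onto the appropriate eigenspace of $\cL_{n,0}$. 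First I would record that $\cL_{n,1}$ is bounded together with its $(\bq,\bp)$-derivatives up to order $k=1$ on $[0,T]\times K_\delta$: this is exactly Proposition~\ref{prop:L_bd}, applied to the Schwartz argument $\ba$. So $(\Id-\Pi_n)\cL_{n,1}(\ba)$ inherits the same bounds, provided the projections $\Pi_n$ and $(\Id-\Pi_n)$ are themselves smooth and bounded in $(\bq,\bp)$; this follows because $\Pi_n$ is built from $\bP_n/|\bP_n|$ (for the P-branch) or from the orthonormal frame $\{\hat{\bd{N}}_{\tp},\hat{\bd{N}}_{\tsv},\hat{\bd{N}}_{\tsh}\}$ (for the S-branches), all of which are smooth and bounded with bounded derivatives on $K_\delta$ since $|\bP_n|\geq\delta'$ there by Proposition~3.1 / the $K_T$-containment proposition.

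Next I would bound the pseudo-inverse factor. The only possible source of blow-up is the scalar prefactor $\big(\rho(c_\ts^2-c_\tp^2)|\bP_\tp|^2\big)^{-1}$ or $\big((\lambda+\mu)|\bP_\ts|^2\big)^{-1}$. On $K_\delta$ we have $\delta/2 \le |\bP_n| \le 2/\delta$ (again by the flow-containment proposition applied with the halved cutoff), and $c_\ts^2 - c_\tp^2$, $\lambda+\mu$ are nonzero constants under the standing assumption $\mu,\lambda$ constant with $c_\ts\neq c_\tp$; hence the prefactor is bounded and bounded away from zero, with all $(\bq,\bp)$-derivatives controlled by the chain rule since $\bP_n$ is smooth. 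The remaining rank-one/rank-two projectors $\hat{\bd{d}}_i\otimes\hat{\bd{d}}_i$ and $\hat{\bP}_\ts\otimes\hat{\bP}_\ts$ are smooth and bounded on $K_\delta$ for the same reason. Multiplying these together via the Leibniz rule (there are finitely many terms, $k\le1$), one obtains
\begin{equation*}
\Lambda_{k,K_\delta}\big(\ba_{n,1}^\perp(\ba(t,\cdot))\big) \le C\,\Lambda_{k,K_\delta}\big(\cL_{n,1}(\ba(t,\cdot))\big) \le C_{T,K_\delta},
\end{equation*}
uniformly in $t\in[0,T]$, and likewise $\|\ba_{n,1}^\perp(\ba(t,\cdot))\|_{\rL^\infty}\le C_{T,\delta}$ from the $\rL^\infty$ bound in Proposition~\ref{prop:L_bd}.

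The main obstacle — really the only delicate point — is verifying that the projection $\Pi_n$ (equivalently $\Id-\Pi_n$) and the pseudo-inverse $\cL_{n,0}^{-1}$ are uniformly smooth on $K_\delta$; this hinges on $\cL_{n,0}$ having a spectral gap there, i.e. the two eigenvalue clusters $(\lambda+2\mu)|\bP|^2-\rho|\bP\cdot\partial_t\bQ|^2$ and $\mu|\bP|^2-\rho|\bP\cdot\partial_t\bQ|^2$ staying separated, which reduces to $(\lambda+\mu)|\bP_n|^2$ being bounded below on $K_\delta$ — true since $\lambda+\mu$ is a fixed positive constant and $|\bP_n|\ge\delta/2$. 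Once that gap is in hand the spectral projectors depend analytically on $\bP_n$ (Riesz projection / explicit formulas), so all derivative bounds are routine applications of Lemma~\ref{lemma:1} (for the $Z^{-1}$ factors hidden inside $\cL_{n,1}$) and the chain rule. Everything else is bookkeeping, and I would simply invoke Propositions~\ref{prop:L_bd} and the $K_\delta$-containment of $(\bQ_n,\bP_n)$ rather than re-deriving the individual estimates.
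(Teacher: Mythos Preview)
Your proposal is correct and follows essentially the same approach as the paper: bound the pseudo-inverse $\cL_{n,0}^{-1}$ on $K_\delta$ using $|\bP_n|$ bounded away from zero, then invoke Proposition~\ref{prop:L_bd} for the $\cL_{n,1}$ factor. The paper's proof is in fact much terser than yours---it simply notes $|\cL_{n,0}^{-1}|_{\rL^\infty}\le C\delta^{-1}$ on $K_\delta$ and cites Proposition~\ref{prop:L_bd}---so your added discussion of the projections $\Pi_n$, the spectral gap, and the Leibniz-rule bookkeeping for the $\Lambda_{k}$ estimate is welcome elaboration rather than a different route.
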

\begin{proof}
For both $\ba^\perp_\tp$ and $\ba^\perp_\ts$, the pseudo-operators $\cL^{-1}_{\tp,0}$, and $\cL^{-1}_{\ts,0}$ are bounded on $K_{\delta}$ as $|P| > 0$ as $|\cL^{-1}_{n,0}|_{\rL^\infty} \leq C \delta^{-1}$. Then by proposition~\ref{prop:L_bd} we the result.
\end{proof}
\begin{proposition}\label{prop:ew:bound}
Consider the Elastic wave equation with a forcing term.
\begin{equation}\label{eq:ew:force}
\begin{cases}
\rho(x)\partial_{t}^2 \bu - (\lambda + 2\mu)\nabla(\Div\bu) + \mu\Curl\Curl\bu  = \bF(t,\bx)& \\
u(0,\bx) = \bu_0^\epsilon, &\bx \in \mR^d \\
u(t,\bx) = \bu_1^\epsilon, &\bx \in \mR^d
\end{cases}.
\end{equation}
Let $T > 0$, and let $\bu_0(t,\bx) \in C^\infty([0,T])\times H^1_0(\mR^{d})$. For each $t\in[0,T]$ Then we have the following estimate:
\begin{multline}~\label{eq:est}
\|\rho\partial_t \bu\|_{\rL^2} + \|(\lambda + 2\mu)\Div\bu\|_{\rL^2} + \|\mu\Curl\bu\|_{\rL^2}\leq \\
C_T\left(\frac{1}{\epsilon}\|\bu(0,\cdot)\|_{\rE} + \int_0^t\|\bF(s,\cdot)\|_{\rL^2}\ud\ds\right) 
\end{multline}
In particular,
\begin{equation}
\sup_{t\in[0,T]}\|\bu(t,\cdot)\|_\rE \leq C_T \left( \|\bu_0^\epsilon\|_\rE + \epsilon\int_0^T\|\bF(s,\cdot)\|_{\rL^2}\ud\ds\right)
\end{equation}
\end{proposition}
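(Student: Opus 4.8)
\textbf{Proof strategy for Proposition~\ref{prop:ew:bound}.} The plan is the classical energy method for second-order hyperbolic systems, closed by a Gronwall-type argument. Note that in~\eqref{eq:ew:force} the Lamé parameters $\lambda,\mu$ are constant (only $\rho$ varies), with $\mu>0$ and $\lambda+2\mu>0$ by~\eqref{eq:ps_speed}, and $\rho$ is bounded above and below away from $0$. Introduce the energy
\[
E(t) = \frac{1}{2}\int_{\mR^d}\Big(\rho(\bx)\,\abs{\partial_t\bu(t,\bx)}^2 + (\lambda+2\mu)\abs{\Div\bu(t,\bx)}^2 + \mu\abs{\Curl\bu(t,\bx)}^2\Big)\dbx .
\]
The coefficient bounds give the two-sided equivalence $E(t)\simeq \norm{\partial_t\bu}_{\rL^2}^2 + \norm{\Div\bu}_{\rL^2}^2 + \norm{\Curl\bu}_{\rL^2}^2$; in particular the left-hand side of~\eqref{eq:est} is $\lesssim E(t)^{1/2}$, and at $t=0$, recalling the definition~\eqref{eq:enorm}, one has $E(0)^{1/2}\lesssim \epsilon^{-1}\norm{\bu(0,\cdot)}_{\rE}$.

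The first step is the energy identity. Differentiating $E$ in $t$ (legitimate since $\bu\in C^\infty([0,T])\times H^1_0(\mR^d)$ with finite-energy data), and integrating by parts via $\int_{\mR^d}(\Div\bu)(\Div\partial_t\bu)\dbx = -\int_{\mR^d}\nabla(\Div\bu)\cdot\partial_t\bu\dbx$ and $\int_{\mR^d}(\Curl\bu)\cdot(\Curl\partial_t\bu)\dbx = \int_{\mR^d}(\Curl\Curl\bu)\cdot\partial_t\bu\dbx$ — with no boundary contributions because $\bu(t,\cdot)\in H^1_0(\mR^d)$ — yields
\[
\ddt{E} = \int_{\mR^d}\partial_t\bu\cdot\Big(\rho\,\partial_t^2\bu - (\lambda+2\mu)\nabla(\Div\bu) + \mu\Curl\Curl\bu\Big)\dbx = \int_{\mR^d}\partial_t\bu\cdot\bF\dbx .
\]
Cauchy–Schwarz together with $\norm{\partial_t\bu}_{\rL^2}\lesssim E(t)^{1/2}$ gives $\ddt{E}\le C\,E(t)^{1/2}\norm{\bF(t,\cdot)}_{\rL^2}$.

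The second step is the Gronwall argument. Wherever $E>0$ this reads $\ddt{}\sqrt{E}\le \tfrac{C}{2}\norm{\bF(t,\cdot)}_{\rL^2}$ (the degenerate set is handled by the standard device of replacing $E$ by $E+\eta$ and letting $\eta\downarrow 0$), and integrating from $0$ to $t$,
\[
E(t)^{1/2}\le E(0)^{1/2} + \frac{C}{2}\int_0^t\norm{\bF(s,\cdot)}_{\rL^2}\ud s .
\]
Combining this with the coefficient-equivalence bounds from the first paragraph yields~\eqref{eq:est}. For the ``in particular'' assertion, multiply~\eqref{eq:est} by $\epsilon$, use $\norm{\bu(t,\cdot)}_{\rE}\lesssim \epsilon\,E(t)^{1/2}$, estimate $\int_0^t\le\int_0^T$, and take the supremum over $t\in[0,T]$; the right-hand side is $t$-independent, and $\norm{\bu(0,\cdot)}_{\rE}$ is precisely $\norm{\bu_0^\epsilon}_{\rE}$ in the notation of~\eqref{eq:enorm}.

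I do not expect a serious obstacle; the argument is standard. The two points needing care are: (i) that $\lambda,\mu$ are genuinely constant in~\eqref{eq:ew:force} — otherwise terms like $\nabla(\lambda+2\mu)\,\Div\bu$ would destroy the divergence structure of $E$ and one would instead have to close on the full energy with a multiplicative Gronwall loop; and (ii) the regularity and decay required to differentiate under the integral sign and to discard boundary terms, which is exactly what the hypothesis $\bu\in C^\infty([0,T])\times H^1_0(\mR^d)$ provides. Finally, it is worth noting that $\bF$ enters the estimate for $E^{1/2}$ (not $E$) linearly, which is what makes the bound additive in $\bF$ and the constant independent of $T$.
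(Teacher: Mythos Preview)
Your proposal is correct and follows the same classical energy method as the paper: dot the equation with $\partial_t\bu$, integrate by parts to obtain the energy identity for $E(t)$, and close with a Gronwall argument. The only minor difference is in the closure step --- the paper applies Young's inequality $\int|\partial_t\bu\cdot\bF|\le\tfrac12\int|\partial_t\bu|^2+|\bF|^2$ and a multiplicative Gronwall on $E$, whereas you use Cauchy--Schwarz and integrate $\ddt{}\sqrt{E}$ directly; your variant in fact delivers the stated additive, linear-in-$\int_0^t\norm{\bF(s,\cdot)}_{\rL^2}\ud s$ bound more cleanly.
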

\begin{proof}
This is a standard estimate, dotting Eq.~\eqref{eq:est} with $\partial_t \bu$ and integrating over space we have
\begin{equation}
\frac{\partial_t}{2} \int_{\mR^d}\rho(x) |\partial_t \bu|^2 + (\lambda + 2\mu)|\Div\bu|^2 + \mu|\Curl\bu|^2\ud\bx \leq \int_{\mR^d} |\bu_t\cdot\bF| \ud\bx
\end{equation}
The right-hand side can then be estimated by
\begin{equation}
\int_{\mR^d} |\bu_t\cdot\bF| \ud\bx \leq \frac{1}{2}\int_{\mR^d}|\partial_t\bu|^2 + |\bF|^2\ud\bx
\end{equation}
adding the missing terms to apply Gronwall's give the bound
\begin{equation}
e^t(\|\rho(x)\bu_1\|_{\rL^2}^2 + \|(\lambda+\mu)\nabla\cdot\bu_0\|_{\rL^2}^2 + \|\mu\Curl\bu\|_{\rL^2}^2) + \int_0^t e^{t-s} \int_{\mR^d}|\bF(s,\bx)|^2\ud \bx \ud s
\end{equation}
Taking max over $\rho, \lambda, \mu$ and over T we arrive at the estimate
\end{proof}
\begin{proposition}\label{prop:ew:fga:bd}
We have
\begin{equation}
\|(\partial_{t}^2 - \cL)\tilde{\bu}_{\FGA,1} \|_\rE \leq \epsilon C_{T,\delta}
\end{equation}
\end{proposition}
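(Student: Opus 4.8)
The plan is to apply $\partial_t^2-\cL$ to $\tilde{\bu}_{\FGA,1}$ under the integral sign, to sort the result by powers of $\epsilon$ using the operators $\cL_{n,0},\cL_{n,1},\cL_{n,2}$ of Section~\ref{4.2}, to cancel the two most singular orders by the very construction of $\ba_{n,0},\ba_{n,1},\ba_{n,1}^\perp$, and to bound the remaining $\cO(\epsilon)$ term with the Fourier‑integral‑operator estimate. First, since $\chi_\delta=\chi_\delta(\bq,\bp)$ does not depend on $(t,\bx)$ it commutes with $\partial_t^2-\cL$; differentiating, expanding $\rho(\bx)$ about $\bQ_n$ to third order as in~\eqref{eq:rho_taylor}, and using the integration‑by‑parts identities of Lemma~\eqref{lemma:2} to convert each factor $(\bx-\bQ_n)$ into $\epsilon\,\partial_{\bd{z}}$, one obtains
\[
(\partial_t^2-\cL)\tilde{\bu}_{\FGA,1}=(2\pi\epsilon)^{-3d/2}\sum_{n}\int_{\mR^{3d}}\chi_\delta\Bigl(\tfrac1{\epsilon^2}\cL_{n,0}+\tfrac1{\epsilon}\cL_{n,1}+\cL_{n,2}\Bigr)\bigl(\ba_{n,0}+\epsilon\ba_{n,1}\bigr)G_n^\epsilon\,\ud\by\ud\bq\ud\bp+\cO(\epsilon^\infty),
\]
where the $\cO(\epsilon^\infty)$ collects the terms in which $\partial_{\bd{z}}$ falls on $\chi_\delta$: their symbol $\partial_{\bd{z}}\chi_\delta$ is supported in $K_{\delta/2}\setminus K_\delta\subset\mR^{2d}\setminus K_\delta$, and since the $\by$–integral in $G_n^\epsilon$ is a ($t$‑independent) FBI transform of $\bu_0^\epsilon,\bu_1^\epsilon$ in the variable $(\bq,\bp)$, those terms pair only with $\cF^\epsilon\bu_0^\epsilon,\cF^\epsilon\bu_1^\epsilon$ outside $K_\delta$, hence are $\cO(\epsilon^\infty)$ exactly as in the proof of Proposition~\ref{prop:e_inf}.

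Second, I would remove the singular orders successively. At order $\epsilon^{-2}$ the bracket is $\cL_{n,0}(\ba_{n,0})$, which vanishes on substituting the ray dynamics~\eqref{ew:flow} — this is the identity $\cL_{n,0}(\ba_{n,0})=0$ already used in Section~\ref{sec:4}. At order $\epsilon^{-1}$ the bracket is $\cL_{n,0}(\ba_{n,1})+\cL_{n,1}(\ba_{n,0})$: its projection $\Pi_n$ onto the kernel of $\cL_{n,0}$ (the polarization span of $\hat{\bd{N}}_n$) is annihilated by the amplitude equations~\eqref{eq:amp_P}–\eqref{eq:amp_SH}, which are precisely the Fredholm solvability conditions — for $n=\tsv,\tsh$ using that $Z_\ts^{-1}\partial_{\bd{z}}\bQ_\ts$ is symmetric, so that $m_{\tsh\rightarrow\tsv}=m_{\tsv\rightarrow\tsh}=0$ — while its complementary part is cancelled by the choice $\ba_{n,1}^\perp=\cL_{n,0}^{-1}\bigl((\Id-\Pi_n)\cL_{n,1}(\ba_{n,0})\bigr)$ of Section~\ref{4.2}. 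At order $\epsilon^0$ the bracket is $\cL_{n,1}(\ba_{n,1})+\cL_{n,2}(\ba_{n,0})$: its $(\bx-\bQ_n)^0$‑part is killed by the first‑order transport equation for $a_{n,1}$ (the $F_p$‑, resp.\ $\bF$‑forced ODE) together with $\tfrac{\ud\hat{\bd{N}}_\tsh}{\ud t}\cdot\hat{\bd{N}}_\tsv+\tfrac{\ud\hat{\bd{N}}_\tsv}{\ud t}\cdot\hat{\bd{N}}_\tsh=0$, and the pieces that remain carry at least one factor $(\bx-\bQ_n)$, so one further application of Lemma~\eqref{lemma:2} turns them into $\cO(\epsilon)$. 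Thus $(\partial_t^2-\cL)\tilde{\bu}_{\FGA,1}$ is reduced to $\epsilon$ times a superposition of wave packets.

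Third, it remains to estimate $R:=(2\pi\epsilon)^{-3d/2}\sum_n\int_{\mR^{3d}}\chi_\delta\,\epsilon\,b_n(t,\by,\bq,\bp)\,G_n^\epsilon\,\ud\by\ud\bq\ud\bp$, where $b_n$ gathers $\cL_{n,2}(\ba_{n,1})$ and the $\cO(\epsilon)$ residue from order $\epsilon^0$. Since $\alpha_n^\epsilon$ is affine in $\bu_0^\epsilon(\by)$ and $\epsilon\bu_1^\epsilon(\by)$ with coefficients depending only on $(\bq,\bp)$, and the $\cL_{n,j}$ differentiate only in $t$ and $\bd{z}$, one has $b_n=M_n^{(0)}(t,\bq,\bp)\bu_0^\epsilon(\by)+\epsilon M_n^{(1)}(t,\bq,\bp)\bu_1^\epsilon(\by)$ with $M_n^{(j)}$ and their $\bd{z}$–derivatives bounded on $[0,T]\times K_\delta$ by $C_{T,\delta}$ (Propositions~\ref{eq:a_bd}, \ref{prop:L_bd}, \ref{prop:a_perp}). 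Hence $R=\epsilon\sum_n\bigl(\cI_n^\epsilon(t,\chi_\delta M_n^{(0)})\bu_0^\epsilon+\epsilon\,\cI_n^\epsilon(t,\chi_\delta M_n^{(1)})\bu_1^\epsilon\bigr)$, so Proposition~\ref{eq:fio_bd} gives $\|R\|_{\rL^2}\lesssim\epsilon\,C_{T,\delta}(\|\bu_0^\epsilon\|_{\rL^2}+\epsilon\|\bu_1^\epsilon\|_{\rL^2})$; moreover each of $\partial_t,\Div,\Curl$ applied to $G_n^\epsilon$ brings down at most one factor $\epsilon^{-1}$ (the $(\bx-\bQ_n)$–pieces of $\nabla_\bx\phi_n$ and $\partial_t\phi_n$ being absorbed once more by Lemma~\eqref{lemma:2}), which is compensated by the explicit $\epsilon$ in~\eqref{eq:enorm}. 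With $\|\bu_0^\epsilon\|_{\rL^2},\|\bu_1^\epsilon\|_{\rL^2}$ uniformly bounded, this yields $\|(\partial_t^2-\cL)\tilde{\bu}_{\FGA,1}\|_\rE\le\epsilon\,C_{T,\delta}$.

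I expect the genuinely delicate step to be the order‑$\epsilon^0$ reduction in the second part: verifying that, after the transport equation for $a_{n,1}$, the \emph{full} $(\bx-\bQ_n)^0$–contribution of $\cL_{n,1}(\ba_{n,1})+\cL_{n,2}(\ba_{n,0})$ cancels — not merely its projection onto $\hat{\bd{N}}_n$ — so that no $\cO(1)$ wave‑packet term survives (a surviving one would have $\partial_t$ of size $\cO(\epsilon^{-1})$ in $\rL^2$ and would spoil the estimate). This forces one to push the long expression for $\cL_{n,2}$ through all the dynamics and, for the S‑branch, to bookkeep the diabatic SH/SV coupling at one order beyond Section~\ref{4.2}; it is the main new computation relative to~\cite{LuYa:MMS,LuYa:CPAM}, where the amplitudes are scalar and this cancellation is automatic.
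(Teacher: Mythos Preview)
Your setup and the cancellations at orders $\epsilon^{-2}$ and $\epsilon^{-1}$ match the paper's argument, and your handling of the $\partial_{\bd z}\chi_\delta$ terms is correct (and in fact more careful than the paper). The problem is your order-$\epsilon^{0}$ reduction: it is both unnecessary and, as stated, wrong.

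The paper does \emph{not} cancel the $\epsilon^{0}$ contribution. After the first two cancellations it simply collects what remains as
\[
R_0=\cL_{n,1}(\ba_{n,1})+\cL_{n,2}(\ba_{n,0})-\cL_{n,1}(\ba_{n,1}^\perp),\qquad
R_1=\cL_{n,2}(\ba_{n,1})-\cL_{n,2}(\ba_{n,1}^\perp),
\]
and bounds $\|R_0\|_{\rL^\infty}+\epsilon\|R_1\|_{\rL^\infty}\le C_{T,\delta}$ using Propositions~\ref{prop:L_bd} and~\ref{prop:a_perp}, obtaining $\|(\partial_t^2-\cL)\tilde{\bu}_{\FGA,1}(t,\cdot)\|_{\rL^2}\le C_{T,\delta}$. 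This $\rL^2$ bound (not the $\rE$-norm inequality as literally stated) is precisely what is invoked in Theorem~\ref{thm:1}: the extra factor $\epsilon$ in the first-order convergence enters through the energy estimate of Proposition~\ref{prop:ew:bound}, not through the residual itself.

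Your claimed full cancellation at $\epsilon^{0}$ cannot hold with the ansatz $\tilde{\bu}_{\FGA,1}$. The transport equation for $a_{n,1}$ enforces only $\Pi_n R_0=0$; the component $(\Id-\Pi_n)R_0$ lies in the range of $\cL_{n,0}$ and would require a second-order corrector $\ba_{n,2}^\perp=\cL_{n,0}^{-1}\bigl((\Id-\Pi_n)R_0\bigr)$ to be removed. No such term is present in $\tilde{\bu}_{\FGA,1}$, so $(\Id-\Pi_n)R_0$ survives as an $\cO(1)$ symbol---and after the integrations by parts of Lemma~\ref{lemma:2} there are no residual $(\bx-\bQ_n)$ factors to convert it to $\cO(\epsilon)$. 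The ``delicate step'' you anticipate at the end is therefore not needed for the proposition, and would not go through as you describe; drop it and stop at the $\rL^2$ bound on $R_0$.
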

\begin{proof}
Plugging $\tilde{\bu}_{\FGA,1}$ into~\eqref{cp:eq:1} gives, 
\begin{multline}
(\partial_{t}^2 - \cL) \tilde{\bu}_{\FGA,1}(t,\bx) \\
= (2\pi\epsilon)^{-3d/2}\int_{\mR^{3d}} \chi_\delta\sum_{n,m} \epsilon^{m-2}\cL_{n,m}\Big(\ba_{n,0} + \epsilon (\ba_{n,1} + \ba_{n,1}^\perp)\Big) G_n^\epsilon \ud \by\ud \bp \ud\bq
\end{multline}
Expanding and simplifying yields;
\begin{multline}
(\partial_{t}^2 - \cL)\tilde{\bu}_{\FGA,1}(t,\bx) \\
=  (2\pi\epsilon)^{-3/2}\sum_{n}\chi_\delta G_n^\epsilon\ud \by\ud\bq\ud\bp\int_{\mR^{3d}}\epsilon^{-2}\cL_{n,0}(\ba_{n,0}) \\
+ \epsilon^{-1}\cL_{n,0}(\ba_{n,1}) +  \epsilon^{-1}\cL_{n,1}(\ba_{n,0}) + \cL_{n,1}(\ba_{n,1}) + \cL_{n,2}(\ba_{n,0} + \epsilon\ba_{n,1}) \\
-(\epsilon^{-1}\cL_{n,0} + \cL_{n,1} + \epsilon\cL_{n,2})\cL_{n,0}^{-1}((\Id - \Pi_n)(\cL_{n,1}(\ba_{n,0}\hat{\bd{N}_n})))\\
\end{multline}
Direct cancellation yields
\begin{multline}
(\partial_{t}^2 - \cL) \tilde{\bu}_{\FGA,1}(t,\bx) \\
=  (2\pi\epsilon)^{-3/2}\sum_{n}G_n^\epsilon\ud \by\ud\bq\ud\bp\int_{\mR^{3d}} R_0(t,\bp,\bq) + \epsilon R_1(t,\bp,\bq)
\end{multline}
where
\begin{align}
R_0(t,\bp,\bq) &= \cL_{n,1}(\ba_{n,1}) + \cL_{n,2}(\ba_{n,0}) -\cL_{n,1}\ba_{n,1}^\perp \\
R_1(t,\bp,\bq) &= \cL_{n,2}(\ba_{n,1}) - \cL_{n,2}\ba_{n,1}^\perp
\end{align}
Then by Propositions ~\ref{eq:fio_bd} and~\ref{prop:a_perp}
\begin{equation}
\|(\partial_{t}^2 - \cL) \tilde{\bu}_{\FGA,1}(t,\cdot)\|_{\rL^2} \leq C_{T,\delta} (\|R_0(t,\cdot)\|_{\rL^{\infty}} + \epsilon \|R_1(t,\cdot)\|_{\rL^{\infty}})
\end{equation}
By prop.~\ref{prop:a_perp},~\ref{prop:L_bd} $R_0$ and $R_1$ are bounded.
\end{proof}

\begin{proposition}\label{prop:init:est}
Let $\bu$ solve the Cauchy problem \eqref{cp:eq:1}.  If $\bu_{\FGA,0}$ is the first order FGA approximation~\eqref{eq:fga_sol}, then we have the following estimate with the initial conditions.
\begin{equation}
\|\bu(0,\bx) - \bu_{\FGA}(0,\bx)\|_{\rE} \leq \epsilon C_T
\end{equation}
\end{proposition}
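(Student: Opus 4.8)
The plan is to use the block structure of the seminorm $\|\cdot\|_{\rE}$ and match the displacement and the velocity separately. Since $\partial_t\bu(0,\cdot)=\bu_1^\epsilon$, one has
\[
\|\bu(0,\cdot)-\bu_{\FGA,0}(0,\cdot)\|_{\rE}=\epsilon\Big(\|\bu_1^\epsilon-\partial_t\bu_{\FGA,0}(0,\cdot)\|_{\rL^2}+\|\Div(\bu-\bu_{\FGA,0})(0,\cdot)\|_{\rL^2}+\|\Curl(\bu-\bu_{\FGA,0})(0,\cdot)\|_{\rL^2}\Big),
\]
so it is enough to prove (i) $\bu_{\FGA,0}(0,\cdot)=\bu_0^\epsilon$, which makes the $\Div$ and $\Curl$ terms vanish identically, and (ii) $\|\bu_1^\epsilon-\partial_t\bu_{\FGA,0}(0,\cdot)\|_{\rL^2}\le C_T$, the constant being allowed to absorb the fixed data bound (recall $\|\bu_0^\epsilon\|_{H^1_0}<M$ and $\epsilon\|\bu_1^\epsilon\|_{\rL^2}\lesssim\|\bu_0^\epsilon\|_{\rL^2}$ because $\cO(\epsilon\bu_1^\epsilon)=\cO(\bu_0^\epsilon)$).

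Step (i) is essentially Remark~\ref{rmk:1}. At $t=0$ one has $\bQ_n(0,\bq,\bp)=\bq$, $\bP_n(0,\bq,\bp)=\bp$, $\hat{\bd{N}}_n(0)=\hat{\bd{n}}_n$ and $Z(0)=2\Id_3$, so $G_n^\epsilon(0,\bx,\by,\bq,\bp)$ is exactly the Gaussian kernel whose phase-space integral reconstructs the identity $(\cF^\epsilon)^*\cF^\epsilon=\Id_{\rL^2}$; summing the two branches of each wave type in \eqref{eq:alpha} annihilates the $\I\epsilon\bu_1^\epsilon$-contribution (opposite signs), and since $\{\hat{\bd{n}}_\tp,\hat{\bd{n}}_\tsv,\hat{\bd{n}}_\tsh\}$ is an orthonormal frame for every $(\bq,\bp)$ with $|\bp|>0$, the $\bu_0^\epsilon$-contributions reassemble to $\sum_n(\bu_0^\epsilon\cdot\hat{\bd{n}}_n)\hat{\bd{n}}_n=\bu_0^\epsilon$. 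Hence $\bu_{\FGA,0}(0,\cdot)=(\cF^\epsilon)^*\cF^\epsilon\bu_0^\epsilon=\bu_0^\epsilon$.

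For step (ii) I differentiate \eqref{eq:fga_sol} under the integral,
\[
\partial_t\bu_{\FGA,0}(t,\bx)=(2\pi\epsilon)^{-3d/2}\int_{\mR^{3d}}\sum_n\Big(\partial_t\ba_{n,0}+\tfrac{\I}{\epsilon}\,\ba_{n,0}\,\partial_t\phi_n\Big)G_n^\epsilon\,\ud\by\,\ud\bp\,\ud\bq ,
\]
and set $t=0$. The ray equations \eqref{ew:flow} give $\partial_t\bQ_n(0)=b\,c_n(\bq)\bp/|\bp|$ and $\partial_t\bP_n(0)=-b\,\partial_{\bq}c_n(\bq)|\bp|$ for the branch $b=\pm1$, whence from \eqref{eq:phasefunc}
\[
\partial_t\phi_n(0)=-b\,c_n|\bp|-\I\,b\,\tfrac{c_n}{|\bp|}\,\bp\cdot(\bx-\bq)-b\,|\bp|\,\partial_{\bq}c_n\cdot(\bx-\bq).
\]
The $(\bx-\bq)$-independent piece $-b\,c_n|\bp|$, multiplied by $(\I/\epsilon)\ba_{n,0}$ and summed over the two branches of a fixed wave type, is where the construction pays off: $\sum_b b=0$ cancels the $\bu_0^\epsilon$-part of $\alpha^\epsilon_n$ while $\sum_b b^2=2$ keeps the $\I\epsilon\bu_1^\epsilon$-part, the two factors of $\epsilon$ cancel, and $\sum_n(\bu_1^\epsilon\cdot\hat{\bd{n}}_n)\hat{\bd{n}}_n=\bu_1^\epsilon$, so this contribution equals $(\cF^\epsilon)^*\cF^\epsilon\bu_1^\epsilon=\bu_1^\epsilon$. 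The remainder consists of the two $(\bx-\bq)$-linear terms above together with $\partial_t\ba_{n,0}(0)=(\partial_t a_n(0))\alpha^\epsilon_n\hat{\bd{n}}_n+a_n(0)\alpha^\epsilon_n\,\partial_t\hat{\bd{N}}_n(0)$; here each factor $(\bx-\bq)$ against $G_n^\epsilon$ is converted to an $\epsilon\,\partial_{\bz}$ by the integration-by-parts formula~\eqref{lemma:2}, cancelling the $\I/\epsilon$, while $\partial_t a_n(0)$ and $\partial_t\hat{\bd{N}}_n(0)$ are bounded by the amplitude equations \eqref{eq:amp_P}--\eqref{eq:amp_SH} and the smoothness of the frame. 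Thus the remainder is a finite sum of operators $\cI^\epsilon_n(0,M)$, $M\in\rL^\infty$, applied to $\bu_0^\epsilon$ or to $\epsilon\,\bu_1^\epsilon$ — the $|\bp|$-homogeneity built into \eqref{eq:alpha} is exactly what keeps the symbols bounded, and if desired one first replaces $\bu_{\FGA,0}$ by its filtered version at the cost of $\cO(\epsilon^\infty)$, as in Proposition~\ref{prop:e_inf}, and argues on $K_\delta$. Proposition~\ref{eq:fio_bd} then yields $\|\bu_1^\epsilon-\partial_t\bu_{\FGA,0}(0,\cdot)\|_{\rL^2}\le C_T\big(\|\bu_0^\epsilon\|_{\rL^2}+\epsilon\|\bu_1^\epsilon\|_{\rL^2}\big)\le C_T$, and combined with (i) this gives $\|\bu(0,\cdot)-\bu_{\FGA,0}(0,\cdot)\|_{\rE}\le\epsilon C_T$.

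The step I expect to be the main obstacle is (ii): the naive leading term of $\partial_t\bu_{\FGA,0}(0,\cdot)$ is $\cO(1/\epsilon)$, so one has to follow the $b=+1$ versus $b=-1$ cancellation carefully enough to see it collapse to exactly $\bu_1^\epsilon$, and then verify that every surviving term genuinely carries a compensating power of $\epsilon$ — through \eqref{lemma:2} for the $(\bx-\bq)$-polynomial pieces and through \eqref{eq:amp_P}--\eqref{eq:amp_SH} for $\partial_t\ba_{n,0}$ — with symbols that stay in $\rL^\infty$, before Proposition~\ref{eq:fio_bd} can be applied.
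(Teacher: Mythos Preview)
Your argument is correct and, for the time-derivative piece, mirrors the paper's proof closely: both compute $\partial_t\bu_{\FGA,0}(0,\cdot)$, isolate the branch-sum cancellation that turns the apparent $\cO(1/\epsilon)$ leading term into exactly $\bu_1^\epsilon$, and then bound the remainder as a Fourier integral operator with $\rL^\infty$ symbol via Proposition~\ref{eq:fio_bd}, using the integration-by-parts lemma~\eqref{lemma:2} on the $(\bx-\bq)$ factors.

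The one genuine difference is in how the $\Div$ and $\Curl$ parts are handled. You invoke Remark~\ref{rmk:1} (exact reconstruction $\bu_{\FGA,0}(0,\cdot)=\bu_0^\epsilon$) to make those two terms vanish identically; the paper instead writes $\Div\bu_0^\epsilon$ in FIO form, subtracts $\Div\bu_{\FGA,0}(0,\cdot)$ term by term, applies the integration-by-parts lemma, and bounds what is left directly by Proposition~\ref{eq:fio_bd}, obtaining $\|\Div(\bu_0^\epsilon-\bu_{\FGA,0}(0,\cdot))\|_{\rL^2}\le C_T$ (and similarly for $\Curl$). Your route is cleaner and shorter, since Remark~\ref{rmk:1} really does force those terms to zero; the paper's route is more self-contained in that it does not rely on the exact-reconstruction identity and would still go through if the initial amplitudes only matched $\bu_0^\epsilon$ up to an $\cO(1)$ FIO remainder. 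Either way the final estimate is the same.
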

\begin{proof}
First computing the following;
\begin{align}
\partial_t\ba_{n}(0,\by,\bq,\bp) &= \alpha(\by,\bq,\bp)\left(\ds\frac{\partial c(\bq)\cdot \bp}{|\bp|} - d\right)\hat{\bd{n}}_n
\end{align}
For estimating $\bu(0,\bx) - \bu_{\FGA,0}(0,\bx)$ in the energy norm, we can write $\partial_t u(0,x) = u_1^\epsilon(\bx)$. In terms of the FIO this gives;
\begin{multline}\label{eq:dt:FIO}
\int_{\mR^{3d}}u_0^\epsilon(\by)\ds\frac{\I}{\epsilon}\Phi_n(0,\bx,\by,\bq,\bp)G_n^\epsilon(0,\bx,\by,\bq,\bp) \ud\by\ud\bq\ud\bp  \\
= \int_{\mR^{3d}}\bu_1^\epsilon(\by)G_n^\epsilon(0,\bx,\by,\bq,\bp)\ud\by\ud\bq\ud\bp.
\end{multline}
Then
\begin{multline}
|\bu_1^\epsilon(0,\bx) - \partial_t \bu_{\FGA,0}(0,\bx)| = (2\pi\epsilon)^{-d/2}\Big|\sum_n\int_{\mR^{3d}} \big(\ds\frac{1}{2}(\bu_1^\epsilon(y)\cdot\hat{\bd{n}})\hat{\bd{n}} \\
- \partial_t\ba_{n}(0,\by,\bq,\bp) - \ba_{n}\ds\frac{\I}{\epsilon}\Phi_n(0,\bx,\by,\bq,\bp) \big)G_n^\epsilon(0,\bx,\by,\bq,\bp)\ud\by\ud\bq\ud\bp
\end{multline}
For one terms this is after plugging in~\ref{eq:dt:FIO},
\begin{multline}
\int_{\mR^{3d}} \Big(u_0^\epsilon(\by)\ds\frac{\I}{2\epsilon}\Phi_n(0,\bx,\by,\bq,\bp) - \alpha(\by,\bq,\bp)\left(\ds\frac{\partial c(\bq)\cdot \bp}{|\bp|} - d\right)\hat{\bd{n}}_n \\
 - 2^{d/2}\alpha(\by,\bq,\bp)\hat{\bd{n}}_n\ds\frac{\I}{\epsilon} \Phi_n(0,\bx,\by,\bq,\bp)\Big) \\
\times G_n^\epsilon(0,\bx,\by,\bq,\bp)\ud \bx\ud\by\ud\bq\ud\bp
\end{multline}
Plugging in $\alpha_n$ from~\eqref{eq:alpha} and summing over the wavefields and branches gives
\begin{multline}
\bu_1^\epsilon(\bx) - \partial_t \bu_{\FGA,0}(0,\bx) = \\
-\sum_n\int_{\mR^{3d}}\frac{1}{2c_{n}\abs{\bp}^3}\Bigl(\bd{u}_0^\epsilon(\bd{y})c_{n}\abs{\bd{p}}\pm
  \I\epsilon\bd{u}_1^\epsilon(\bd{y}) \Bigr)\cdot\hat{\bd{n}}_{n} \\
	\times\left(\ds\frac{\partial c(\bq)\cdot \bp}{|\bp|} - d\right)\hat{\bd{n}}_nG_n^\epsilon(0,\bx,\by,\bq,\bp) \ud\by\ud\bq\ud\bp
\end{multline}
By Proposition~\eqref{eq:fio_bd} we can arrive at the estimate
\begin{align}
\| \bu_1^\epsilon - \partial_t \bu_{\FGA,0}(0,\cdot)\|_{\rL^2} \leq C_T
\end{align}
For the Div term,
\begin{multline}
\Div\bu_\FGA(0,\bx)\ud\bx \\ 
= (2\pi\epsilon)^{-d/2}\sum_{n} \int_{\mR^{3d}}\Div\big(\ba_{n}(0,\by,\bq,\bp)G_n^\epsilon(0,\bx,\by,\bq,\bp)\big)\ud\by\ud\bq\ud\bp \\
= (2\pi\epsilon)^{-d/2}\sum_{n}\int_{\mR^{3d}}\frac{\I}{\epsilon}\ba_{n}\cdot\left(\bP_n + (\bx-\bQ_n)\right)G_n^\epsilon(0,\bx,\by,\bq,\bp) \ud\by\ud\bq\ud\bp
\end{multline}
applying the operators and integration by parts gives
\begin{equation}
\int_{\mR^d}\left(\ds\frac{\I}{\epsilon}\ba_{n}\cdot\bp_n -\partial_{\bz}(Z^{-1}\ba )\right)G_n^\epsilon(0,\bx,\by,\bq,\bp)\bx
\end{equation}
Now consider the difference, $\Div\bu_0^\epsilon - \Div\bu_{\FGA,0}(t,\bx)$.  Writing in terms of the FIO and writing one term
\begin{multline}
\int_{\mR^{3d}}\left(\ds\frac{\I}{\epsilon}\big(\bu_0^\epsilon(\by) - \ba(0,\by,\bq,\bp)\big)\cdot\bP_n +\partial_{\bz}(Z^{-1}(\bu_0^\epsilon(\by) - \ba(0,\by,\bq,\bp)) )\right)\\
\times G_n^\epsilon(0,\bx,\by,\bq,\bp)\ud\by\ud\bq\ud\bp
\end{multline}
With $\ba(0,\by,\bq,\bp) = \alpha_n(\by,\bq,\bp)\hat{\bd{n}}$ and summing over $n$ we have;
\begin{multline}
\Div\bu_0^\epsilon(\bx) - \Div\bu_{\FGA,0}(0,\bx) = \\
\sum_n\int_{\mR^d}\partial_{\bz}(Z^{-1}(\bu_0^\epsilon(\by) - \alpha_n(\by,\bq,\bp)\hat{\bd{n}})G_n^\epsilon(0,\bx,\by,\bq,\bp)\bx
\end{multline}
Again, by Proposition~\eqref{eq:fio_bd} we arrive at the estimate
\begin{align}
\|\Div\bu_0^\epsilon - \Div\bu_{\FGA,0}(t,\cdot)\ud\bx\|_{\rL^2} \leq C_T
\end{align}
The Curl term has a similar estimate as the Div term.  These 3 estimates show the result.
\end{proof}

\begin{theorem}\label{thm:1}
Let $\{u_0^\epsilon\}$ be a family of asymptotically high frequency initial conditions, and let $\bu$ solve the Cauchy problem \eqref{cp:eq:1}.  If $\bu_{\FGA,0}$ is the first order FGA approximation~\eqref{eq:fga_sol}, then for a given $T$ and any $t\in [0,T]$, $\delta > 0$ and sufficiently small $\epsilon$, we have
\begin{equation}
\sup_{t\in[0,T]}\|\bu(t,\cdot) - \bu_{\FGA,0}(t,\cdot)\|_{\rE} \leq \epsilon C_{T,\delta}
\end{equation}
\end{theorem}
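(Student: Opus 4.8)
The plan is to interpolate between $\bu$ and $\bu_{\FGA,0}$ through the filtered, first-order-corrected approximation $\tilde{\bu}_{\FGA,1}$ by the triangle inequality
\begin{equation*}
\|\bu - \bu_{\FGA,0}\|_{\rE} \le \|\bu - \tilde{\bu}_{\FGA,1}\|_{\rE} + \|\tilde{\bu}_{\FGA,1} - \bu_{\FGA,1}\|_{\rE} + \|\bu_{\FGA,1} - \bu_{\FGA,0}\|_{\rE}.
\end{equation*}
The third term is $\le \epsilon C_{T,\delta}$ by Proposition~\ref{eq:a_bd}, and the second term is $\cO(\epsilon^\infty)$ by Proposition~\ref{prop:e_inf} (this is where the high-frequency hypothesis on $\{\bu_0^\epsilon\}$, via the FBI localization onto $K_\delta$, is consumed). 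Everything therefore reduces to estimating $\bw := \bu - \tilde{\bu}_{\FGA,1}$ in the $\rE$-norm, which I would do by a Duhamel/energy argument.

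\textbf{The main term.} First I would observe that $\bw$ solves a forced elastic wave equation of the form~\eqref{eq:ew:force}, with forcing $\bF$ equal to minus the residual obtained by plugging $\tilde{\bu}_{\FGA,1}$ into the left side of~\eqref{cp:eq:1}, and with initial data $\bw(0,\cdot) = \bu_0^\epsilon - \tilde{\bu}_{\FGA,1}(0,\cdot)$, $\partial_t\bw(0,\cdot) = \bu_1^\epsilon - \partial_t\tilde{\bu}_{\FGA,1}(0,\cdot)$; the regularity needed to run the estimate comes from $\bu_0^\epsilon \in H_0^1$ (the hypothesis $\|\bu_0^\epsilon\|_{H_0^1} < M$) together with the smoothness of the FGA ansatz. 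Applying the energy estimate of Proposition~\ref{prop:ew:bound} to $\bw$ gives
\begin{equation*}
\sup_{t\in[0,T]}\|\bw(t,\cdot)\|_{\rE} \le C_T\left(\|\bw(0,\cdot)\|_{\rE} + \epsilon\int_0^T\|\bF(s,\cdot)\|_{\rL^2}\ud s\right).
\end{equation*}
The forcing integral is controlled by Proposition~\ref{prop:ew:fga:bd}: its proof exhibits $\bF$, in $\rL^2$, as a finite sum of Fourier integral operators whose symbols are built from $\cL_{n,1}(\ba_{n,1})$, $\cL_{n,2}(\ba_{n,0})$ and $\cL_{n,1}\ba_{n,1}^\perp$ (the residuals $R_0, R_1$), all of which are $\cO(1)$ in $\rL^\infty$ by Propositions~\ref{prop:L_bd},~\ref{prop:a_perp} and~\ref{eq:a_bd}; hence, by Proposition~\ref{eq:fio_bd}, $\|\bF(s,\cdot)\|_{\rL^2} \le C_{T,\delta}(\|\bu_0^\epsilon\|_{\rL^2} + \epsilon\|\bu_1^\epsilon\|_{\rL^2})$ and the time integral contributes only $\epsilon T C_{T,\delta}$. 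The initial-data term is handled by combining Remark~\ref{rmk:1} (so $\bu_{\FGA,0}(0,\cdot) = \bu_0^\epsilon$ exactly), Proposition~\ref{prop:init:est} (the $\cO(\epsilon)$ mismatch of $\partial_t\bu_{\FGA,0}(0,\cdot)$ with $\bu_1^\epsilon$), and Propositions~\ref{eq:a_bd},~\ref{prop:e_inf} to pass from $\bu_{\FGA,0}(0,\cdot)$ to $\tilde{\bu}_{\FGA,1}(0,\cdot)$; together these give $\|\bw(0,\cdot)\|_{\rE} \le \epsilon C_{T,\delta}$. Thus $\sup_t\|\bw(t,\cdot)\|_{\rE} \le \epsilon C_{T,\delta}$, and the triangle inequality above closes the proof. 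To recover the clean bound $\lesssim \epsilon M$, I would propagate the data-dependence of all constants through~\eqref{eq:alpha_est}, $\|\alpha_n\|_{\rE} \lesssim \|\bu_0^\epsilon\|_{\rE} + \epsilon\|\bu_1^\epsilon\|_{\rE}$, using the standing normalization $\cO(\epsilon\bu_1^\epsilon) = \cO(\bu_0^\epsilon)$ and $\|\bu_0^\epsilon\|_{H_0^1} < M$.

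\textbf{Main obstacle.} The real content, on which the whole scheme rests, is Proposition~\ref{prop:ew:fga:bd}: after substituting the ray dynamics~\eqref{ew:flow} one must verify that the $\epsilon^{-2}$ term $\cL_{n,0}(\ba_{n,0})$ vanishes identically and that the $\epsilon^{-1}$ terms $\cL_{n,0}(\ba_{n,1}) + \cL_{n,1}(\ba_{n,0})$ cancel by the very definition of the transport equations~\eqref{eq:amp_P}--\eqref{eq:amp_SH} and of $\ba_{n,1}^\perp$ via the pseudo-inverse $\cL_{n,0}^{-1}$ and the projection $\Pi_n$; this is also the only place where the repeated S-wave eigenvalue forces the diabatic SH/SV coupling and the splitting of the amplitude into a physical and a perpendicular part. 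A secondary but non-trivial point of bookkeeping is that the $\rE$-norm carries a prefactor $\epsilon$ while each derivative landing on $G_n^\epsilon$ produces a factor $\epsilon^{-1}$, so one must be careful to establish that the residual is exactly $\cO(1)$ in $\rL^2$, and not merely formally of lower order, before the energy estimate upgrades the discrepancy to the advertised $\cO(\epsilon)$.
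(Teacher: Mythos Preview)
Your proposal is correct and follows essentially the same route as the paper: the same triangle inequality through $\tilde{\bu}_{\FGA,1}$ and $\bu_{\FGA,1}$, with the three pieces handled respectively by the energy estimate (Proposition~\ref{prop:ew:bound}) applied to $\bw=\bu-\tilde{\bu}_{\FGA,1}$ together with Propositions~\ref{prop:ew:fga:bd} and~\ref{prop:init:est}, by Proposition~\ref{prop:e_inf}, and by Proposition~\ref{eq:a_bd}. Your identification of the residual cancellation in Proposition~\ref{prop:ew:fga:bd} as the crux is exactly where the paper places the weight as well.
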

\begin{proof}
\begin{equation}
\|\bu - \bu_{\FGA,0}\|_{\rE} \leq +  \|\bu - \tilde{\bu}_{\FGA,1}\|_{\rE} + \|\tilde{\bu}_{\FGA,1} - \bu_{\FGA,1}\|_{\rE} + \|\bu_{\FGA,1} - \bu_{\FGA,0}\|_{\rE} 
\end{equation}
For the first term define the quantity $e = \bu - \tilde{\bu}_{\FGA,1}$. By prop.~\ref{prop:ew:bound}
\begin{equation}
\|e\|_{\rE} \leq C_{T,\delta}\big(\|e(0,\cdot)\|_{\rE} + \epsilon\int_0^t\|R_0\|_{\rL^2}\ud s\big) + \cO(\epsilon^2)
\end{equation}
Prop~\ref{prop:init:est} shows that $\|e(0,\cdot)\|_{\rE} \leq \epsilon C_{T,\delta}$. Also, prop. \ref{prop:a_perp} and~\ref{prop:L_bd} show $\|R_0\|_{\rL^2} \leq C_{T,\delta}$.  Thus the first is estimated at the correct order. For the second term, is $\cO(\epsilon^\infty)$ by~\ref{prop:e_inf}, and the last term is estimated to the desired order by prop.~\ref{eq:a_bd}.
\end{proof}
As a consequence of the energy norm, we have the following corollary;
\begin{corollary}\label{cor:1} Under the same assumptions as Theorem~\ref{thm:1}, define $U_\FGA$ as
\begin{equation}
U_\FGA = (\partial_t\bu_\FGA,\Div\bu_\FGA, \Curl\bu_\FGA)
\end{equation}
Define $X_{\FGA}$ as in eq.~\ref{eq:xfga} and let $X$ be the solution to the hyperbolic system~\ref{hyper_decomp}, then we have the estimate
\begin{equation}
\sup_{t\in[0,T]}\|X_{\FGA,0}(t,\cdot) - U_{\FGA,0}(t,\cdot)\|_{\rL^2} \leq \epsilon C_{T,\delta}
\end{equation}
\end{corollary}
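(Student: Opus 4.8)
The plan is to compare the two FGA formulations \emph{directly at the level of their symbols}, rather than through the triangle inequality with the exact solution $X$ (which, because the $\rE$-norm already carries a factor $\epsilon$, would only yield an $\cO(1)$ bound). The point is that $X=(\partial_t\bu,\Div\bu,\Curl\bu)$ is exactly the state vector of the hyperbolic system~\eqref{hyper_decomp}, and $U_{\FGA,0}=(\partial_t,\Div,\Curl)\,\bu_{\FGA,0}$ is obtained by applying the same operators to the scalar FGA. Moreover $X_{\FGA,0}$ from~\eqref{eq:xfga} and $U_{\FGA,0}$ are Fourier integral operators built on the \emph{same} phase functions $\phi_n$: the Hamiltonians $\pm c_{\tp,\ts}|\bP|$ and hence the flows $\bQ_n,\bP_n,Z_n$ coincide in the two derivations. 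It therefore suffices to show that the two matrix symbols agree modulo $\cO(\epsilon)$ in $\rL^\infty(K_\delta)$; Proposition~\ref{eq:fio_bd} then transfers this to the $\rL^2$ estimate against the uniformly bounded data $X_0^\epsilon$, and a Gronwall argument on $[0,T]$ for the underlying transport ODEs makes the discrepancy uniform in $t$.

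\textbf{Step 1: the symbol of $U_{\FGA,0}$.} First I would differentiate $(2\pi\epsilon)^{-3d/2}\int\sum_n\ba_{n,0}G_n^\epsilon$, using $\nabla_{\bx}\phi_n=\bP_n+\I(\bx-\bQ_n)$ and $\partial_t\phi_n=-\bP_n\cdot\partial_t\bQ_n+(\partial_t\bP_n-\I\partial_t\bQ_n)\cdot(\bx-\bQ_n)$, together with $\bP_n\cdot\partial_t\bQ_n=H_n$ and the polarizations $\ba_{\tp,0}\parallel\bP$, $\ba_{\ts,0}\perp\bP$. The $\epsilon^{-1}$ contribution is the FIO whose $7$-component symbol is proportional to $\big(-H_n\ba_{n,0},\ \bP_n\cdot\ba_{n,0},\ \bP_n\times\ba_{n,0}\big)$, which is a scalar multiple of $R_n(t,\bQ_n,\bP_n)$ as listed in~\eqref{eq:eig0}--\eqref{eq:eig1}. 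Every remaining term — the $(\bx-\bQ_n)$-linear pieces and those coming from $\partial_t\ba_{n,0}$, $\Div\ba_{n,0}$, $\Curl\ba_{n,0}$ — is reduced by one power of $\epsilon$ via the integration-by-parts formula~\eqref{lemma:2}, producing a symbol of size $\cO(1)$ on $K_\delta$ by the bounds underlying Propositions~\ref{prop:L_bd} and~\ref{prop:a_perp}; combined with Proposition~\ref{eq:fio_bd} this is an $\cO(\epsilon)$ contribution to $\|X_{\FGA,0}-U_{\FGA,0}\|_{\rL^2}$.

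\textbf{Step 2: identification of the leading symbols.} Next I would rewrite the initial-data factor: microlocally $\Div\bu_0^\epsilon\sim\tfrac{\I}{\epsilon}\bp\cdot\bu_0^\epsilon$, $\Curl\bu_0^\epsilon\sim\tfrac{\I}{\epsilon}\bp\times\bu_0^\epsilon$ and $\partial_t\bu(0)=\bu_1^\epsilon$, so that the combination $\tfrac1\epsilon\alpha_n^\epsilon$ occurring in $U_{\FGA,0}$ equals $L_n(t,\bq,\bp)^{\TT}X_0^\epsilon(\by)$ modulo $\cO(\epsilon)$ — this is precisely the projection onto $\hat{\bd{n}}_n$ that defines $\alpha_n^\epsilon$ in~\eqref{eq:alpha}, with $L_n$ the dual eigenvector normalized by $L_m\cdot R_n=\delta_{mn}$. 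The remaining scalar factor $a_{n,0}$, once it absorbs the $H_n$, $|\bP_n|$ shed by $\partial_t,\Div,\Curl$ and the derivatives $\tfrac{\ud}{\ud t}\hat{\bd{N}}_n$ of the polarization vectors, satisfies — by the amplitude equations~\eqref{eq:amp_P}--\eqref{eq:amp_SH}, the identities~\eqref{eq:DzDt}, and $\tr(Z^{-1}\partial_tZ)=\partial_t\log\det Z$ — exactly the transport ODE $\tfrac{\ud}{\ud t}\sigma_n+\sigma_n\lambda_n=0$ of~\eqref{eq:xfga}, with initial value $2^{d/2}$ (Remark~\ref{rmk:1}, Proposition~\ref{prop:init:est}). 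Hence the leading symbols of $X_{\FGA,0}$ and $U_{\FGA,0}$ are identical, and the corollary follows from Step~1 together with Proposition~\ref{eq:fio_bd} and a Gronwall estimate keeping the symbol discrepancy $\cO(\epsilon)$ uniformly on $[0,T]$.

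\textbf{Main obstacle.} The hard part is the bookkeeping in Step~2: reconciling the ``projection'' normalization ($\hat{\bd{N}}_n$, $\alpha_n^\epsilon$, $a_n$, and the scalar factors $H_n,|\bP_n|$ produced by the differential operators) with the ``eigenvector'' normalization ($R_n$, $L_n$, $\sigma_n$, $\lambda_n$), and in particular verifying that the SH/SV pair — coupled in~\eqref{eq:amp_SV}--\eqref{eq:amp_SH} through $\tfrac{\ud}{\ud t}\hat{\bd{N}}_{\tsh}\cdot\hat{\bd{N}}_{\tsv}$ — assembles into the single rank-two block $\sum_{j=1,2}\sigma_{\ts\pm}R_{\ts\pm,j}L_{\ts\pm,j}^{\TT}$, using the symmetry of $Z_\ts^{-1}\partial_{\bz}\bQ_\ts$ established in Section~\ref{sec:4}. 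This is a finite but lengthy computation with the explicit eigenvectors~\eqref{eq:eig0}--\eqref{eq:eig1}; by contrast the integration-by-parts reduction of the lower-order terms, the operator bound of Proposition~\ref{eq:fio_bd}, and the Gronwall step are routine given the estimates already established.
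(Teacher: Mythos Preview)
Your proposal takes a genuinely different route from the paper. The paper's proof is a two-line triangle inequality through the exact solution $X$:
\[
\|X_{\FGA,0}-U_{\FGA,0}\|_{\rL^2}\le\|X-X_{\FGA,0}\|_{\rL^2}+\|X-U_{\FGA,0}\|_{\rL^2},
\]
bounding the first term by~\eqref{eq:fga:efd} and identifying the second with $\|\bu-\bu_{\FGA,0}\|_{\rE}$ via Theorem~\ref{thm:1}.

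You are right to be suspicious of that identification. Since the $\rE$-norm in~\eqref{eq:enorm} carries an explicit factor of $\epsilon$, one has $\|X-U_{\FGA,0}\|_{\rL^2}\simeq\epsilon^{-1}\|\bu-\bu_{\FGA,0}\|_{\rE}$, so Theorem~\ref{thm:1} feeds only an $\cO(1)$ bound into the second term, and the triangle-inequality route as written delivers $\cO(1)$ rather than $\cO(\epsilon)$. Your direct symbol-level comparison sidesteps this loss entirely: because $X_{\FGA,0}$ and $U_{\FGA,0}$ are FIOs on the \emph{same} phases $\phi_n$, matching the leading amplitudes and bounding the remainder with Proposition~\ref{eq:fio_bd} gives the honest $\cO(\epsilon)$ difference without ever passing through the exact solution. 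The price is the Step~2 bookkeeping you flag --- reconciling $(a_n,\alpha_n^\epsilon,\hat{\bd N}_n)$ with $(\sigma_n,R_n,L_n)$ and assembling the coupled SH/SV pair into the rank-two block --- which is tedious but finite and uses only the identities already established in Section~\ref{sec:4}. In short: the paper's argument is shorter but, as written, appears to lose a factor of $\epsilon$; your approach is longer but closes that gap.
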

\begin{proof}
Applying the triangle inequality gives the immediate result.
\begin{equation}
\|X_{\FGA,0} - U_{\FGA,0}\|_{\rL^2} \leq \|X - X_{\FGA,0}\|_{\rL^2} +  \|X - U_{\FGA,0}\|_{\rL^2} \leq \epsilon C_{T,\delta} + \|\bu - \bu_{\FGA,0}\|_{\rE}
\end{equation}
Where $\|X - X_{\FGA,0}\|_{\rL^2}$ is bounded from~\eqref{eq:fga:efd}.
\end{proof}

\section*{Acknowledgments} 
Both authors were partially supported by the NSF grant DMS-1818592.

\bibliographystyle{abbrv}
\bibliography{paper,SEG,fga_ref}
\medskip
\medskip

\end{document}